\newtheorem{proposition}[equation]{Proposition}
\newtheorem{examp}[equation]{Example}
\theoremstyle{definition}
\newtheorem{definition}[equation]{Definition}
\newtheorem{remark}[equation]{Remark}
\numberwithin{equation}{section}
\def\nref#1{{\rm (\ref{#1})}}
\def\nref#1{{\rm (\ref{#1})}}
\begin{document}

\title{Comparing series of rankings with ties by using complex networks: An analysis of the spanish stock market (IBEX-35 index)}

\author{
F.\,Pedroche$^1$, R.\,Criado$^{2,3}$, E.\,Garc\'{\i}a$^{2,3}$, M.\,Romance$^{2,3}$ and V.E.\,S\'anchez$^4$\\
\\
{\small $^1$\,Institut de Matem\`{a}tica Multidisciplin\`{a}ria} \\
{\small Universitat Polit\`{e}cnica de Val\`{e}ncia} \\
{\small Cam\'{\i} de Vera s/n. 46022 Val\`{e}ncia, Spain.}\\
{\small pedroche@mat.upv.es}\\
\\
{\small $^2$\,Departamento de Matem\'{a}tica Aplicada} \\
{\small ESCET - Universidad Rey Juan Carlos} \\
{\small C/ Tulip\'an s/n 28933 M\'ostoles (Madrid), Spain.}\\
{\small regino.criado@urjc.es, esther.garcia@urjc.es, miguel.romance@urjc.es}\\
\\
{\small $^3$\,Centro de Tecnolog\'{\i}a Biom\'edica} \\
{\small Universidad Polit\'{e}cnica de Madrid} \\
{\small 28223 Pozuelo de Alarc\'on (Madrid), Spain.}\\
\\
{\small $^4$\,Departamento de Econom\'{\i}a de la Empresa} \\
{\small Universidad Carlos III} \\
{\small 28903 Getafe (Madrid) , Spain.}\\
}

\date{}

\maketitle

\begin{abstract}
In this paper we extend the concept of Competitivity Graph to compare series of rankings with ties ({\em partial rankings}).  We extend the usual method used to compute Kendall's coefficient for two partial rankings to the concept of evolutive Kendall's coefficient for a series of partial rankings. The theoretical framework consists of a four-layer multiplex network. Regarding the treatment of ties, our approach allows to define a tie between two values when they are close {\em enough}, depending on a threshold. We show an application using data from the Spanish Stock Market; we analyse the series of rankings defined by $25$ companies that have contributed to the IBEX-35 return and volatility values over the period 2003 to 2013.
\end{abstract}


\section{Introduction}\label{sec:intro}

The analysis of series of rankings has a vast tradition in the literature (see, e.g.~\cite{CoScSi,StKoPi,LaMe2012} for a review on applications) and in recent years has attracted the interest of big companies  that handle massive amount of data on the Internet. One of the most studied issues is the computation of a consensus (or aggregated) ranking that best summarizes a series of rankings. A typical application of that approach is the set of web pages shown following a user query on a web searcher \cite{Dw}. The studies interested in obtaining a consensus ranking  often define a distance (see~\cite{CoKrSe,Co06}) to properly define their goal: that is, a ranking that minimizes such a distance. One of the seminal works dealing with distances in rankings can be found in chapter II of~\cite{KeSn}. The problem of finding a ranking that minimizes the distance to a series of rankings, sometimes called Kemeny optimal aggregation problem~\cite{Bie}, is NP-hard (for $4$ rankings or more) but some approximation techniques exist, see e.g.~\cite{Ai,CoDaKa,Kar,BeBrNi}.

One of the most successful approaches to measure the distance between two rankings is based upon the number of permutations, or crossings, that occur when passing from one ranking to the other. The study of permutations in series of rankings has  a long history and its origin can be situated around the 40's in the works of Kendall~\cite{Kendall38,KeBa39}. The Kendall's $\tau$ correlation coefficient, a measure of the number of pairs of elements that flip their positions, has been extensively used. For example, in the 70's,  Kendall's coefficient (including the treatment of ranking with ties) was used to compare political alliances among states; see references in~\cite{Sig}.  Kendall's $\tau$ can be computed in $O(\, n \, \log \, n)$ time, being $n$ the number of elements to rank~\cite{Dw}.

When applied to web search, it is usually sufficient to worry about the top-k elements~\cite{IlBe}. As a consequence, the rest of the $n-k$ elements can be supposed to be tied. The treatment of ties when comparing rankings has an intrinsic importance and it is one of our main focus in this paper. Sometimes in the literature, for example in~\cite{Fa06}, rankings with ties are called {\sl partial rankings}. This kind of rankings (rankings with ties) arises naturally in many applications: for example, when users $(r_1,r_2,\ldots,r_m)$ score the service offered by some hotels ($1,2, \cdots, n$) on a scale of 1 to 5 there can be a user that assign the same score to different hotels, producing a tie in that ranking.

Some variations of Kendall's $\tau$, including treatment of ties, have been proposed. For example,  in~\cite{Cart} the authors focus on some problems arisen on Information Retrieval applications; in~\cite{EmMa} the authors analyse the treatment of ties and show some problems in the original treatment of Kendall; in~\cite{KuVa} the authors show different alternatives to weight a crossing depending on the initial and/or final position of the swapped elements; in~\cite{Fa06} the authors show the equivalence of some metrics, and set up  an appropriate framework to extend Kendall's $\tau$ to the case of partial rankings, by defining a  {\em Kendall distance with penalty parameter $p$}.

In the field of graphs the topic of counting crossings is known as permutation graphs (see, e.g.,~\cite{PnLeEv,Kr}).  The theoretical definition of a graph that shows successive permutations can be situated, as far as we know, in the work of~\cite{GoRoUr}.

In this paper we are interested in the crossings that occur between successive rankings when these rankings show ties. As a starting point we use the method presented in \cite{Cri13} based on  measures of complex networks, such as mean strength and mean degree.  In this paper we extend this technique to partial rankings and we make use of multiplex networks \cite{SzLaTh}, \cite{GoGa12}, \cite{Go13}, to achieve that goal.

The structure of the paper is the following. In Section~\ref{sec:def} we give the basic definitions about scores and rankings with ties. In Section~\ref{sec:network} we motivate our approach based on multiplex networks and we extend the concept of {\em Kendall distance with penalty parameter $p$} of~\cite{Fa06} to the case of a series of partial rankings.  Finally, in Section~\ref{sec:economy} we show an application to analyse the competitiveness of IBEX index from 2003 to 2013.

\section{Scores, rankings with ties and basic definitions}\label{sec:def}

The main goal of this paper is analysing series of rankings with ties by using complex networks, and therefore we start this section giving the basic definitions about rankings (with and without ties), scores and competitivity graphs. Roughly speaking a {\sl ranking} (with or without ties) of a finite set of elements $\mathcal{N}=\{1,\cdots,n\}$ is an {\sl ordering} of the elements of $\mathcal{N}$ with or without ties between elements. From a mathematical point of view, this idea can be rewritten in the following way:

\begin{definition}
Given a finite set $\mathcal{N}=\{1,\cdots,n\}$ a {\sl ranking $r$ (without ties)} of $\mathcal{N}$ is a total order $\prec_r$ on $\mathcal{N}$, i.e. it is a binary relationship on $\mathcal{N}$ such that
\begin{itemize}
 \item[{\it (i)}]   $\prec_r$ is {\sl reflexive}, i.e., $i\prec_r i$ for every $i\in\mathcal{N}$,
 \item[{\it (ii)}]  $\prec_r$ is {\sl antisymmetric}, i.e. if $i\prec_r j$ and $j\prec_r i$, then $i=j$,
 \item[{\it (iii)}] $\prec_r$ is {\sl transitive}, i.e. if $i\prec_r j$ and $j\prec_r k$, then $i\prec_r k$,
 \item[{\it (iv)}]  $\prec_r$ is {\sl total}, i.e. if $i\ne j\in\mathcal{N}$, then either $i\prec_r j$ or $j\prec_r i$.
\end{itemize}
\end{definition}

\begin{remark}\label{rem:biyeccion}
In other references of the literature, a {\sl ranking $r$ (without ties)} of $\mathcal{N}$ is defined as a bijection $r:\mathcal{N}\longrightarrow \mathcal{N}$, since any bijection $r$ defines a  binary relationship $\prec_r$ on $\mathcal{N}$ given by $i\prec_r j$ if and only if $r(i)>r(j)$ that it is straightforward to check that it is a total order on $\mathcal{N}$. Similarly, it can be proved that if $\prec_r$ is a total order on $\mathcal{N}$, a bijection $\sigma:\mathcal{N}\longrightarrow \mathcal{N}$ can be defined such that $\sigma(i)\le \sigma(j)$ if and only if $j\prec_r i$. Furthermore, a {\sl ranking $r$ (without ties)} of $\mathcal{N}$ can be defined as any injective function $r:\mathcal{N}\longrightarrow \mathbb{R}$, as we will point out later when we remind the concept of score.
\end{remark}

Note that if we take a ranking without ties, if we consider $i\ne j\in\mathcal{N}$, then either $i\prec_r j$ or $j\prec_r i$, and therefore , if $i\ne j$, they are always ranked differently. This fact is not always possible in real cases, since, for example, two teams can be tied in a competition or two web pages could have the same PageRank and therefore we should also consider rankings with ties, as the following definition shows.

\begin{definition}
Given a finite set $\mathcal{N}=\{1,\cdots,n\}$ a {\sl ranking $r$ with ties} of $\mathcal{N}$ is a weak order $\prec_r$ on $\mathcal{N}$, i.e. it is a binary relationship on $\mathcal{N}$ such that
\begin{itemize}
 \item[{\it (i)}]   $\prec_r$ is {\sl reflexive}, i.e., $i\prec_r i$ for every $i\in\mathcal{N}$,
 \item[{\it (ii)}] $\prec_r$ is {\sl transitive}, i.e. if $i\prec_r j$ and $j\prec_r k$, then $i\prec_r k$,
 \item[{\it (iii)}]  $\prec_r$ is {\sl total}, i.e. if $i\ne j\in\mathcal{N}$, then either $i\prec_r j$ or $j\prec_r i$.
\end{itemize}
If $r$ is a ranking with ties, and we have two different elements $i\ne j\in\mathcal{N}$, we say that $i$ and $j$ {\sl are tied} if $i\prec_r j$ and $j\prec_r i$.
\end{definition}

Many times in real problems, the rankings (with or without ties) of a finite set $\mathcal{N}=\{1,\cdots,n\}$ are defined from a numerical function that {\sl weights} the relevance of each element of $\mathcal{N}$: a {\sl score}.

\begin{definition}
Given a finite set $\mathcal{N}=\{1,\cdots,n\}$ a {\sl score} (also called a {\sl rating}) of $\mathcal{N}$ is a function $s:\mathcal{N}\longrightarrow\mathbb{R}$.
\end{definition}

If we take a score $s:\mathcal{N}\longrightarrow\mathbb{R}$, it is straightforward to check that $s$ induces a ranking (with or without ties) $r=r_s$ given for every pair of elements $i,j\in \mathcal{N}$ by
\[
i\prec_r j\Longleftrightarrow s(i)\ge s(j).
\]
It is easy to check that several scores $s$ can produce the same ranking $r_s$. Roughly speaking, the ranking is the {\sl qualitative} version of an ordering while the corresponding score is the {\sl quantitative} version of it, but in many real-life applications it is enough considering the qualitative version, i.e. the ranking itself.

Note that $r$ is a ranking without ties if and only if the score $s$ is injective, as we pointed out in Remark~\ref{rem:biyeccion} and since the fact that two different elements $i\ne j\in\mathcal{N}$ are tied (with respect to $r_s$) if and only if $s(i)=s(j)$. This fact makes that the existence of ties in many real-life rankings can be very unstable: if the raking is given by an empirical score with non-integer values, the ties could disappear by round-off errors. In order to avoid this problem, we suggest considering {\sl approximated ties}, when the scores of two elements are below a (small) precision threshold, as it is introduced in the following definition.

\begin{definition}\label{def:aprox}
Given a finite set $\mathcal{N}=\{1,\cdots,n\}$, a score $s:\mathcal{N}\longrightarrow\mathbb{R}$ and $\Delta x\in(0,+\infty)$, we define the {\sl precision intervals} $[L_1,L_2)$, $[L_2,L_3)$,$\cdots$, $[L_{\hat{n}},L_{\hat{n}+1})$, where:
\begin{itemize}
 \item[{\it (i)}] $\displaystyle \hat{n}=\left\lfloor \frac {M-m}{\Delta x}+1\right\rfloor$, with $M=\max\{s(i);\enspace i\in\mathcal{N}\}$, $m=\min\{s(i);\enspace i\in\mathcal{N}\}$ and $\lfloor\cdot\rfloor$ is  the floor function,
 \item[{\it (ii)}] $\displaystyle \delta x=\frac {M-m+\Delta x}{\hat{n}}$,
 \item[{\it (iii)}] $\displaystyle L_1=m-\frac{\Delta x}2$,
 \item[{\it (iv)}] $L_i=L_1+(i-1)\delta x$, for every $2\le i\le \hat{n}+1$.
\end{itemize}
The {\sl ranking with (approximated) ties} associated to the score $s$ and precision threshold  $\Delta x$ is the ranking $\prec$ (also denoted by $\prec_{s,\Delta x})$ such that if $i\ne j\in\mathcal{N}$
\begin{itemize}
 \item[{\it (i)}] $i\prec j$ but $j\nprec i$ if and only if  there is $1\le k<\ell\le \hat{n}$ such that $s(j)\in [L_k,L_{k+1})$ and $s(i)\in [L_\ell,L_{\ell+1})$,
 \item[{\it (ii)}] $i$ and $j$ are tied if there is $1\le k\le \hat{n}$ such that $s(i),s(j)\in [L_k,L_{k+1})$.
\end{itemize}
\end{definition}

The intuitive idea behind the formalism of the previous definition is considering ties between elements whose scores are below a fixed precision threshold, as it is presented in the following example. In the sequel, we always construct ranking with (approximated) ties associated to a score $s$ and a precision threshold  $\Delta x$ and we simply call it simply {\sl ranking with ties associated to the score $s$ and with precision threshold  $\Delta x$}.

\begin{examp}\label{ex:5scores}
Let us consider the following $m=5$ scores of elements $\mathcal{N}=\{1,2,3,4,5,6\}$ given by the columns of Table~\ref{ta:scores}.
\begin{table}[h]
\begin{center}
\[
\begin{array}{|c|c|c|c|c|c|}
\hline
& s_1  & s_2 & s_3 & s_4 & s_5 \\
\hline
1& 3  &  5 & -7 & -7 & 20 \\
2& 23 &  7 & 24 & 20 &  8 \\
3& 22 &  7 & 20 & 10 &  8 \\
4& -5 &  3 & -7 & -5 &  9 \\
5& 22 & 15 & 30 & 20 &  8 \\
6&  3 & 12 & 30 & 10 & 15 \\
\hline
\end{array}
\]
\caption{A family of five scores on the set $\mathcal{N}=\{1,\cdots,6\}$}
\label{ta:scores}
\end{center}
\end{table}

If we consider the precision threshold  $\Delta x=0.5$, then the rankings with (approximated) ties associated to the scores $s_1,\cdots, s_5$ is given in Table~\ref{ta:rankings}.
\begin{table}[h]
\begin{center}
\[
\begin{array}{|c|c|c|c|c|}
\hline
\sigma_1  & \sigma_2 & \sigma_3 & \sigma_4 & \sigma_5 \\
\hline
2   &  5  & 5,6 & 2,5   & 1 \\
3,5 & 6   & 2   & 3,6   & 6\\
1,6 & 2,3 & 3   & 4     & 4\\
4   & 1   & 1,4 & 1     & 2,3,5\\
    & 4   &     &       &  \\
\hline
\end{array}
\]
\caption{Rankings with (approximated) ties associated to the scores $s_1,\cdots, s_5$ and precision threshold  $\Delta x=0.5$}
\label{ta:rankings}
\end{center}
\end{table}

In ranking $\sigma_1$ the first position is for node $2$, while in the second position we have a tie between the nodes $3$ and $5$; in third position we have a tie between nodes $1$ and $6$. The last position is for node $4$.

On the other hand, if we take the precision threshold  $\Delta x=2$, then we obtain the rankings with (approximated) ties associated to the scores $s_1,\cdots, s_5$ presented in Table~\ref{ta:rankings2}.
\begin{table}[h]
\begin{center}
\[
\begin{array}{|c|c|c|c|c|}
\hline
\sigma_1  & \sigma_2 & \sigma_3 & \sigma_4 & \sigma_5 \\
\hline
2,3,5   &  5  & 5,6 & 2,5   & 1 \\
1,6 & 6   & 2   & 3,6   & 6\\
4   & 2,3 & 3   & 4     & 4\\
    & 1   & 1,4 & 1     & 2,3,5\\
    & 4   &     &       &  \\
\hline
\end{array}
\]
\caption{Rankings with (approximated) ties associated to the scores $s_1,\cdots, s_5$ and precision threshold  $\Delta x=2$}
\label{ta:rankings2}
\end{center}
\end{table}

In this case, the ranking $\sigma_1$, in the first position we have a triple tie among nodes $2$, $3$ and $5$ and in second position there is a tie between nodes $1$ and $6$.
\end{examp}

Once we have presented all the notation needed about rankings, ties, scores and precision thresholds, we introduce the second ingredient of this paper: the {\sl competitivity networks} and the {\sl evolutive competitivity networks}, introduced in \cite{Cri13}.

\begin{definition}
If we take a family of rankings (without ties) ${\mathcal R}=\{\sigma_1, \dots, \sigma_m \}$ of elements ${\mathcal N}=\{1,\dots, n\}$,  we say that the pair of elements $(i,j)\in {\mathcal N}$ {\sl compete} if there exist $c_s,c_t\in\{1,2,\ldots, m\}$ such that $i\prec_{\sigma_s} j$ but $j\prec_{\sigma_t} i$, i.e.,   $i$ and $j$ exchange their relative positions between the rankings $\sigma_{s}$ and $\sigma_{t}$.

We define the {\it competitivity network} of the family of rankings ${\mathcal R}$ as the undirected network  $G_c({\mathcal R})=({\mathcal N}, E_{\mathcal R})$, where the set of edges $E_{\mathcal R}$ is given by the following rule: there is a link between nodes $i$ and $j$ if $(i,j)$ compete. This notion had already been introduced in \cite{GoRoUr} as intersection graphs of concatenations of permutation diagrams, and was shown to being equivalent to the notions of co-comparability graphs and to $f$-graphs \cite[Theorem 1]{GoRoUr}.

We say that the pair of elements $(i,j)\in {\mathcal N}$ {\sl compete $k$-times}  if $k$ is the maximal number of rankings where $i$ and $j$ compete. The {\it evolutive competitivity network} of ${\mathcal R}$, denoted by $G_c^e({\mathcal R})=({\mathcal N}, E_{\mathcal R}^e)$, will be the weighted undirected network, where the set of edges $E_{\mathcal R}^e$ is given by the rule: there is a link between nodes $i$ and $j$ labeled with weight $k$ if $(i,j)$ compete $k$ times. Note that the underlying (unweighted) network behind the (weighted) network $G_c^e({\mathcal R})$ is $G_c({\mathcal R})$.
\end{definition}

There are several ways of extending this notion to families of rankings with ties, but in the next section, we will see how to define the {\sl multiplex (evolutive) competitivity network}, which is the natural extension of competitivity graphs that helps to analyse properties of families of rankings with ties.

\section{Multiplex networks associated to a family of rankings with ties}\label{sec:network}

In this section we introduce a multiplex network associated to a  family of rankings, some of them with ties among its nodes.  We extend the notion of Kendall distance with penalty and show that can also be computed by considering the associated projected multiplex network. Moreover, we relate the classical notion of normalized mean strength of such network with the extension of the Kendall distance.

\subsection{The multiplex evolutive competitivity network associated to a family of rankings with ties}

In this subsection we introduce the notion of  multiplex evolutive competitivity network associated to a family of rankings with ties, extending the previous notion of \cite{Cri13} defined for rankings with no ties.

\begin{definition}
Let $\alpha, p, q, \gamma\ge 0$ be four given parameters. Given a set of $n$ nodes ${\mathcal N}=\{1,\dots, n\}$ and a  finite family of rankings with ties ${\mathcal R}=\{\sigma_1,\dots, \sigma_m \}$ of ${\mathcal N}$, we  define  {\it the multiplex evolutive competitivity network of ${\mathcal R}$} in the following way: this multiplex network, denoted by $\mathcal{M}G(\sigma_1,\cdots,\sigma_m)$, contains four layers, called {\it the crossing layer}, {\it the semi-crossing layer}, {\it the long-term-crossing layer} and the {\it tie layer}. All of them have $n$ nodes  $\{1,\dots, n\}$ and the edges in each layer are given as follows:
\begin{itemize}
\item[{-}] The edges in the crossing layer are defined in the following way: there is an edge between nodes $i$ and $j$ labelled with weight $k\alpha$ if  $i$ and $j$ exchange their relative positions  between two consecutive rankings $\sigma_{t_\ell}$ and $\sigma_{t_{\ell}+1}$, $\ell=1,\dots,k$, in ${\mathcal R}$.
\item[{-}]  The edges in the semi-crossing layer   are defined in the following way: there is an edge between nodes $i$ and $j$ labelled with weight $kp$  if  there exist $k$ consecutive rankings $\sigma_{t_\ell}$ and $\sigma_{t_{\ell}+1}$, $\ell=1,\dots,k$, in ${\mathcal R}$ such that $i$ and $j$ are tied in $\sigma_{t_\ell}$ and not tied in $\sigma_{t_{\ell}+1}$ or $i$ and $j$ are not tied in $\sigma_{t_\ell}$ but tied in $\sigma_{t_{\ell}+1}$, i.e., nodes $i$ and $j$ go from tied to not tied or not tied to tied in $k$ consecutive rankings of $\mathcal{R}$.
\item[{-}] The edges in the long-term-crossing layer  are defined in the following way: there is an edge between nodes $i$ and $j$ labelled with weight $kq$ if there exists a maximal set of rankings $\sigma_{t_1},\dots, \sigma_{t_k}\in {\mathcal R}$ such that for each $\ell=1,\dots, k$ the pair $i$ and $j$ are not tied in $\sigma_{t_\ell}$, are tied in $\sigma_{t_{\ell}+1},\sigma_{t_\ell+2}, \ldots, \sigma_{t_\ell+s}$, with $s \geq 1$, are not tied in $\sigma_{t_\ell+s+1}$ and $i$ and $j$ exchange their relative positions between $\sigma_{t_\ell}$ and  $\sigma_{t_\ell+s+1}$.
\item[{-}] The edges in the tie layer are defined in the following way: there is an edge between nodes $i$ and $j$ labelled with weight $k\gamma $ if  there exist $k$ consecutive rankings $\sigma_{t_\ell}$ and $\sigma_{t_{\ell}+1}$, $\ell=1,\dots,k$, in ${\mathcal R}$ such that $i$ and $j$ are both tied in $\sigma_{t_\ell}$ and  $\sigma_{t_{\ell}+1}$, i.e., nodes $i$ and $j$ are tied in $k$ pairs of consecutive rankings of $\mathcal{R}$.
\end{itemize}
\end{definition}

The use of multiplex networks in the context of competitivity graphs allows to analyse the interplay between ties, crossings and long term crossings since, multiplex networks are a sharp tool for studying complex systems with heterogeneous interactions \cite{NT}.

\begin{definition}\label{projev}
The {\it projected evolutive competitivity network} or simply the {\it evolutive competitivity network } associated to  a  finite family of rankings with ties ${\mathcal R}=\{\sigma_1,\dots, \sigma_m \}$ of $n$ nodes ${\mathcal N}=\{1,\dots, n\}$ is just the projection of the four layers of the multiplex evolutive competitivity network onto a single label, i.e., consists on a network of $n$ nodes, and edges between pairs of nodes $i$ and $j$ with weight $k_1\alpha+k_2p+k_3 q+k_4\gamma$ if there is a link between $i$ and $j$ of weight $k_1\alpha$ in the crossing layer, a link  between $i$ and $j$ of weight $k_2p$ in the semi-crossing layer,  a link between $i$ and $j$ of weight $k_3 q$ in the long-term-crossing layer, and a link of weight $k_4 \gamma$ in the tie layer. We denote it by $\mathcal{P}G(\sigma_1,\cdots,\sigma_m)$
\end{definition}

If none of the rankings of the family  ${\mathcal R}=\{\sigma_1,\dots, \sigma_m \}$ contains any tie, then there are no weighted edges in the semi-crossing label, no weighted edges in the long-term-crossing layer  and  no weighted edges in the tie label of the multiplex evolutive competitivity network of ${\mathcal R}$, so the evolutive competitivity graph defined in \cite{Cri13} corresponds to the projected evolutive competitivity network associated to  ${\mathcal R}$ with fixed $\alpha=1$.

\begin{examp}\label{exa:multiplex}
Let us consider the following family of rankings with ties ${\mathcal R}=\{\sigma_1, \sigma_2, \sigma_3, \sigma_4, \sigma_5\}$ of 6 nodes ${\mathcal N}=\{1, 2, 3, 4, 5, 6\}$, obtained in Example~\ref{ex:5scores}.
\[
\begin{array}{|c|c|c|c|c|}
\hline
\sigma_1  & \sigma_2 & \sigma_3 & \sigma_4 & \sigma_5 \\
\hline
2   &  5  & 5,6 & 2,5   & 1 \\
3,5 & 6   & 2   & 3,6   & 6\\
1,6 & 2,3 & 3   & 4     & 4\\
4   & 1   & 1,4 & 1     & 2,3,5\\
    & 4   &     &       &  \\
\hline
\end{array}
\]

These rankings were obtained with a precision threshold $\Delta=0.5$ from the family of scores $\{s_1,\cdots,s_5\}$ (see Table~\ref{ta:rankings}). The multiplex evolutive competitivity network $\mathcal{M}G(\sigma_1,\cdots,\sigma_5)$ associated to ${\mathcal R}$ and the projected evolutive competitivity network $\mathcal{P}G(\sigma_1,\cdots,\sigma_5)$ associated to ${\mathcal R}$ are presented in Figure~\ref{fig:multiplex}. Note that in this case the number of crossings (appearing in layer $\ell_1$) and the number of semi-crossings  (appearing in layer $\ell_2$) is much bigger than the number of long-term-crossings (layer $\ell_3$) and ties (layer $\ell_4$).


\begin{figure}[h!]
\begin{center}
 \includegraphics[width=\textwidth]{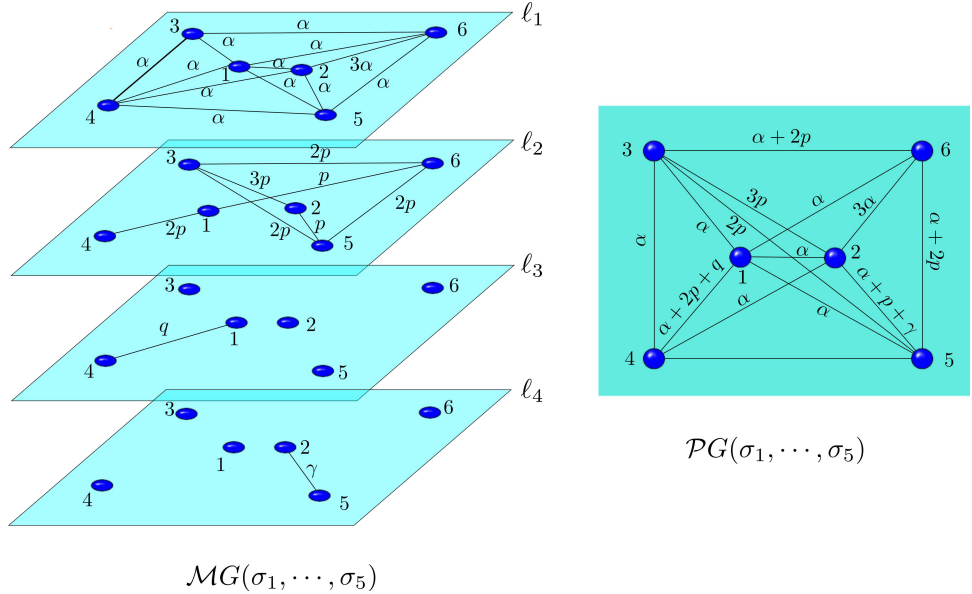}
\end{center}
\caption{Multiplex evolutive competitivity network $\mathcal{M}G(\sigma_1,\cdots,\sigma_5)$ associated to the family of rankings ${\mathcal R}$ and the projected evolutive competitivity network $\mathcal{P}G(\sigma_1,\cdots,\sigma_5)$ associated to ${\mathcal R}$. Layer $\ell_1$ is the crossing layer, $\ell_2$ is the semi-crossing layer, $\ell_3$ is the long-term-crossing layer and $\ell_4$ is the tie layer.}
\label{fig:multiplex}
\end{figure}

\end{examp}

\subsection{The evolutive Kendall distance}

Now we extend the notion of Kendall distance given in \cite{Fa06} for two rankings with ties. Let us first recall the original definition for two rankings with ties:

\begin{definition}
Let $\sigma_1$ and $\sigma_2$ be two rankings with ties of the nodes ${\mathcal N}=\{1,\dots, n\}$, $\alpha\in[0,+\infty)$ and  a penalty $p \in [0,\frac 12]$. The  {\it Kendall distance with penalty parameter $p$} is defined as
\begin{equation}\label{Kdp}
K^{(p)}(\sigma_1,\sigma_2) = \sum_{ \{i,j\} \in {\mathcal{N}} } \bar{K}_{i,j}^{(p)}(\sigma_1,\sigma_2)
\end{equation}
where  $\bar{K}_{i,j}^{(p)}(\sigma_1,\sigma_2)$ corresponds to one of the three following cases:
\begin{enumerate}
\item[{Case 1.}] If $i$ and $j$ are not tied in $\sigma_1$ nor in $\sigma_2$ and they exchange their relative positions between $\sigma_1$ and $\sigma_2$ then $\bar{K}_{i,j}^{(p)}=\alpha$. Otherwise $\bar{K}_{i,j}^{(p)}=0$.
\item[{Case 2.}] If  $i$ and $j$ are tied in both $\sigma_1$ and $\sigma_2$ then
 $\bar{K}_{i,j}^{(p)}=0$.
\item[{Case 3.}] If $i$ and $j$ are tied in one of $\sigma_1$ or $\sigma_2$
but not tied in the other  ranking then $\bar{K}_{i,j}^{(p)}=p$.
\end{enumerate}

Notice that Kendall distance with penalty parameter $p$ of \cite{Fa06} is exactly the addition of the weights of all the edges of the evolutive competitivity network given in Definition \ref{projev} for the family of rankings ${\mathcal R}=\{\sigma_1, \sigma_2\}$ with fixed parameters $\alpha=1$,  $p \in [0,\frac 12]$,  and $\gamma=0$.
\end{definition}

The extension of the previous concept to a Kendall distance for a family of $m$ ordered  rankings with ties
${\mathcal R}=\{ \sigma_1, \sigma_2, \ldots, \sigma_m \}$ is straightforward:

\begin{definition}\label{Kevd}
 Given a set of $n$ nodes ${\mathcal N}=\{1,\dots, n\}$,  $\alpha\in[0,+\infty)$,  $p \in [0,\frac 12]$ and a  finite family of rankings with ties ${\mathcal R}=\{\sigma_1,\dots, \sigma_m \}$ of ${\mathcal N}$, we can define {\it the evolutive Kendall distance with penalty parameter $p$} as
\begin{equation}\label{Kevd2}
K^{(p)}_{ev}(\sigma_1,\sigma_2, \ldots, \sigma_m) = \sum_{i=1}^{m-1} K^{(p)}(\sigma_i,\sigma_{i+1}).
\end{equation}
\end{definition}

\begin{examp}\label{exa:1}
Let us consider again the family of rankings with ties given in Example~\ref{exa:multiplex} and fix $\alpha=2$. In order to obtain the evolutive Kendall's distance given by \nref{Kevd2} we need to compute the $4$ Kendall's distances $K^{(p)}(\sigma_i,\sigma_{i+1})$. The first one is:
\[
K^{(p)}(\sigma_1,\sigma_2) = \sum_{ \{i,j\} \in {\mathcal{N}} } \bar{K}_{i,j}^{(p)}(\sigma_1,\sigma_2)
\]
where we have ${\mathcal{N}}=\{1,2,\ldots, 6\}$ and  there are $15$ pairs of indices $\{i,j\}$. The pairs that contribute with non-zero penalties are shown in Table~\ref{ta:ex1}.
\begin{table}[h]
\begin{center}
\begin{tabular}{|c|c|l|}
\hline
$\{i,j\}$ &  $\bar{K}_{i,j}^{(p)}(\sigma_1,\sigma_2)$ &  \\
\hline
$1,6$  & $p$ & since they go from tied to untied from $\sigma_1$ to $\sigma_2$ \\
\hline
$2,3$  & $p$ & since they go from tied to untied from $\sigma_1$ to $\sigma_2$ \\
\hline
$2,5$  &  $2$ & since they cross from $\sigma_1$ to $\sigma_2$ \\
\hline
$2,6$  &  $2$ & since they cross from $\sigma_1$ to $\sigma_2$ \\
\hline
$3,5$  & $p$ & since they go from tied to untied from $\sigma_1$ to $\sigma_2$ \\
\hline
$3,6$  & $2$ & since they cross from $\sigma_1$ to $\sigma_2$ \\
\hline
\end{tabular}
\caption{Details of the computation of $K^{(p)}(\sigma_1,\sigma_2)$}
\label{ta:ex1}
\end{center}
\end{table}
Therefore,
\[
K^{(p)}(\sigma_1,\sigma_2) = 6 + 3p.
\]
In an analogous way it is easy to compute the rest of the Kendall distances $ K^{(p)}(\sigma_i,\sigma_{i+1})$
\[
K^{(p)}(\sigma_2,\sigma_3)=3p, \quad K^{(p)}(\sigma_3,\sigma_4)=2+4p, \quad K^{(p)}(\sigma_4,\sigma_5)=20+3p.
\]
Therefore, by Definition~\ref{Kevd} we obtain
\[
K^{(p)}_{ev} = 28 + 13p.
\]

Notice that this value of $K^{(p)}_{ev}$ could also be obtained from the projected evolutive competitivity network of $\{\sigma_1, \sigma_2, \sigma_3, \sigma_4, \sigma_5\}$ for  $\alpha=2$, $\gamma=0$  (see Figure~\ref{fig:multiplex}) by computing the sum of  the weights of all the edges of this network.
\end{examp}

In  the previous example, we would like to point out the following fact: The evolutions of the positions of the pair $\{ 1,4\}$ from non adjacent rankings $\sigma_2$ to $\sigma_4$ and the pair $\{3,6\}$ from non adjacent rankings $\sigma_3$ to $\sigma_5$ are different. There has been a crossing between $\{ 1,4\}$ that has not been counted and there has been no crossings between $3$ and $6$. Nevertheless, these two cases  contribute with a penalty of $2p$ to the calculation of the  evolutive Kendall distance.

In order to take into account this long term crossings we introduce a new case in  the list of penalties for the case of evolutive Kendall's distance for $m$ rankings with ties.

\begin{definition} \label{def:Kevdc}
Given a family ${\mathcal R}=\{ \sigma_1, \sigma_2, \ldots, \sigma_m\}$ of $m$ rankings with ties of nodes ${\mathcal N}=\{1,\dots, n\}$,  $\alpha\in[0,+\infty)$ and  $p \in [0,\frac 12]$, we define  the {\em corrected evolutive Kendall distance with penalty parameter $p$} as follows:
\begin{equation}\label{Kevdc}
K^{(p)}_{cev}(\sigma_1, \ldots, \sigma_m) = K^{(p)}_{ev}(\sigma_1, \ldots, \sigma_m)
+ \sum_{ \{i,j\} \, \text{verifies Case 4}} \bar{K}_{i,j}^{c}(\sigma_1, \ldots, \sigma_m),
\end{equation}
where $\bar{K}_{i,j}^{c}(\sigma_1,\sigma_2, \ldots, \sigma_m)$ is a penalty for the pairs $\{ i,j\}$ that verify the following case:
\begin{enumerate}
\item[Case 4.]   if there exists a maximal set of rankings $\sigma_{t_1},\dots, \sigma_{t_k}\in {\mathcal R}$ such that for each $\ell=1,\dots, k$ the pair $i$ and $j$ are not tied in $\sigma_{t_\ell}$, are tied in $\sigma_{t_{\ell}+1},\sigma_{t_\ell+2}, \ldots, \sigma_{t_\ell+s}$, with $s \geq 1$, are not tied in $\sigma_{t_\ell+s+1}$ and $i$ an $j$ exchange their relative positions between $\sigma_{t_\ell}$ and  $\sigma_{t_\ell+s+1}$. In this case $\bar{K}_{i,j}^{c}(\sigma_1,\sigma_2, \ldots, \sigma_m)=k$, where $k$ is the number of rankings in the maximal  set of rankings $\sigma_{t_1},\dots, \sigma_{t_k}\in {\mathcal R}$ verifying the previous property.
\end{enumerate}
\end{definition}

\begin{remark}\label{rem:ns-cev}
Note that it is easy to check that previous definition coincides with the addition of the weights of all the edges of the evolutive competitivity network given in Definition~\ref{projev} for the family of rankings ${\mathcal R}=\{\sigma_1, \dots,  \sigma_m\}$ with fixed parameters $\alpha=2$,  $p \in [0,\frac 12]$, $q=1$  and $\gamma=0$.
\end{remark}

\begin{examp}\label{exa:2}
Let us consider the $5$ rankings of Example~\ref{exa:1},  $\alpha=2$ and  a penalty $p \in [0,\frac 12]$. It is easy to see that the only pair of nodes that verifies Case 4 is the pair $\{1,4\}$ commented  above. Therefore
\[
\sum_{ \{i,j\} \, \text{verifies case 4}} \bar{K}_{i,j}^{c}(\sigma_1, \ldots, \sigma_5) =
\bar{K}^{c}_{1,4}(\sigma_1,\ldots,\sigma_{5}) = 1
\]
and
\[
\begin{split}
K^{(p)}_{cev}(\sigma_1, \ldots, \sigma_5) &=
K^{(p)}_{ev}(\sigma_1, \ldots, \sigma_5) +  \sum_{ \{i,j\} \, \text{verifies case 4}} \bar{K}_{i,j}^{c}(\sigma_1, \ldots, \sigma_5) \\
 &= 28+13p+1=29 + 13p.
\end{split}
\]
\end{examp}

Note that the corrected evolutive Kendall distance the evolutive Kendall distance are strongly related, but, as we will see in Section~\ref{sec:economy} they can give quite different values in some cases. The next result gives the analytical relationship between $K^{(p)}_{cev}(\sigma_1, \ldots, \sigma_m)$ and $K^{(p)}_{ev}(\sigma_1, \ldots, \sigma_m)$.

\begin{proposition}\label{prop:estimate}
Given  a family ${\mathcal R}=\{ \sigma_1, \sigma_2, \ldots, \sigma_m\}$ of $m$ rankings with ties of nodes ${\mathcal N}=\{1,\dots, n\}$ and $\alpha,p\in \mathbb{R}$, then
\begin{equation}\label{eq:estimate}
K^{(p)}_{ev}(\sigma_1, \ldots, \sigma_m)\le K^{(p)}_{cev}(\sigma_1, \ldots, \sigma_m)\le K^{(p)}_{ev}(\sigma_1, \ldots, \sigma_m)+\frac{n(n-1)}2\left\lfloor\frac{m-1}2\right\rfloor,
\end{equation}
where $\lfloor\cdot\rfloor$ is the floor function. Furthermore, these two inequalities are sharp.
\end{proposition}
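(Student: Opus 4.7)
The plan is to split the estimate into two independent pieces, prove each by a counting argument, and exhibit explicit examples for sharpness. The key observation, which follows directly from Definition~\ref{def:Kevdc}, is that
\[
K^{(p)}_{cev}(\sigma_1, \ldots, \sigma_m) - K^{(p)}_{ev}(\sigma_1, \ldots, \sigma_m)
= \sum_{\{i,j\} \text{ verifies Case 4}} \bar{K}^{c}_{i,j}(\sigma_1,\ldots,\sigma_m),
\]
so the whole statement reduces to showing that this non-negative correction term is bounded above by $\tfrac{n(n-1)}{2}\lfloor (m-1)/2\rfloor$.

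For the left inequality I would simply remark that each $\bar{K}^{c}_{i,j}$ is by construction a non-negative integer (it counts maximal patterns of a certain form), so the sum is $\geq 0$. Sharpness in this direction is trivial: take any family of rankings \emph{without ties}, in which case no pair can verify Case~4 and equality $K^{(p)}_{cev}=K^{(p)}_{ev}$ holds.

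For the right inequality I plan to bound the contribution of each pair $\{i,j\}$ separately and then sum over the $\binom{n}{2}$ pairs. Fix such a pair and let $t_1<t_2<\cdots<t_k$ be the starting indices of the maximal long-term crossings from Case~4. By definition, each index $t_\ell$ is followed by at least one ranking where $i$ and $j$ are tied and then by a ranking $t_\ell+s_\ell+1$ (with $s_\ell\geq 1$) where they are untied again. Since $t_{\ell+1}$ must itself be a ranking where $i,j$ are untied, and since within the block $\sigma_{t_\ell+1},\ldots,\sigma_{t_\ell+s_\ell}$ they are tied, the first admissible value for $t_{\ell+1}$ is $t_\ell+s_\ell+1\geq t_\ell+2$. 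Hence the starting indices satisfy $t_{\ell+1}-t_\ell\geq 2$, and moreover $t_1\geq 1$ and $t_k\leq m-2$ (because the closing untied ranking $t_k+s_k+1$ must lie in $\{1,\ldots,m\}$). Therefore $2k-1\leq t_k\leq m-2$, which forces $k\leq\lfloor (m-1)/2\rfloor$. Summing over the $\binom{n}{2}=\tfrac{n(n-1)}{2}$ unordered pairs gives the claimed bound.

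The main (and really only) subtle step is the above spacing argument: one must verify that two consecutive long-term-crossing patterns for the same pair cannot overlap in more than the single boundary ranking $t_\ell+s_\ell+1=t_{\ell+1}$. This follows because the definition forces the intermediate rankings to be tied, leaving no room for another ``untied'' starting index in between. For sharpness of the upper bound, I would construct, for any $n$ and $m=2K+1$, the family in which $\sigma_{2\ell-1}$ equals $(1,2,\ldots,n)$ for odd $\ell$ and its reverse $(n,n-1,\ldots,1)$ for even $\ell$, while every even-indexed ranking $\sigma_{2\ell}$ declares all $n$ nodes mutually tied. Then every pair $\{i,j\}$ crosses between $\sigma_{2\ell-1}$ and $\sigma_{2\ell+1}$ through an all-tied ranking, so $\bar{K}^c_{i,j}=K=\lfloor (m-1)/2\rfloor$ for every pair, and the correction term attains exactly $\tfrac{n(n-1)}{2}\lfloor (m-1)/2\rfloor$, proving sharpness. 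The case $m$ even is handled analogously by appending one extra ranking at the end.
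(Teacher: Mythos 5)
Your proof is correct and follows the same overall decomposition as the paper --- the difference $K^{(p)}_{cev}-K^{(p)}_{ev}$ is the nonnegative Case-4 correction term, which you bound pair by pair and then sum over the $\binom{n}{2}$ pairs --- but it is tighter than the paper's argument at both of the delicate points. For the per-pair bound, the paper only observes that a long-term crossing needs at least three consecutive rankings and then asserts that ``it is easy to check'' that each pair admits at most $\left\lfloor\frac{m-1}{2}\right\rfloor$ of them; your spacing argument ($t_{\ell+1}\ge t_\ell+s_\ell+1\ge t_\ell+2$ together with $t_k\le m-2$, hence $2k-1\le m-2$) proves this cleanly, and it correctly allows consecutive patterns to share the single boundary ranking $t_{\ell+1}=t_\ell+s_\ell+1$, which is exactly what an extremal family must exploit. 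More importantly, your sharpness construction repairs a genuine defect in the paper's: the paper repeats the period-3 block $\sigma,\sigma_o,\bar\sigma,\sigma,\sigma_o,\bar\sigma,\dots$, in which consecutive tied blocks are separated by the direct crossing from $\bar\sigma$ to $\sigma$, so each pair accrues only $\left\lfloor\frac{m}{3}\right\rfloor$ long-term crossings and the stated upper bound is \emph{not} attained for general $m$ (already for $m=5$ the paper's family gives $1$ per pair instead of $2$); your overlapping alternation --- identity, all tied, reverse, all tied, identity, and so on, padded by one extra ranking when $m$ is even --- yields exactly $\left\lfloor\frac{m-1}{2}\right\rfloor$ Case-4 patterns for every pair, so your example, unlike the paper's, actually establishes the sharpness claim for all $m$. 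The lower bound and its sharpness via tie-free families agree with the paper (whose first displayed inequality, incidentally, contains the typo $K^{(p)}_{cev}\le K^{(p)}_{cev}$ where $K^{(p)}_{ev}\le K^{(p)}_{cev}$ is meant).
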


\begin{proof}
On the one hand, it is straightforward to check that
\[
K^{(p)}_{cev}(\sigma_1, \ldots, \sigma_m)\le K^{(p)}_{cev}(\sigma_1, \ldots, \sigma_m),
\]
simply by using \nref{Kevdc}. This first inequality is obviously sharp if and only if the family ${\mathcal R}$ of rankings has no long-term-crossings.

On the other hand,  in order to get the upper bound, it is enough to maximize the value of the expression
\begin{equation}\label{eq:case4}
\sum_{ \{i,j\} \, \text{verifies Case 4}} \bar{K}_{i,j}^{c}(\sigma_1, \ldots, \sigma_m).
\end{equation}
Note that the maximal value of \nref{eq:case4} could be derived from two principles: maximizing the number of pairs $i,j\in\mathcal{N}$ that verify the Case 4 in the definition of $K^{(p)}_{cev}(\sigma_1, \ldots, \sigma_m)$ and maximizing the value of $\bar{K}_{i,j}^{c}(\sigma_1, \ldots, \sigma_m)$ for each pair $i,j\in\mathcal{N}$ that verifies the Case 4. If we want to maximize the number of pairs of nodes that verifies the Case 4, we should make that all the pairs of nodes has a long-term-crossing. This can be performed, for example, if the family of rankings ${\mathcal R}$ contains the following rankings as non-consecutive elements of ${\mathcal R}$:
\[
\begin{array}{|c|c|}
\hline
\sigma  & \bar\sigma \\
\hline
1      &  n     \\
2      & n-1    \\
\cdots & \cdots \\
n-1    & 2      \\
 n     & 1       \\
\hline
\end{array}
\]
If $\sigma,\bar\sigma\in\mathcal{R}$ and they are not consecutive (i.e. there is $1\le i,j\le n$ with $|i-j|>1$ such that $\sigma_i=\sigma$ and $\sigma_j=\bar\sigma$), then all the pairs have a long-term-crossing, since all the pairs  exchange their relative position between $\sigma_i$ and $\sigma_j$ and in addition to this,  $\sigma_i$ and $\sigma_j$ are not consecutive rankings.

If we want to maximize the value of $\bar{K}_{i,j}^{c}(\sigma_1, \ldots, \sigma_m)$ for each pair $i,j\in\mathcal{N}$ that verifies the Case 4, we should consider a family of rankings such that maximizes the number of long-term-crossings between each pair of nodes. Since the minimal number of consecutive rankings in order to get a long-term-crossing is 3 (if we fix $i,j\in\mathcal{N}$, we need at least three rankings $\sigma_{k},\sigma_{k+1},\sigma_{k+2}$ such that $i$ and $j$ are tied in $\sigma_{k+1}$, they are not tied in $\sigma_k$ and $\sigma_{k+2}$ and they have different relative ordering in $\sigma_k$ and $\sigma_{k+2}$), then it is easy to check that the maximal number of long-term-crossing between a pair of nodes is $\left\lfloor\frac{m-1}2\right\rfloor$, where  $\lfloor\cdot\rfloor$ is the floor function.

Hence
\[
\begin{split}
\sum_{ \{i,j\} \, \text{verifies Case 4}} \!\!\!\!\!\!\!\!\!\!\!\!\bar{K}_{i,j}^{c}(\sigma_1, \ldots, \sigma_m)&\le \!\!\!\!\!\!\!\!\!\!\!\!\sum_{ \{i,j\} \, \text{verifies Case 4}} \!\!\!\!\!\!\!\!\!\!\!\!\max\{\bar{K}_{s,t}^{c}(\sigma_1, \ldots, \sigma_m);\enspace \{s,t\} \, \text{verifies Case 4}\}\\
&\le \sum_{ i\ne j\in\mathcal{N}} \max\{\bar{K}_{i,j}^{c}(\sigma_1, \ldots, \sigma_m);\enspace \{s,t\} \, \text{verifies Case 4}\}\\
&=\frac{n(n-1)}2\max\{\bar{K}_{i,j}^{c}(\sigma_1, \ldots, \sigma_m); \{s,t\}\, \text{verifies Case 4}\}\\
&\le \frac{n(n-1)}2\left\lfloor\frac{m-1}2\right\rfloor.
\end{split}
\]
Finally note that this bound is attained if we consider, for example the family of rankings ${\mathcal R}=\{ \sigma_1, \sigma_2, \ldots, \sigma_m\}$, where
\[
\sigma_i=\left\{
\begin{array}{cl}
\sigma& \text{if $i=3k-2$ for some $k\in\mathbb{N}$}\\
\sigma_o&\text{if $i=3k-1$ for some $k\in\mathbb{N}$}\\
\bar\sigma&\text{if $i=3k$ for some $k\in\mathbb{N}$}\\
\end{array}
\right.
\]
where $\sigma$ and $\bar\sigma$ are given before and $\sigma_o$ is the ranking with all the nodes tied. In this case all the pair of nodes verify Case 4 and the number of long-term-crossings is maximal, which makes that the upper bound is also attained.
\end{proof}

\subsection{The  Normalized Mean Strength}

The classical network notion of normalized mean strength of the evolutive competitivity network of a family of rankings with ties and the notion of corrected evolutive Kendall distance with penalty parameter $p$ of the same family of rankings are strongly related.

\begin{definition}
Given an undirected weighted network of $n$ nodes, the {\it strength $S(i)$ of each node $i$} is the sum of the weights of the edges incident to $i$. The {\it mean strength $MS$ of the network } is the sum of all the node strengths divided by the total number of nodes $n$, and {\it the normalized mean strength $NS$ of the network} is its mean strength  divided by the mean strength of a complete network of $n$ nodes with maximal weight in each edge.
\end{definition}

The connection between the normalized mean strength of the evolutive competitivity network of a family of rankings with ties and the corrected evolutive Kendall distance of the same family of rankings is given in the following result:

\begin{proposition}\label{prop_ns-cev}
Given a family of rankings with ties ${\mathcal R}=\{\sigma_1, \dots, \sigma_m\}$ of $n$ nodes and $p\in[0,\frac 12]$, then
\begin{equation}\label{NSnuevo}
NS(\sigma_1,\sigma_2, \ldots, \sigma_m) = \frac{1}{(m-1)n(n-1)} K_{cev}^{(p)} (\sigma_1,\sigma_2, \ldots, \sigma_m),
\end{equation}
where $NS(\sigma_1,\sigma_2, \ldots, \sigma_m)$ is the normalized mean strength  of the projected evolutive competitivity network of ${\mathcal R}$ for fixed parameters $\alpha=2$, $p\in[0,\frac 12]$, $q=1$ and $\gamma=0$.
\end{proposition}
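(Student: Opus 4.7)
I would prove the identity in three steps: reduce $K^{(p)}_{cev}$ to the total edge weight of the projected network, invoke the handshake lemma to relate that total to the mean strength, and determine the correct normalization by computing the largest possible weight on a single edge.

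Step 1 is immediate from Remark~\ref{rem:ns-cev}: with parameters $\alpha=2$, $p\in[0,\tfrac12]$, $q=1$, $\gamma=0$, the corrected evolutive Kendall distance $K^{(p)}_{cev}(\sigma_1,\ldots,\sigma_m)$ coincides with the total $\sum_e w(e)$ of all edge weights in the projected evolutive competitivity network $\mathcal{P}G(\sigma_1,\ldots,\sigma_m)$. Step 2 is the standard handshake identity: each edge contributes its weight to exactly two node strengths, hence
\begin{equation*}
\sum_{i=1}^{n} S(i) = 2 \sum_{e} w(e) = 2\, K^{(p)}_{cev}(\sigma_1,\ldots,\sigma_m),
\end{equation*}
and dividing by $n$ yields $MS = \frac{2}{n}\,K^{(p)}_{cev}(\sigma_1,\ldots,\sigma_m)$.

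The substantial step is showing that the maximum weight of a single edge in any such network is $W_{\max} = \alpha(m-1) = 2(m-1)$. Fix a pair $\{i,j\}$ and classify each of the $m-1$ consecutive transitions $\sigma_t \to \sigma_{t+1}$ into one of four mutually exclusive types: a crossing ($c'$ of them, weight $\alpha=2$ each), a semi-crossing ($s_p$ of them, weight $p$ each), a tie-to-tie transition ($t$ of them, weight $\gamma=0$), or an untied transition with no swap ($c$ of them, weight $0$); thus $c + c' + s_p + t = m-1$. Let $k_3$ be the number of long-term-crossings for $\{i,j\}$. The crucial observation is that each long-term-crossing begins with a semi-crossing (entering the tied phase) and ends with a distinct semi-crossing (leaving it); by the maximality condition in Definition~\ref{def:Kevdc}, the intervals associated with different long-term-crossings cannot share these entry/exit transitions, so $s_p \ge 2 k_3$. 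Combined with $p \le \tfrac12$, this gives
\begin{equation*}
w(\{i,j\}) = 2c' + p\,s_p + k_3 \le 2c' + p\,s_p + \frac{s_p}{2} = 2c' + \left(p+\frac12\right)s_p \le 2c' + s_p \le 2(c'+s_p) \le 2(m-1),
\end{equation*}
with equality whenever every transition is a crossing ($c' = m-1$), a configuration that is clearly attainable.

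Combining these ingredients, the mean strength of a complete weighted network on $n$ nodes whose edges all carry the maximal weight $W_{\max}$ equals $W_{\max}(n-1)$, and therefore
\begin{equation*}
NS(\sigma_1,\ldots,\sigma_m) = \frac{MS}{W_{\max}(n-1)} = \frac{2\,K^{(p)}_{cev}/n}{2(m-1)(n-1)} = \frac{K^{(p)}_{cev}(\sigma_1,\ldots,\sigma_m)}{(m-1)\,n(n-1)},
\end{equation*}
as claimed. The main obstacle is the estimate $s_p \ge 2k_3$, which requires a careful reading of Definition~\ref{def:Kevdc} to see that the entering and exiting semi-crossings of distinct long-term-crossings are all pairwise distinct; without it, the $k_3 q$ contributions could a priori inflate an edge weight beyond $\alpha(m-1)$ and the normalizing constant would be wrong.
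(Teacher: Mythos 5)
Your proposal is correct and follows essentially the same route as the paper's proof: Remark~\ref{rem:ns-cev} identifies $K^{(p)}_{cev}$ with the total edge weight of the projected network, the handshake identity turns that total into the mean strength $MS=\frac{2}{n}K^{(p)}_{cev}$, and the normalization constant comes from the maximal edge weight $2(m-1)$. The only difference is one of rigor rather than of method: the paper merely asserts that the maximal weight of an edge is $2(m-1)$, attained when no ties occur, whereas you actually prove this bound via the classification of the $m-1$ transitions, the observation $s_p\ge 2k_3$ (distinct long-term-crossings have pairwise distinct entry and exit semi-crossings), and $p\le\frac12$ --- a correct and genuinely useful addition, since without it the long-term-crossing contributions ($q=1$) could a priori push an edge weight past $\alpha(m-1)$ and invalidate the normalization.
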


\begin{proof}
Note that Remark~\ref{rem:ns-cev} pointed out that we can compute the corrected evolutive Kendall distance with penalty parameter $p$ by the addition of the weights of all the edges of the projected  evolutive competitivity network associated to ${\mathcal R}$, for fixed parameters $\alpha=2$, $p\in[0,\frac 12]$, $q=1$ and $\gamma=0$. From this observation, since the maximal weight of each edge occurs when there are no ties in the rankings of ${\mathcal R}$ (it is in fact $2(m-1)$), then normalized mean strength $NS(\sigma_1,\sigma_2, \ldots, \sigma_r)$ of the projected evolutive competitivity network of ${\mathcal R}$ is
\[
\frac{1}{(m-1)n(n-1)} K_{cev}^{(p)} (\sigma_1,\sigma_2, \ldots, \sigma_m).
\]
\end{proof}

%
%
%
%
%

\section{An application to the competitiveness of IBEX index from 2003 to 2013}\label{sec:economy}

The expected return on an investment and the investment risk are key determinants in the decision of a market economy investor to invest \cite{Brigh2002,Fam93}.

In this section we use the methodology developed in previous sections for analysing the return and the volatility from the stock market prices of the IBEX-35 (Data from Spanish Stock Exchange) corresponding to the twenty-five companies that have been trading on the market contributing to this stock market index during the whole period 2003-2013.

\subsection{Description of the dataset and methods used}

{For} each year we have about $250$ data for each company, corresponding to the daily values. The specific number of trading days for each year is given in Table \ref{ta:ndi}. The data from the IBEX-35 returns and volatilities have been extracted from the {\sf Invertia} website~\cite{invertia}.

\begin{table}[h]
\begin{center}
\begin{tabular}{|c|c|}
\hline
Year  &  Trading days ($N_y$)\\
\hline
$2003$ & $250$ \\
\hline
$2004$  & $251$  \\
\hline
$2006$  &  $254$  \\
\hline
$2007$  &  $253$ \\
\hline
$2008$  & $253$  \\
\hline
$2009$  & $254$ \\
\hline
$2010$  & $256$ \\
\hline
$2011$  & $257$ \\
\hline
$2012$  & $256$ \\
\hline
$2013$  & $255$ \\
\hline
\end{tabular}
\bigskip
\caption{Details of the number of trading days year by year}
\label{ta:ndi}
\end{center}
\end{table}

In both cases (return and volatility) the values have been annualised. Specifically, the (annualised) daily return $R_i$ for the twenty-five considered stocks in the IBEX 35 has been calculated according to the formula
\begin{equation}\label{eq:return}
R_i =N_{y}\left(\frac{SP_i-SP_{i-1}}{SP_{i-1}}\right),
\end{equation}
where $N_{y}$ is the number of trading days of the previous year as it is reflected in Table~\ref{ta:ndi} (the same for every day $i$ of that year), and $SP_{i}$ and $SP_{i-1}$ are the daily Stock Price of that stock in, respectively, the considered day and the day before.

In the studied period the data corresponding to return show a minimum of $-2.8223$ and a maximum of $1.8423$. In the same period, the data corresponding to volatility show a minimum of $0.1013$ and a maximum of $1.0699$. This information is crucial to select  in a proper way the parameter $\Delta x$ which allows us to define a tie between two values when they are close {\em enough}. In order to estimate the risk of an investment, some parameters may be considered. The simplest is the range of the share prices considered along a period of time, i.e., the difference between the greatest and the smallest value during that period of time. Another usual parameters are the variance and the standard deviation. It is worth mentioning that although the most commonly used concept to estimate the risk of an investment is the variance, many academics prefer to use the standard deviation since it offers advantages with respect to the number of decimals we have to employ \cite{Brea2006}. For example, we prefer say that a stock share has an annual performance of $8$ per cent instead of saying that it has a variance of $0.0064$. For this reason, we have employed the standard deviation instead of the variance in our analysis as the risk indicator.

Specifically, the (annualised) daily volatility $V_i$ has been calculated according to the formula
\begin{equation}
V_i=\left(\frac{1}{N(i)}\sum_{j=0}^{N(i)-1} (R_{i-j}-\mu_i)^2 \right)^{1/2},
\end{equation}
where $N(i)$ is the number or Trading Days from day $i$ (of the year $k$) to the same day of the previous year ($k-1$), $R_i,R_{i-1},\cdots,R_{i-N(i)+1}$ are the daily return of all the days Trading Days from day $i$ (of the year $k$) to the same day of the previous year ($k-1$) and $\mu_i$ is the mean of these values (also called {\sl annualized moving average of the daily return} at day $i$).

\subsection{Analysis and results}
It is known that Financial Market is a chaotic system in the sense that it is sensitive to initial conditions and gives rise to effectively unpredictable long-term behaviour \cite{savit,Peters91}. A possible reason for this development may be that it contains positive feedback (which tends to amplify trends over time) and negative feedback (which tends to reduce trends over time). However, some attempts to detect the presence of chaos in certain systems have not had any success. For example, in~\cite{Abh95} the authors tested for the presence of chaos in the FTSE 100 Index using a six month sample of about 60,000 minute by-minute returns and found little to support the view that the process is chaotic at any frequency. A question to investigate is the evidence of chaos in the evolution of some other parameters related to the stock markets which give information about the stock of the different companies.

We have investigated the evolution of competitivity for the daily values of return and volatility for the twenty-five considered stocks in the IBEX 35 during the period 2003-2013. The analysis is based on the study of global (structural) properties of the corresponding competitivity networks. As the  theoretical framework established in previous sections  differs from those traditionally used in the analysis of stock markets, the information we get is completely different from that obtained through a traditional analysis, since we have analysed the ranking fluctuations of the stock values rather than the actual values.  In particular, we have considered the evolution of the Normalized Mean Strength ($NS$) for the competitivity networks obtained from return and volatility during the period 2003-2013, but many other structural, global or local, parameters could be considered as it happened in \cite{Cri13}. $NS$ is a good indicator of the number of fluctuations in the daily rankings, since by using Proposition~\ref{prop_ns-cev}, the higher $NS$ is the more fluctuations in the rankings of return or volatility we get. Furthermore, since a high number of  fluctuations in the daily rankings of return or volatility can be economically understood as uncertainty in the investment strategies or non-preconditioned markets, $NS$ could be considered as a quantitative indicator of such economical variables.

Once we have fixed the economical dataset to be analysed (in this case the ranking fluctuations of the return and volatility of the common twenty-five companies that have been trading on the Spanish Stock Market contributing to this stock market IBEX35 index during the whole period 2003-2013), the rankings to be considered must be fixed. Note that the rankings are given by the return or the volatility and therefore they are actually rankings with ties. Furthermore, these ties must be also considered as {\sl approximated ties} as we introduced in Definition~\ref{def:aprox}. Note that either return and volatility are non-integer scores and therefore the use of approximated ties helps avoiding the instability of ties due to round-off errors. In addition to this, some stock markets values which are very similar (but different) could be considered equivalent for an investor, and therefore they are {\sl approximately tied}.

In order to fix the setting of the analysis the precision threshold $\Delta x$ must be stated. The right choice of $\Delta x$ should consider the coexistence of tied and non-tied nodes, since a very low value of $\Delta x$ produces unstable results due to round-off errors, while a very high value of $\Delta x$ forces to all the nodes to be tied and therefore produce a rough analysis. The trade-off between number of ties, accuracy and stability of the analysis is obtained for values of $\Delta x$ that produce a significant but not too high number of ties. Figure~\ref{figtiesall} shows the relationship between $\Delta x$ and the number of ties for the rankings of the return (panel~(a)) and volatility (panel~(b)) for the IBEX index along 2003-2013. This figure illustrates that there is a phase transition for values $\Delta x$ between $0.02$ and $1$ and therefore a good choice of $\Delta x$ could be $\Delta x=0.05$.

\begin{figure}[h!]
\begin{center}
 \includegraphics[width=0.49\textwidth]{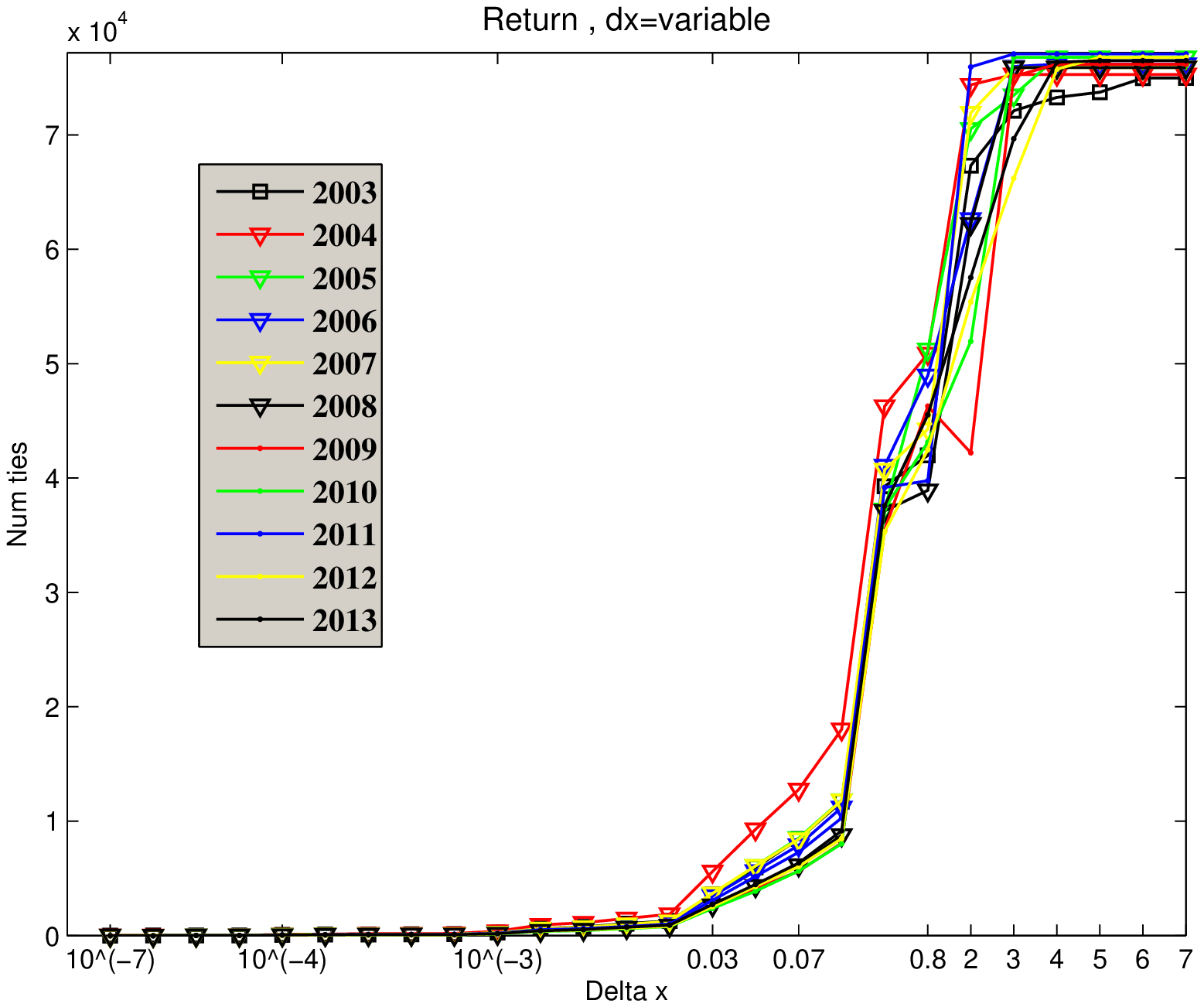}
 \includegraphics[width=0.49\textwidth]{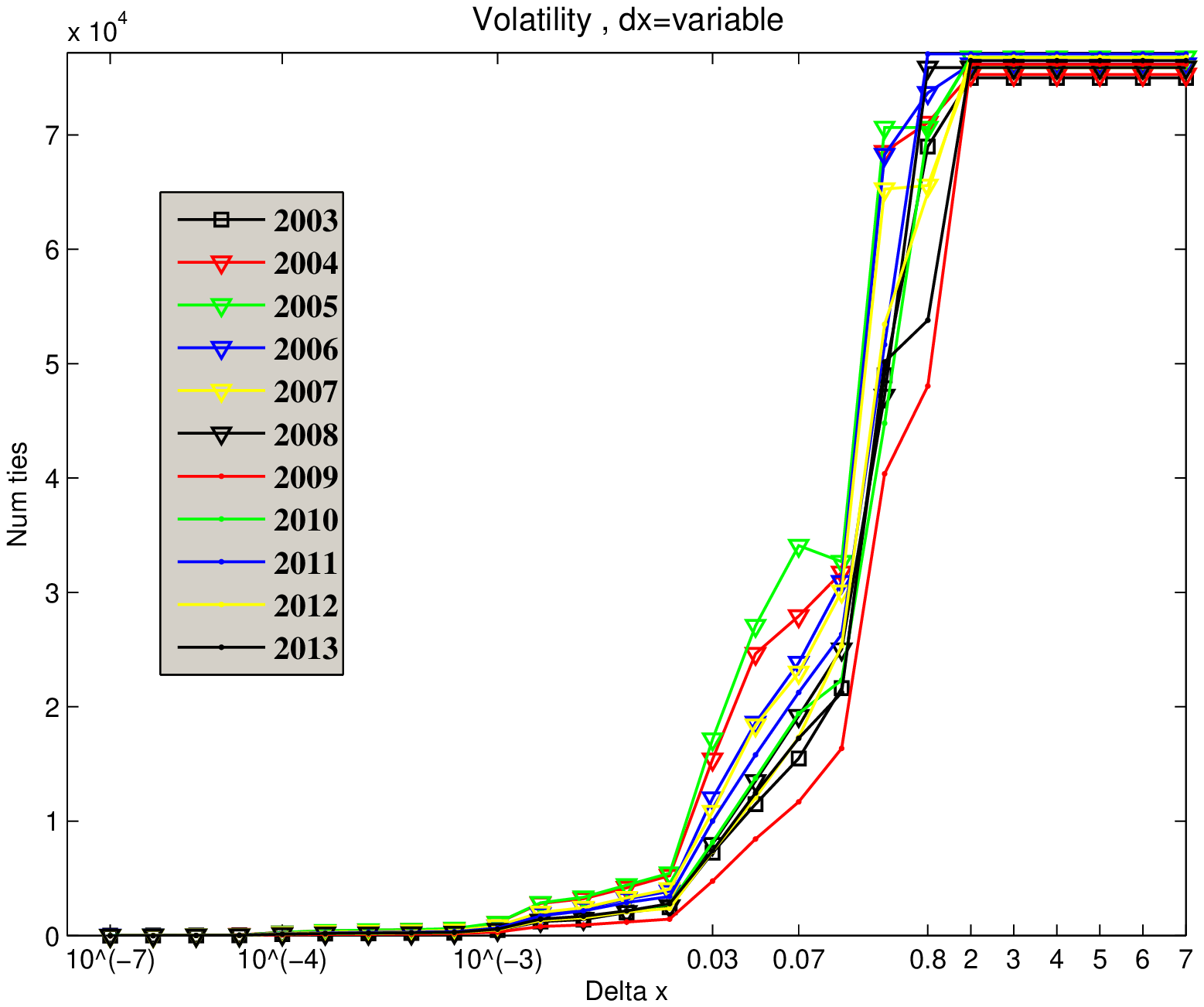}
\end{center}
\caption{Number of ties versus $\Delta x$ resulting for the rankings given by the return (panel (a) or the volatility (panel (b)) along 2003-2013}
\label{figtiesall}
\end{figure}

$\Delta x=0.05$ produces a not too high number of ties and also makes that $K_{cev}^{(p)}$ exhibits different values from $K_{ev}^{(p)}$, as Figure~\ref{figFN1} shows. Figure~\ref{figFN1} presents the values of $K_{cev}^{(p)}$ and $K_{ev}^{(p)}$ for the rankings of the return (panel~(a)) and volatility (panel~(b)) for the IBEX index  along 2003 in terms of $\Delta x$. Note that either for very high or for very low values of $\Delta x$, we get that $K_{cev}^{(p)}\approx K_{ev}^{(p)}$, as it is shown in Figure~\ref{figFN1}. This is due to the following facts: {\it (i)} low values of $\Delta x$ makes a very low number of ties and therefore a very low number of long-term crossings that fix to the Case 4 in Definition~\ref{def:Kevdc}, {\it (ii)} high values of $\Delta x$ makes that all the nodes are always tied with high probability and therefore the number of  long-term crossings that fix to the Case 4 in Definition~\ref{def:Kevdc} is very low. The value $\Delta x=0.05$ produces a non-negligible number of ties (see Figure~\ref{figtiesall}) and makes that   $K_{cev}^{(p)}$ exhibits different behaviour to $K_{ev}^{(p)}$ (see Figure~\ref{figFN1}), so we will fix this value for the analysis.

\begin{figure}[h!]
\begin{center}
  \includegraphics[width=0.49\textwidth]{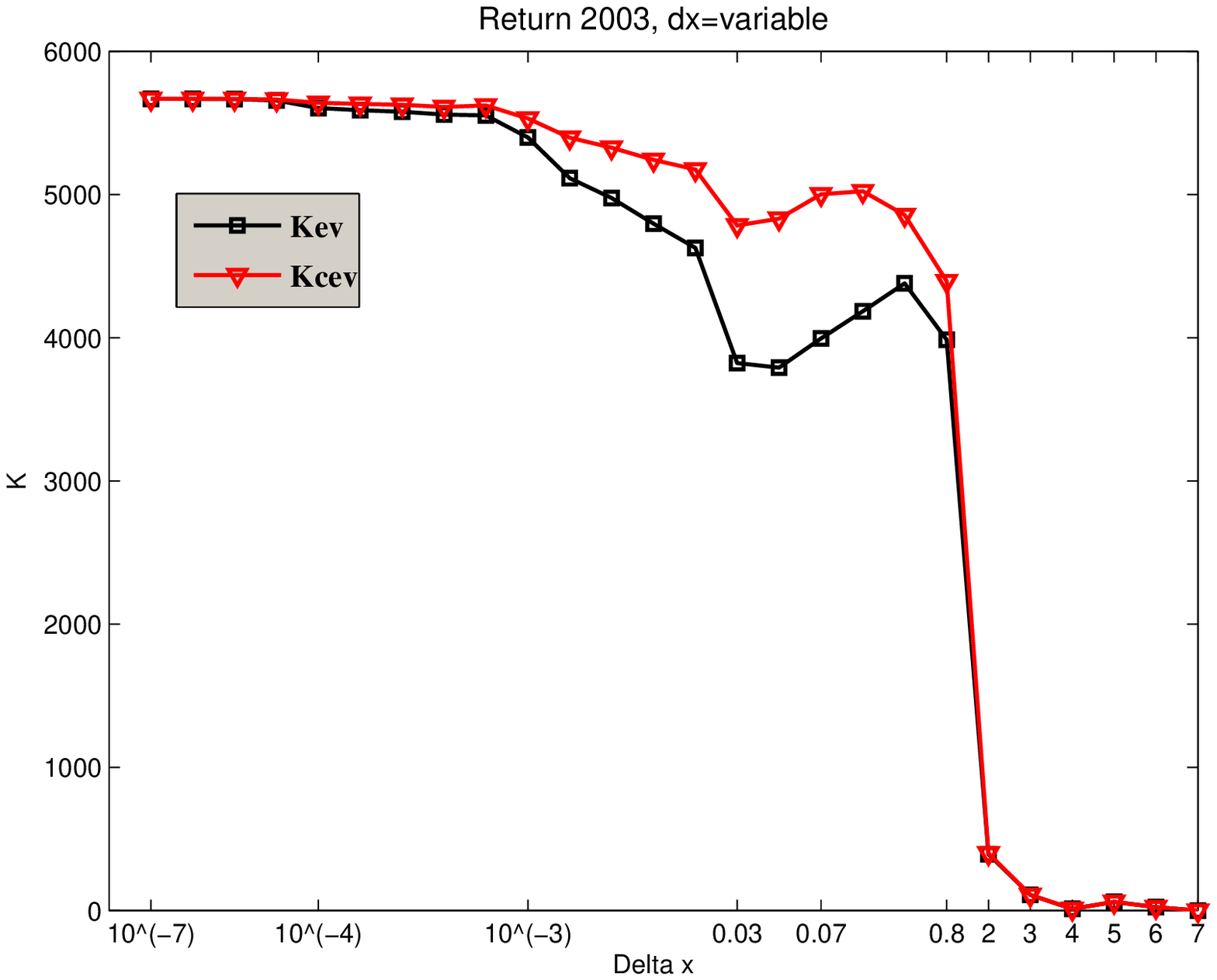}
  \includegraphics[width=0.49\textwidth]{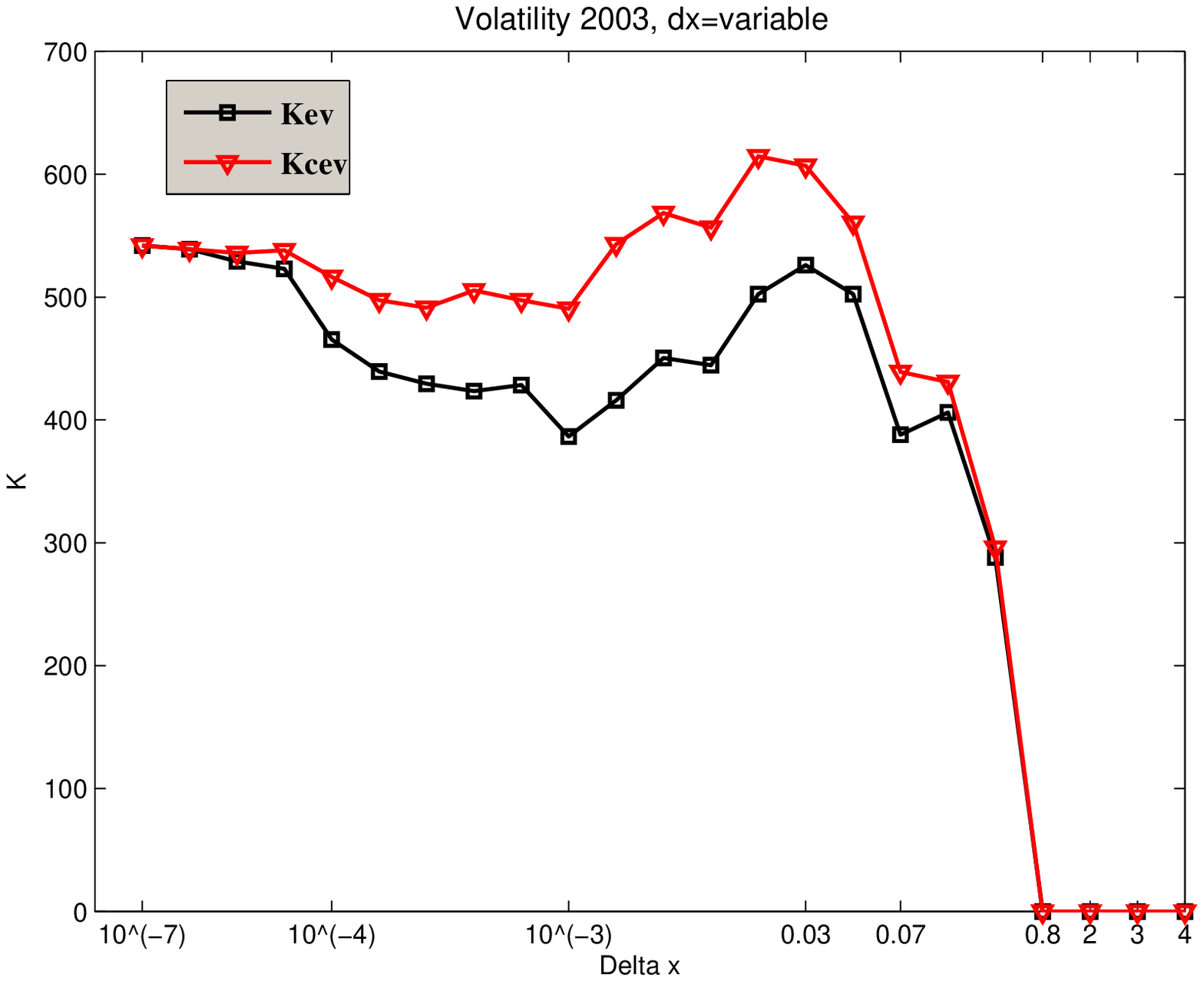}
\end{center}
\caption{  $K^{(p)}_{cev}$ (in red)  and $K^{(p)}_{ev}$ (in black) versus $\Delta x$ for the rankings given by the return (panel (a) or the volatility (panel (b)) resulting from year 2003, with $p=0.5$}
\label{figFN1}
\end{figure}

It is quite remarkable that the choice of $\Delta x$ could deeply modify the results obtained in the analysis. Figure~\ref{figEvolNSvsYears} shows the evolution of the Normalized Mean Strength ($NS$) for the competitivity graphs of return (top panels) and volatility (bottom panels) and three different values of $\Delta x=0.5,\,0.05,\,0.005$. The differences are stronger when we change $\Delta x$ from $0.5$ to $0.05$  than when we move from $\Delta x=0.05$ to $\Delta x=0.005$ . These differences also affect the tendencies, which is specially visible in the case of the evolution of the competitivity graphs of volatility.

\begin{figure}[h!]
\begin{center}
  \includegraphics[width=0.3\textwidth]{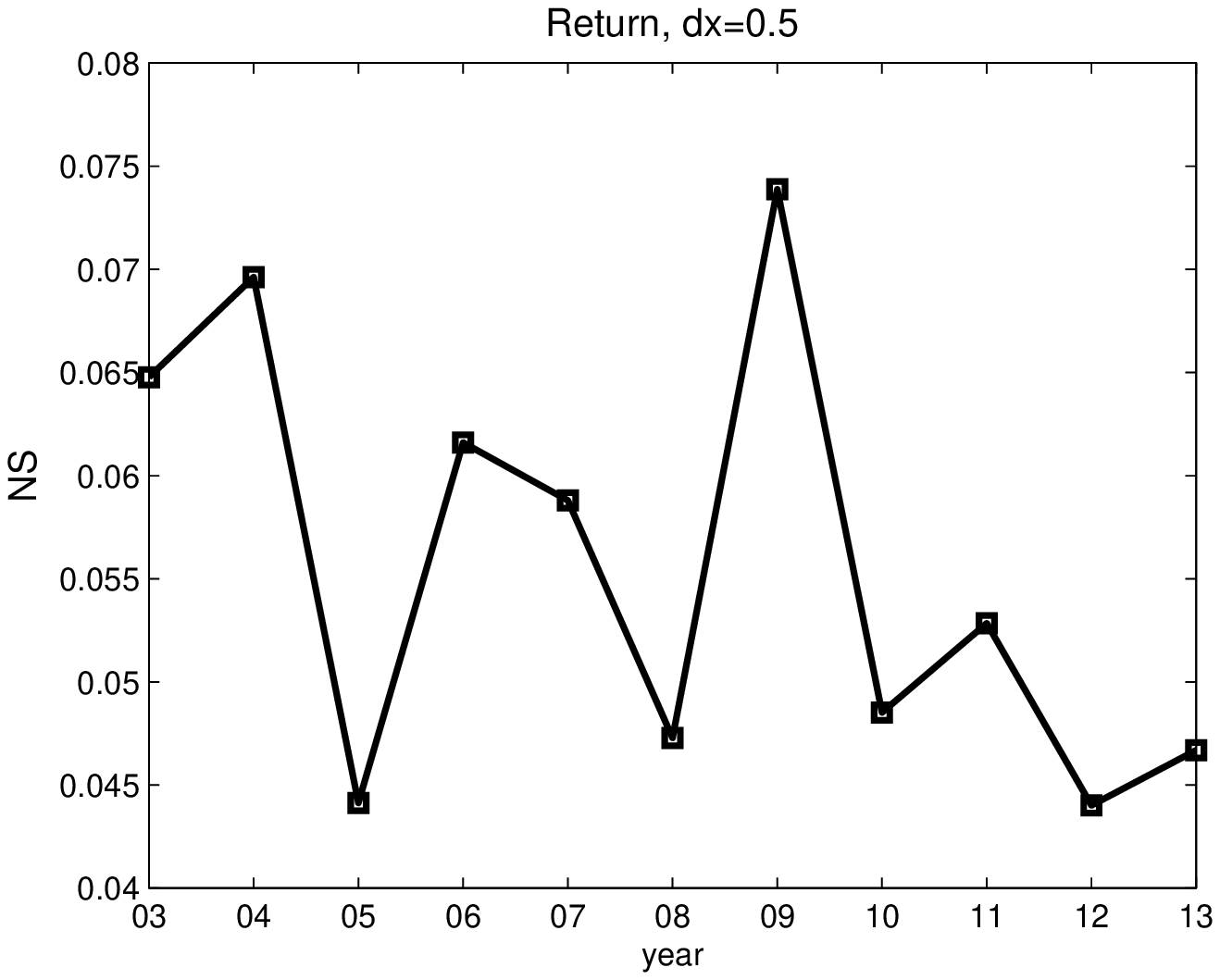}\quad
  \includegraphics[width=0.3\textwidth]{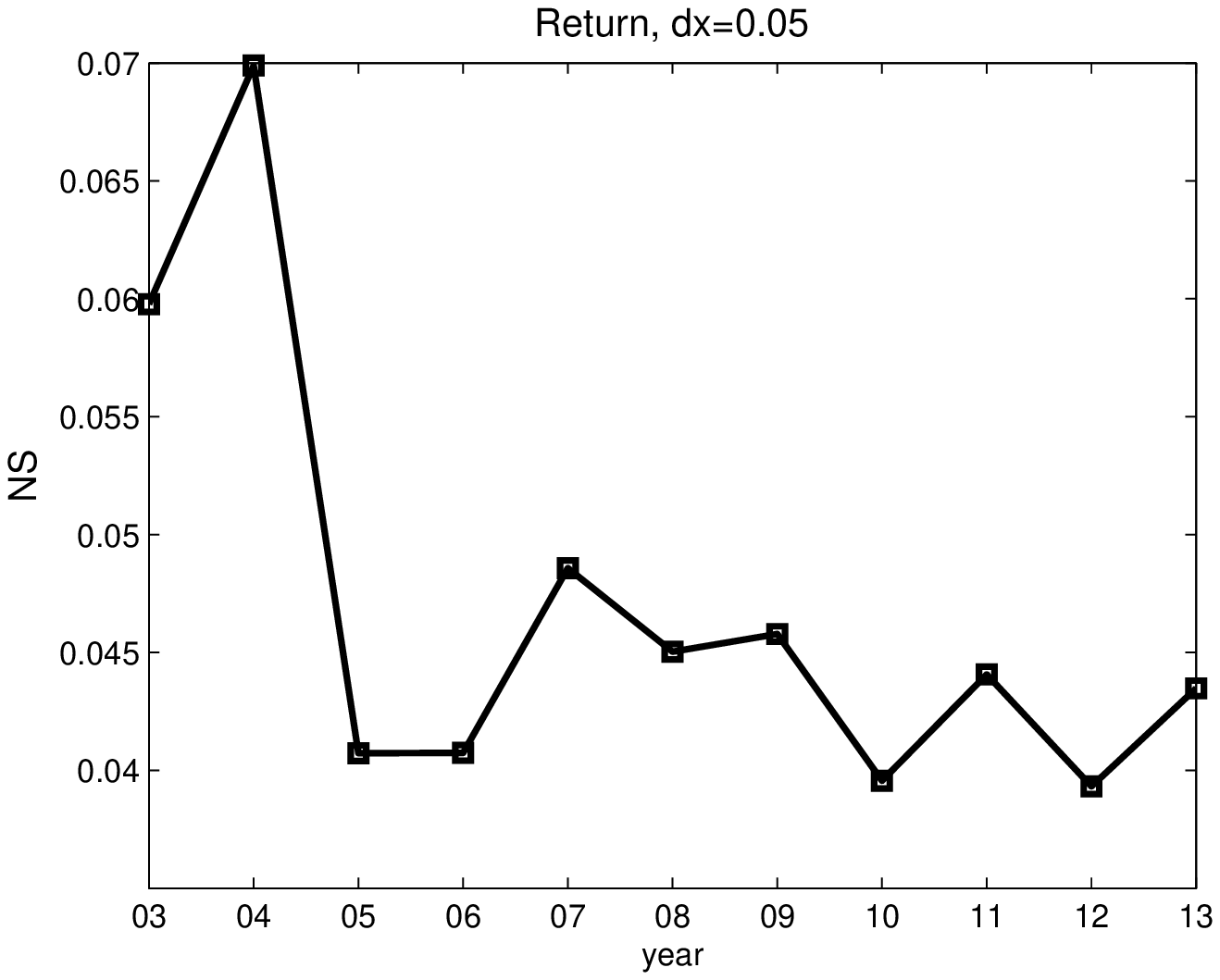}\quad
  \includegraphics[width=0.3\textwidth]{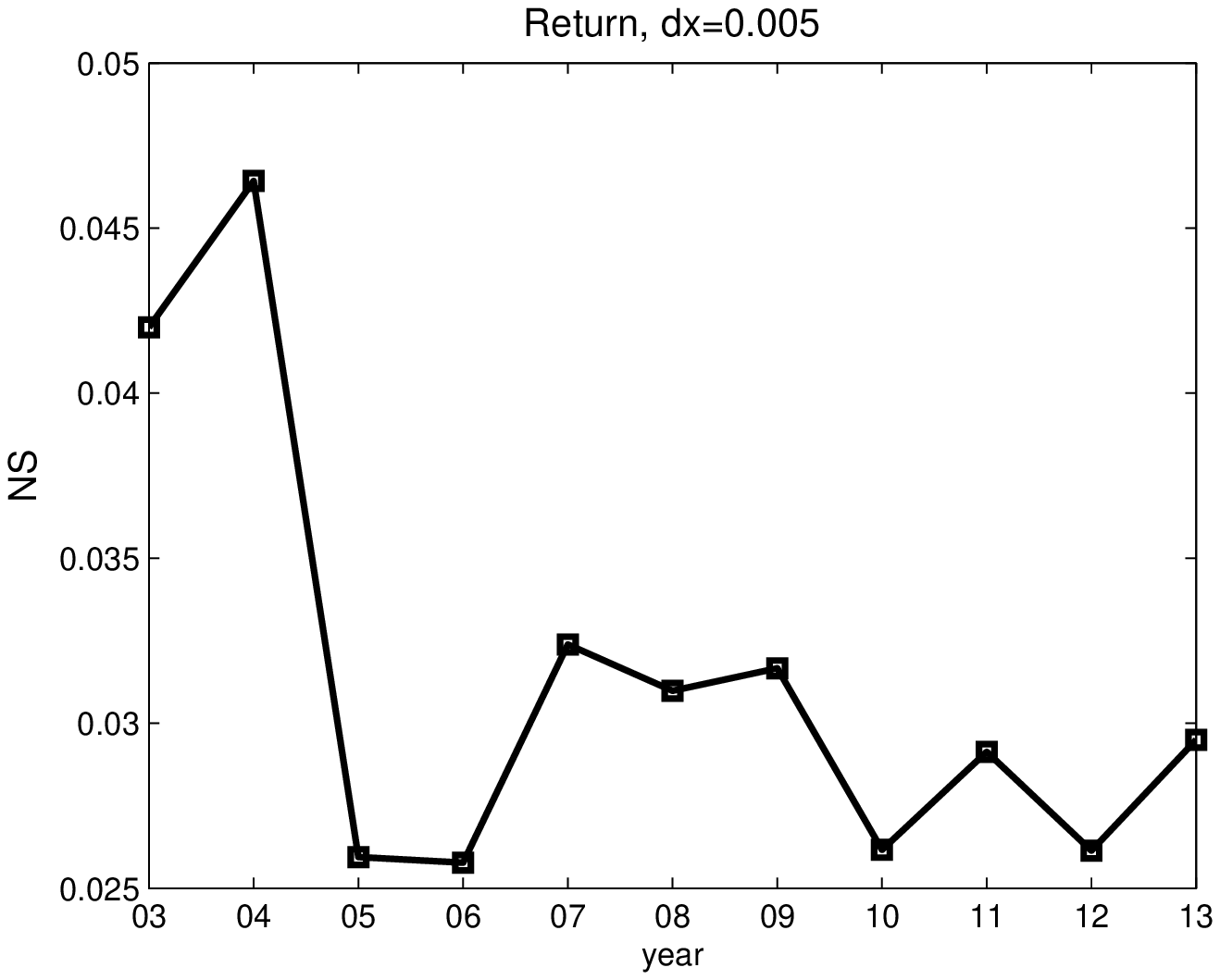}\\
  \includegraphics[width=0.3\textwidth]{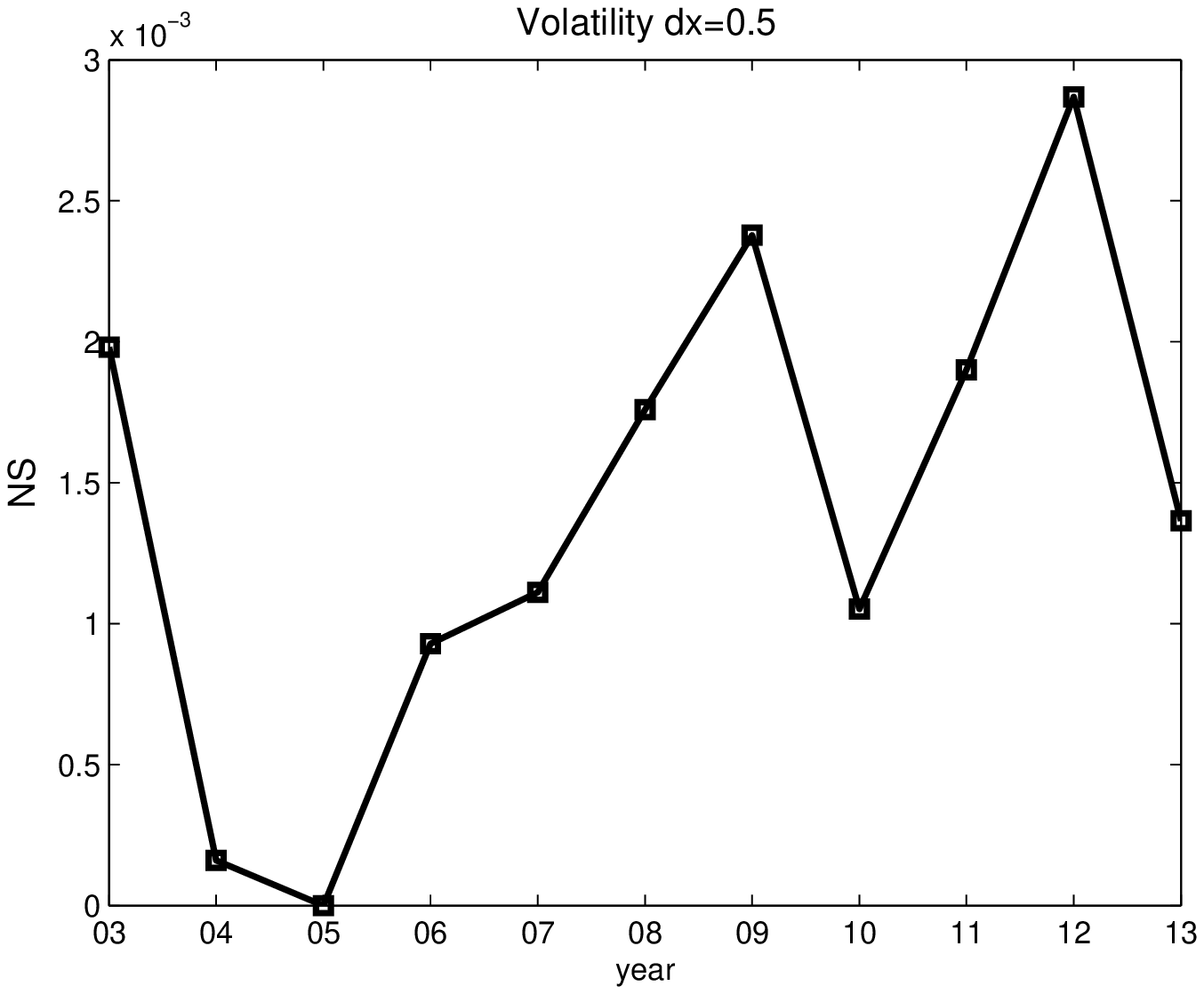}\quad
  \includegraphics[width=0.3\textwidth]{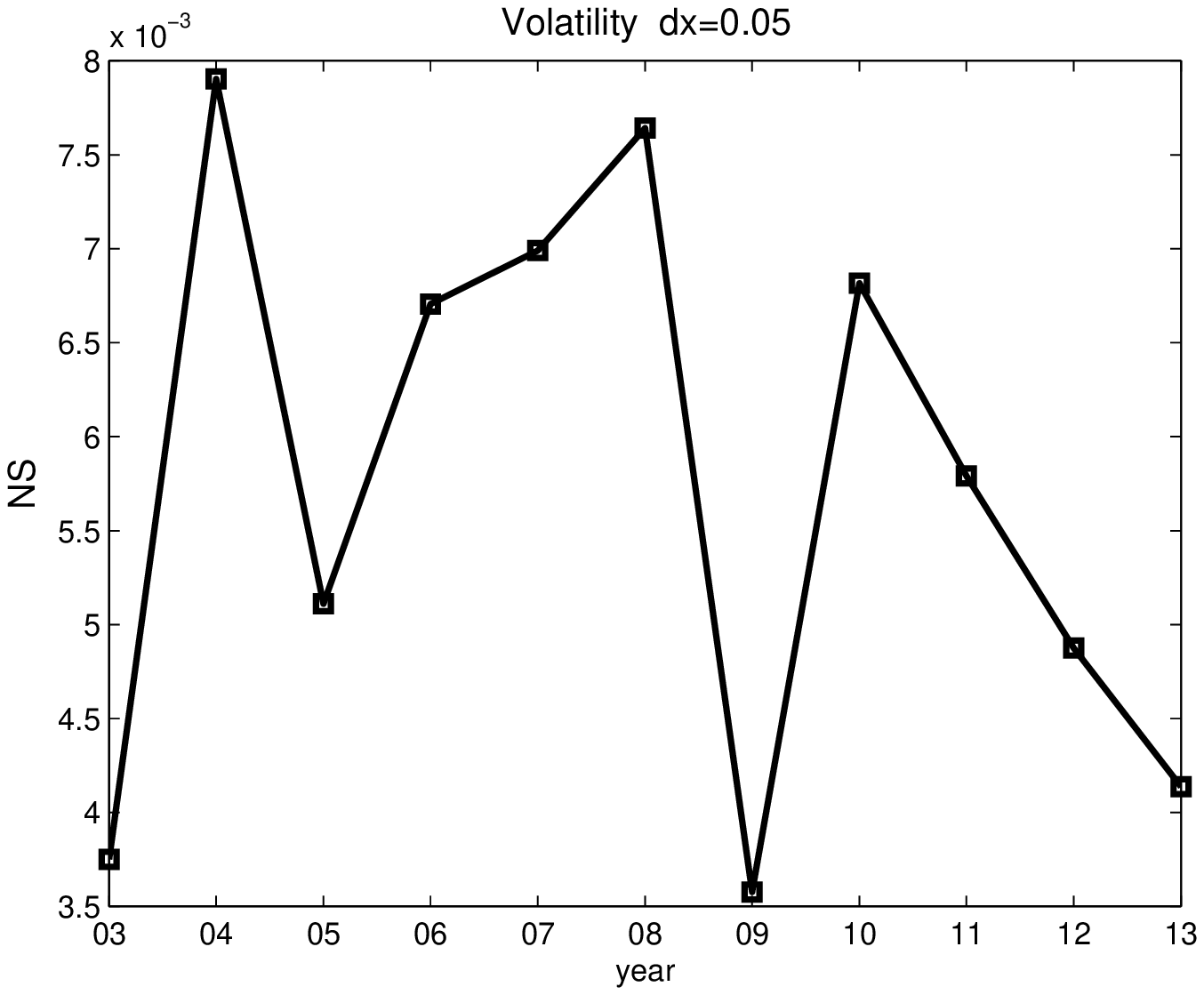}\quad
  \includegraphics[width=0.3\textwidth]{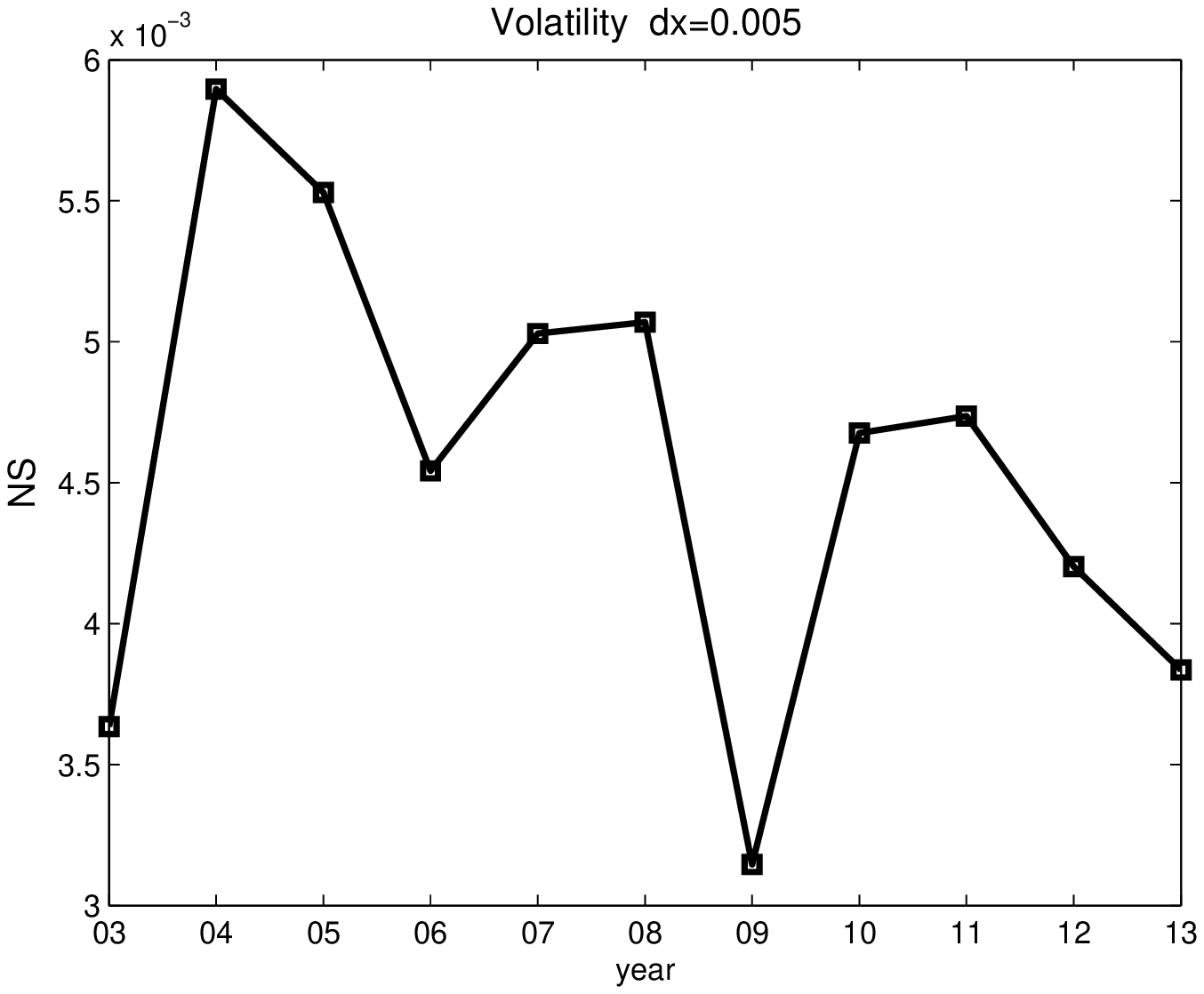}
\end{center}
\caption{Evolution of the Normalized Mean Strength ($NS$) for the competitivity graphs of return (top panels) and volatility (bottom panels) during the  period 2003-2013 and for $\Delta x=0.5$ (left panels), $\Delta x=0.05$ (central panels) and $\Delta x=0.005$ (right panels)}
\label{figEvolNSvsYears}
\end{figure}

A similar situation occurs when the evolution of the Normalized Mean Strength ($NS$) for the competitivity graphs along each year is considered. Figure~\ref{figEvolNSeachYears} shows the evolution $NS$ for the competitivity graphs of return (top panels) and volatility (bottom panels) and three different values of $\Delta x=0.5,\,0.05,\,0.005$ in each year of the period 2003-2013. In these cases, different values of $\Delta x$ produce quite different tendencies in each year and also a different rankings according to $NS$.

\begin{figure}[h!]
\begin{center}
  \includegraphics[width=0.3\textwidth]{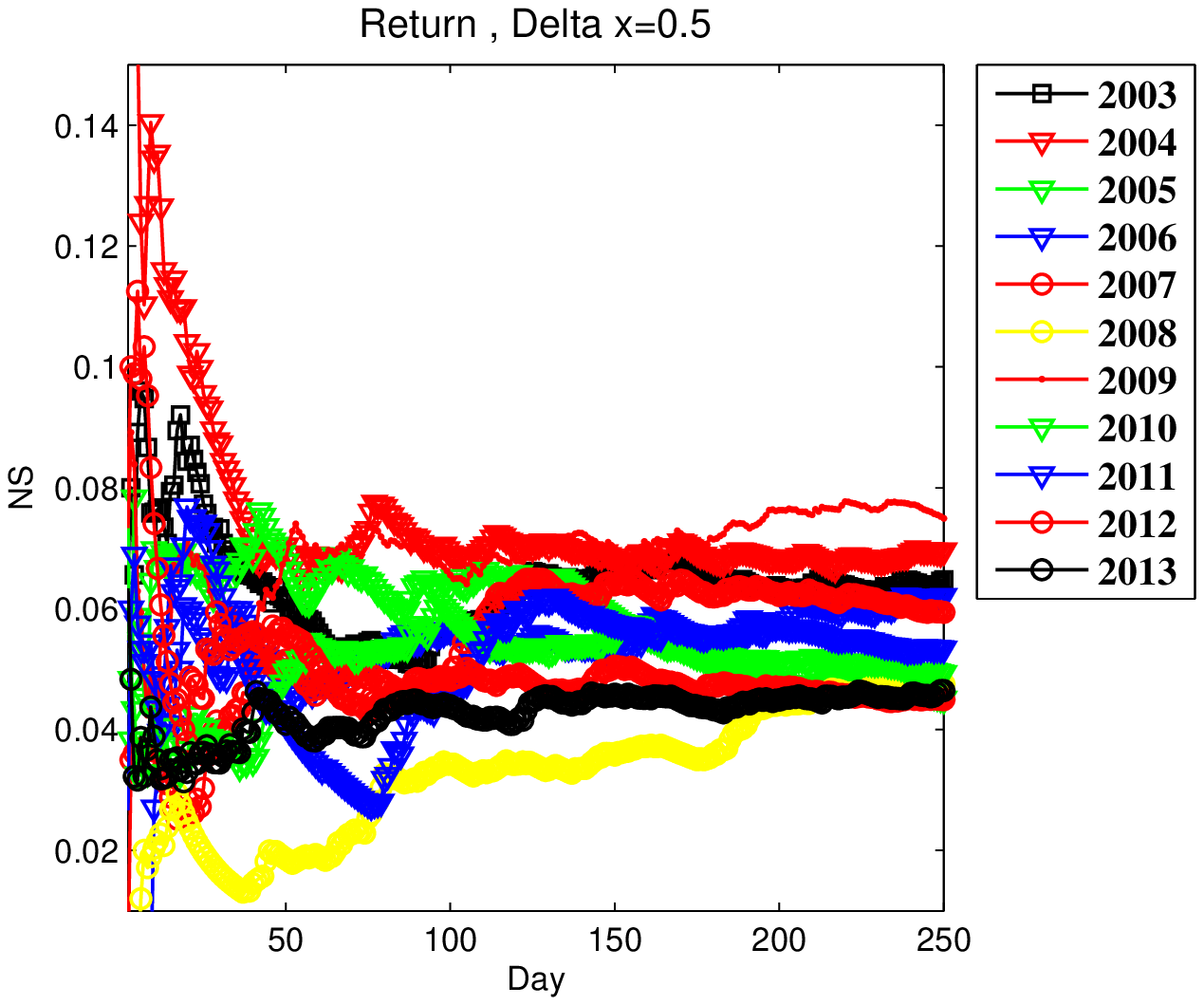}\quad
  \includegraphics[width=0.3\textwidth]{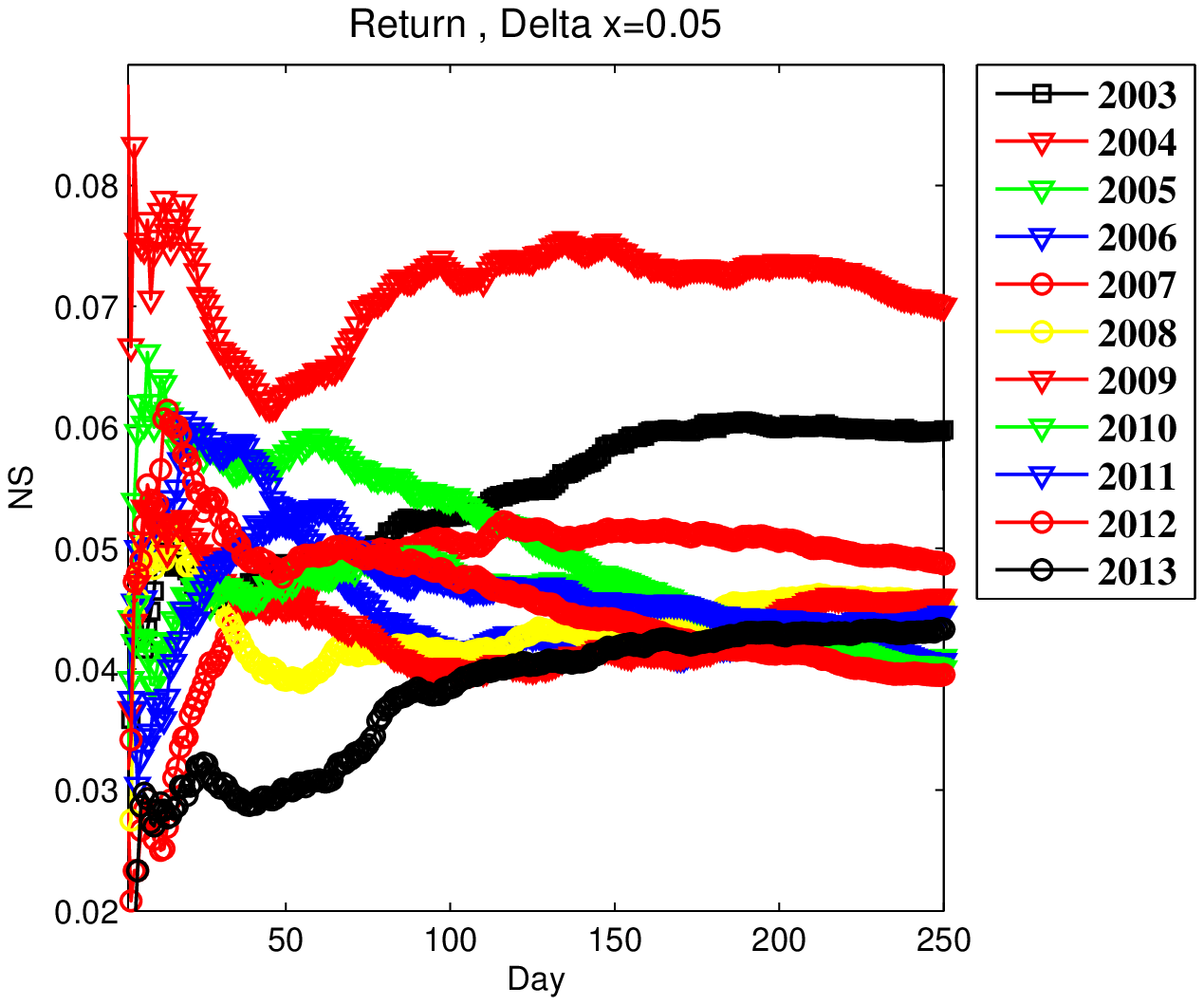}\quad
  \includegraphics[width=0.3\textwidth]{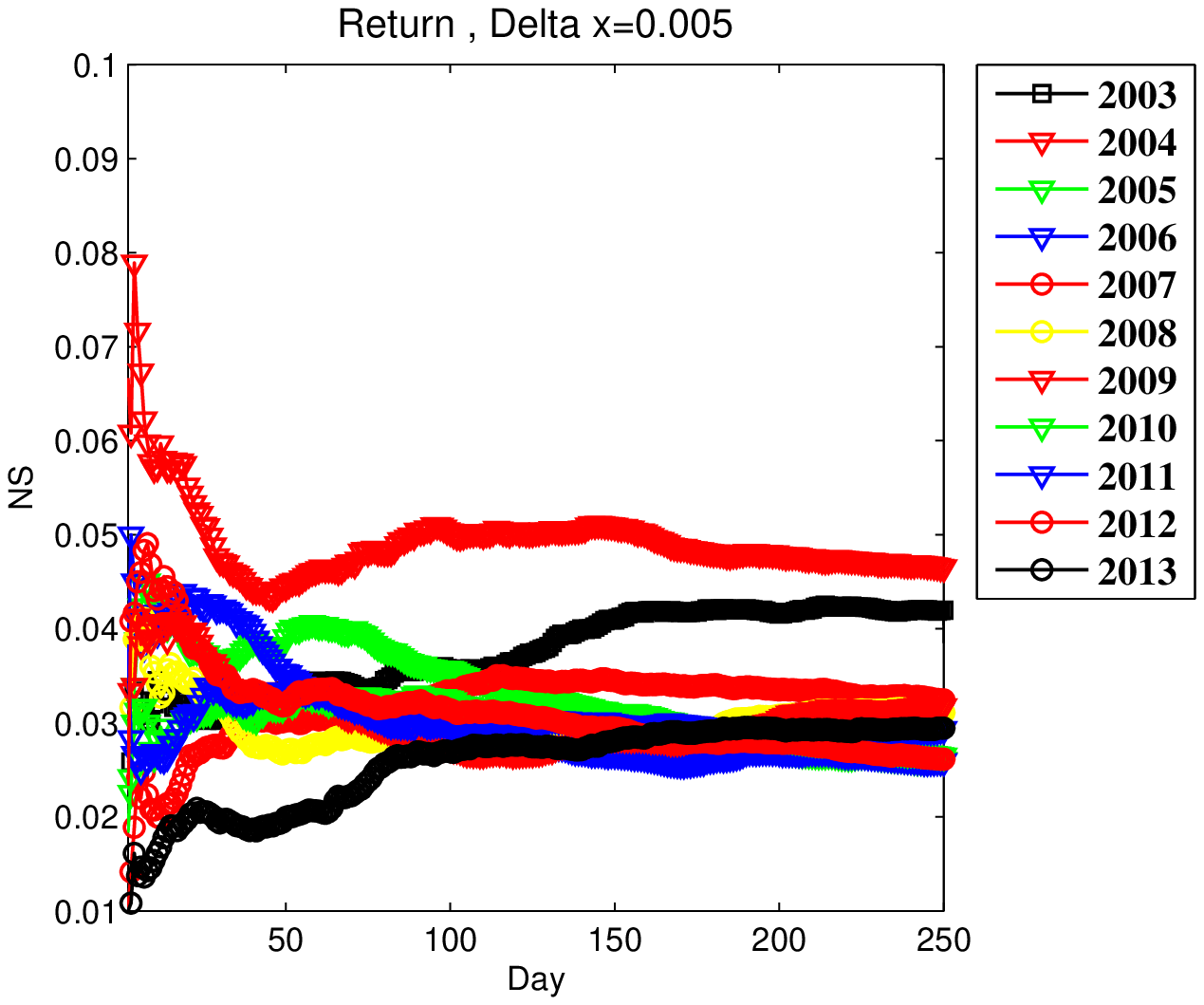}\\
  \includegraphics[width=0.3\textwidth]{NS_2003_2013_5.eps}\quad
  \includegraphics[width=0.3\textwidth]{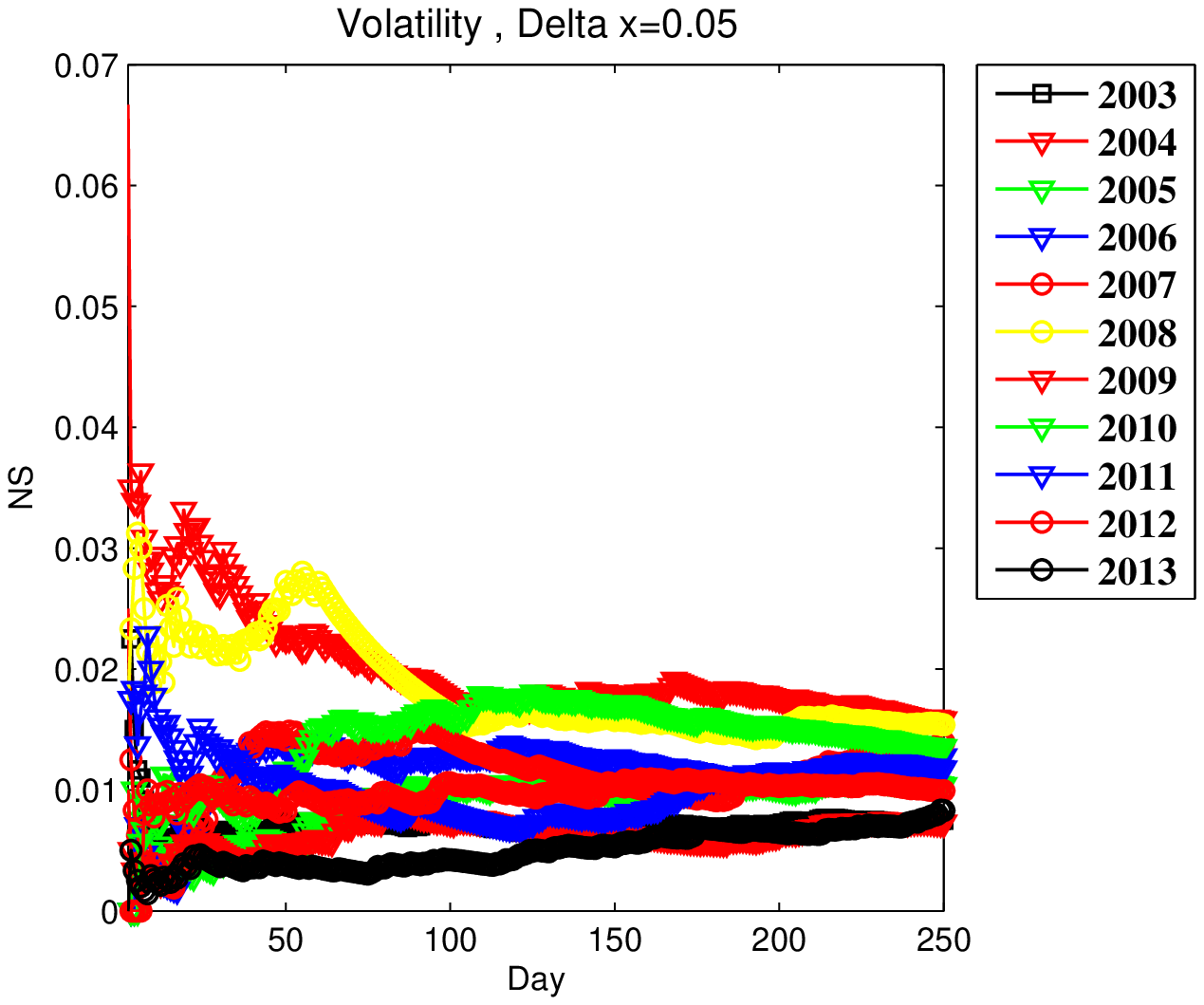}\quad
  \includegraphics[width=0.3\textwidth]{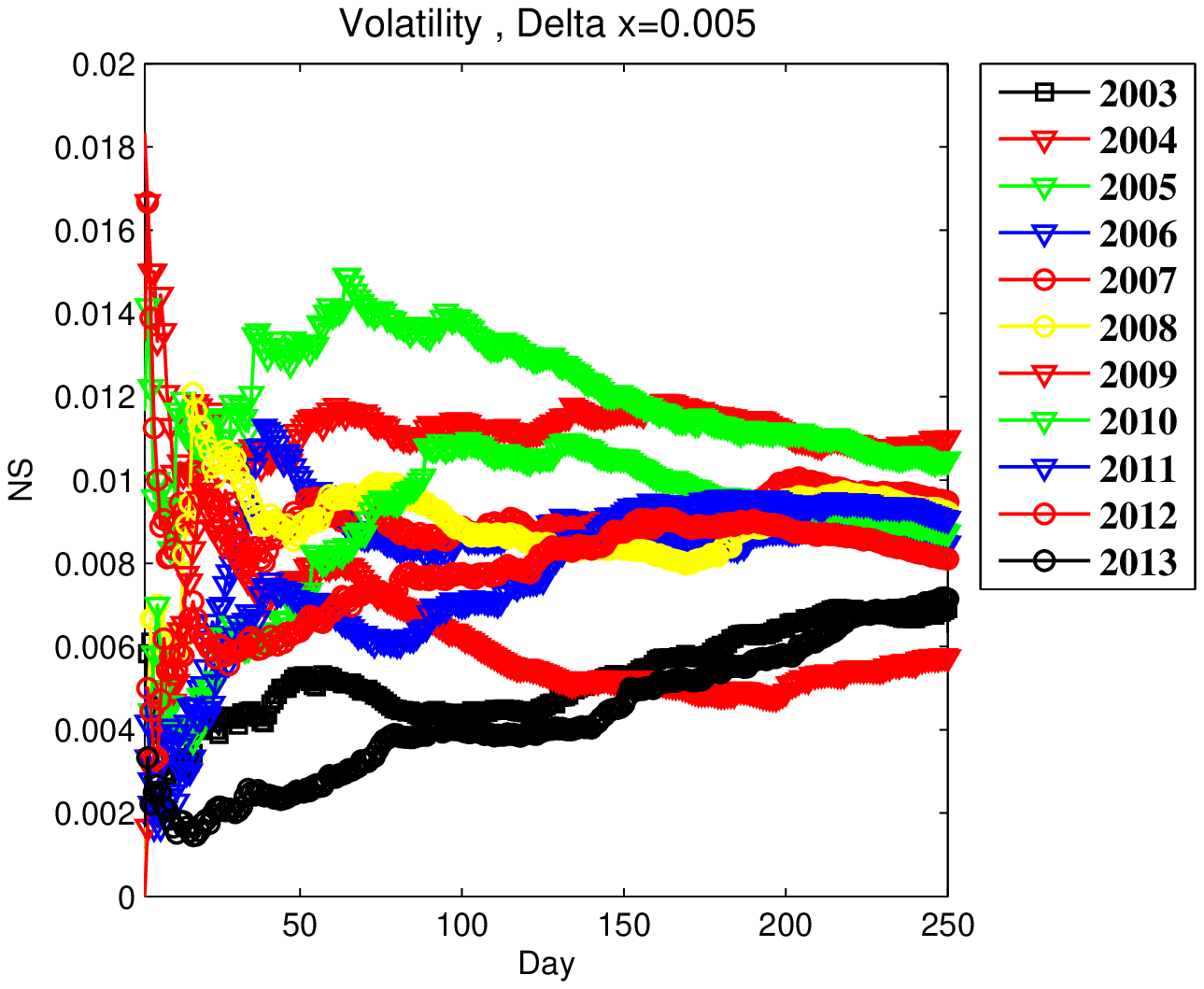}
\end{center}
\caption{  Evolution of the Normalized Mean Strength ($NS$) for the return (top panels) and volatility (bottom panels) during the period 2003-2013 and for $\Delta x=0.5$ (left panels), $\Delta x=0.05$ (central panels) and $\Delta x=0.005$ (right panels)}
\label{figEvolNSeachYears}
\end{figure}

The information encapsulated in the competetivity graphs of the return and volatility is different to the information included in the classic analysis of return and volatility themselves. While the classic and widely accepted studies about the evolution of the return and volatility measures the global changes in the markets, they do not take into account the intrinsic fluctuations of the companies that have been trading on the stock market. This fact is plotted in Figures~\ref{fig2004},~\ref{fig2008}~and~\ref{fig2013}, where the evolution of return, volatility and the  Normalized Mean Strength ($NS$) for the competitivity of the return and volatility  during 2004, 2008 and 2013 are presented. We have chosen these three years since they correspond to three quite different instances of the economical cycle. While 2004 was and expansive year for the Spanish Stock Market, 2008 was the year of the bankruptcy of Lehman Brothers Holdings Inc. that witnessed  the first whole year of one of the deepest economical crisis in Spanish Stock Market and 2013 was the last year with complete data (and it should be the staring point of the recovering process for the Spanish Economy).

According to the Official Reports of the Madrid Stock Market (see \cite{bolsamadrid}), 2004 was an excellent year for the Spanish markets. Despite the economical 	uncertainty caused by the volatility of the petroleum prices, the brilliant results of the major Spanish Companies boosted the Spanish Stock Markets that ended 2004 with profits beyond $17\%$, going ahead of   Wall Street (New York), the German Stock Market, London and Euronext.  As a consequence, this year ended with high return and low volatility, as it is shown in Figure~\ref{fig2004}, panels (a) and (b) respectively. If we have a look at the evolution of $NS$ of the competitivity of return and volatility, we can see a quite stable situation (see Figure~\ref{fig2004}, panels (c) and (d) respectively) around the values $NS\approx 0.037$ for the competitivity of the return and $NS\approx 0.01$ for the competitivity of the volatility.

\begin{figure}[h!]
\begin{center}
  \includegraphics[width=0.48\textwidth]{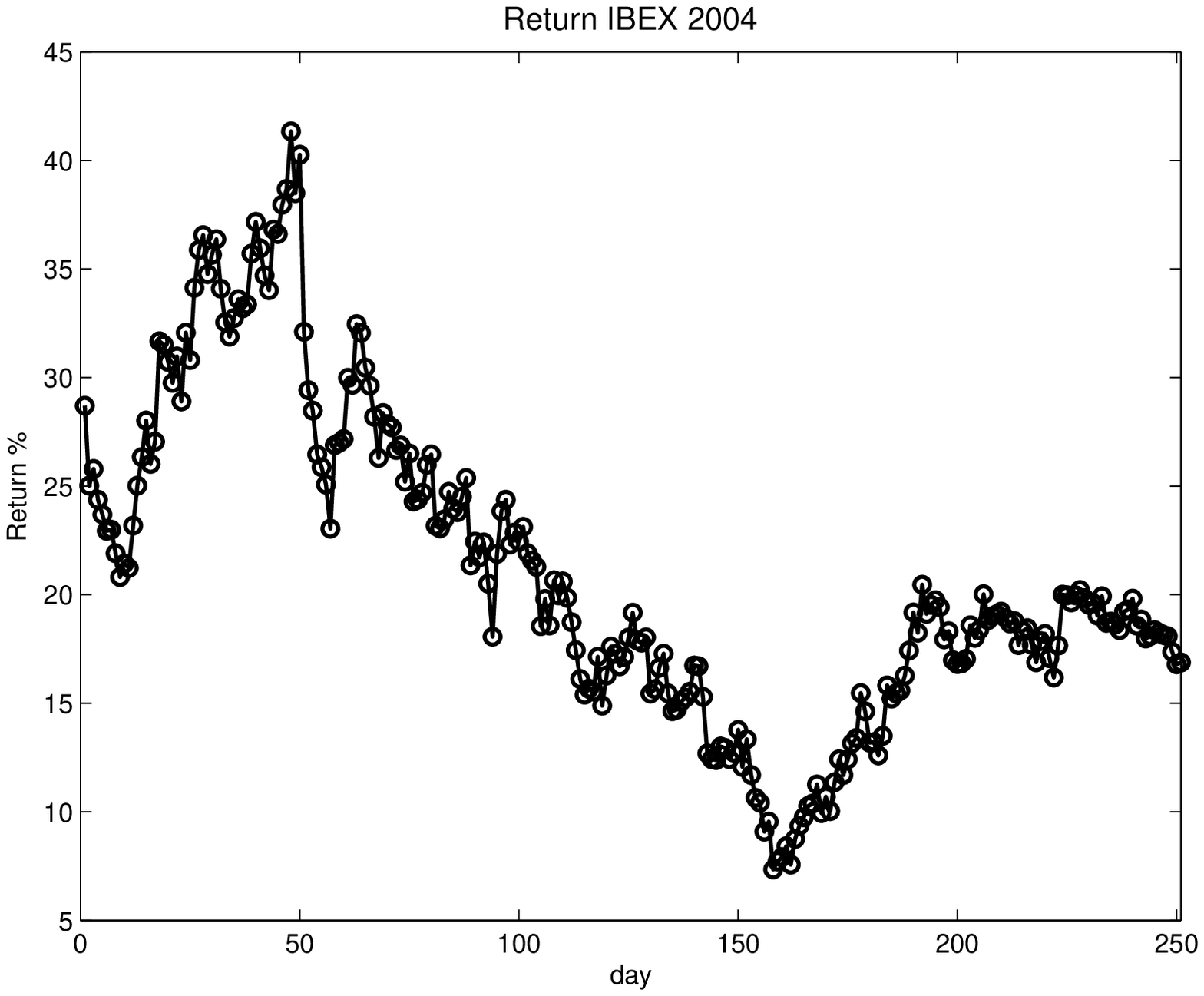}
  \includegraphics[width=0.48\textwidth]{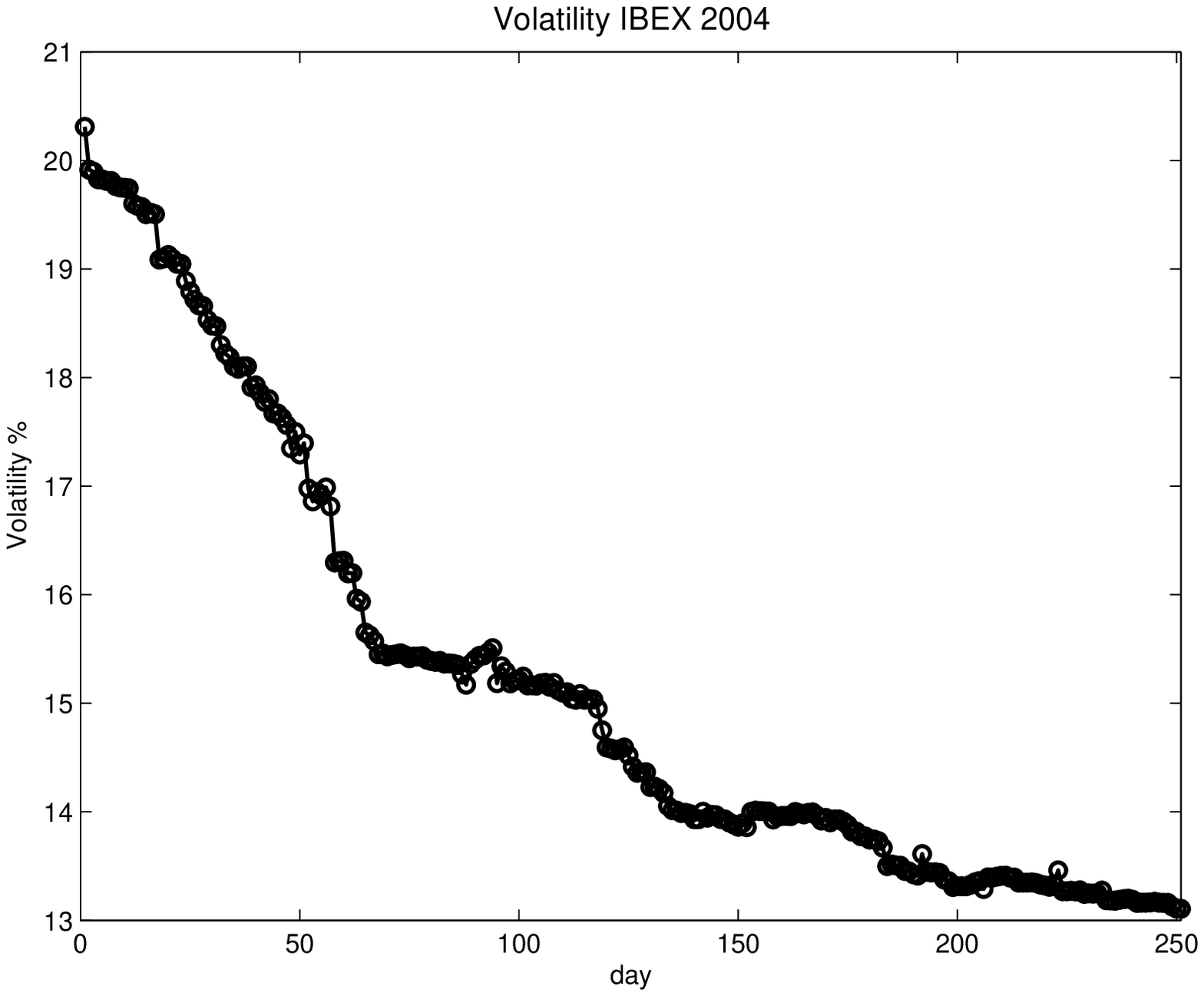}\\
  \includegraphics[width=0.48\textwidth]{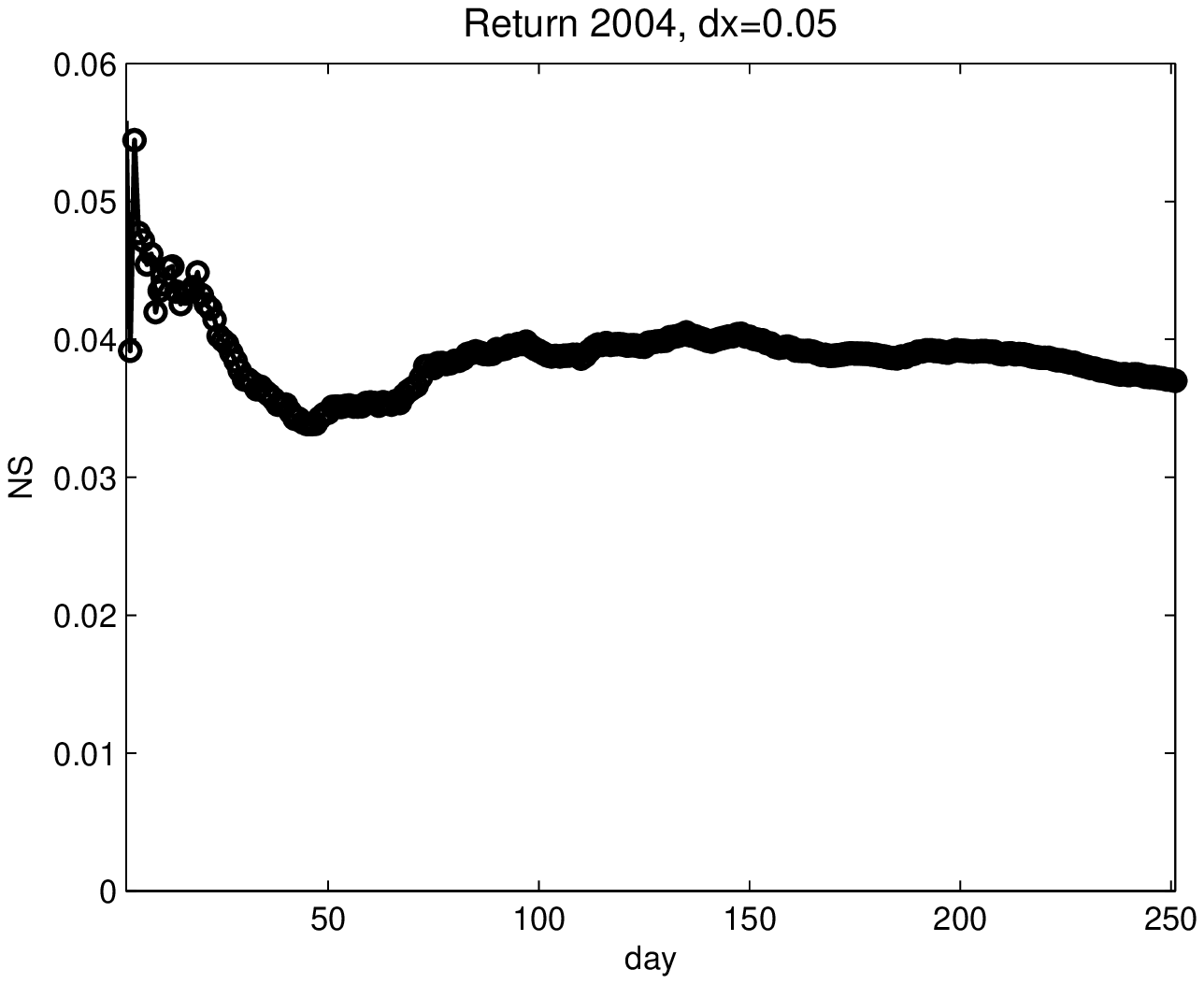}
  \includegraphics[width=0.48\textwidth]{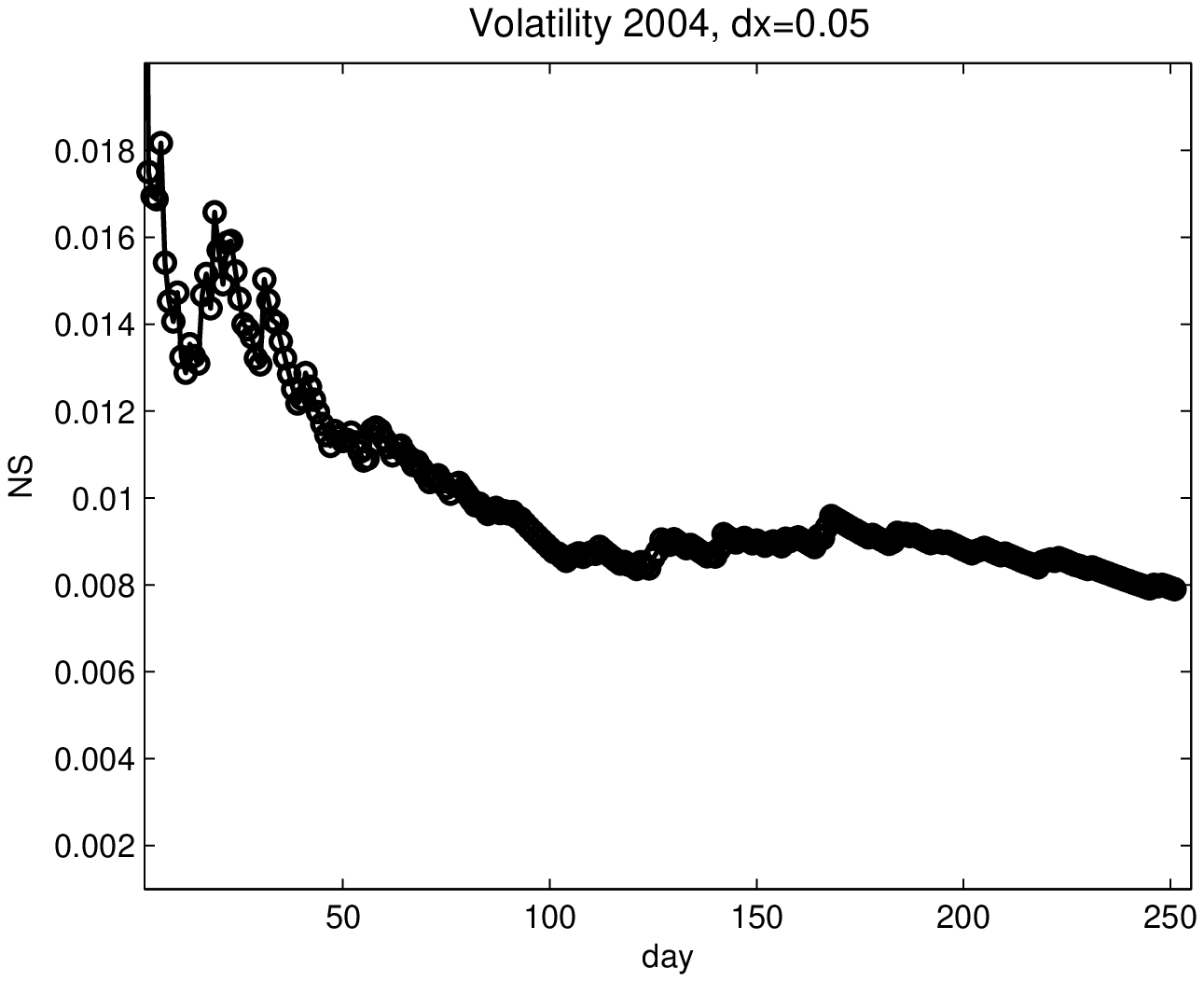}
\end{center}
\caption{Evolution of the Return of IBEX (panel (a)) Volatility of IBEX (panel (b)) Normalized Mean Strength ($NS$) for the competitivity of the return (panel (c)) and for the competitivity of the volatility (panel (d)) during 2004 and for $\Delta x=0.05$}
\label{fig2004}
\end{figure}

On a completely different scenario, 2008 was the worst year for the Spanish Stock Markets. The period from the summer 2007 to the latest weeks of 2008 was one of the most negative and complex moments of the financial recent history worldwide. The economical crisis originated in the American Property and Financial Markets infected the worldwide financial markets setting off mistrust on the economical agents. Official Reports of the Madrid Stock Market (see \cite{bolsamadrid}) point out that the IBEX35 drops around $40\%$ this year and the volatility suffered an amazing increase getting their maximal values in 20 years. These facts are shown in Figure~\ref{fig2008}, panels (a) and (b). The evolution of $NS$ for the competitivity graphs of return and volatility suffered a significant change in their tendencies along the first two months of 2008. On the one hand, $NS$ for the competitivity graphs of return fell at the end of February but tried recovering along the rest of the year, getting values around $NS\approx 0.023$ (see Figure~\ref{fig2008}, (c)), which means that fluctuations of the ranking obtained for the return were reduced during the first two months of the year but this ranking changed more and more since then. On the other hand, $NS$ for the competitivity graphs of volatility started the year quite erratic, but it dropped between day 60 and 100, and it couldn't recover its value during the rest of the year. This fact can be understood as the fact that the ranking obtained for the volatility was very rigid and stable  from April until the end of the year with values  $NS\approx 0.008$ (see Figure~\ref{fig2008}, (d)).

\begin{figure}[h!]
\begin{center}
  \includegraphics[width=0.48\textwidth]{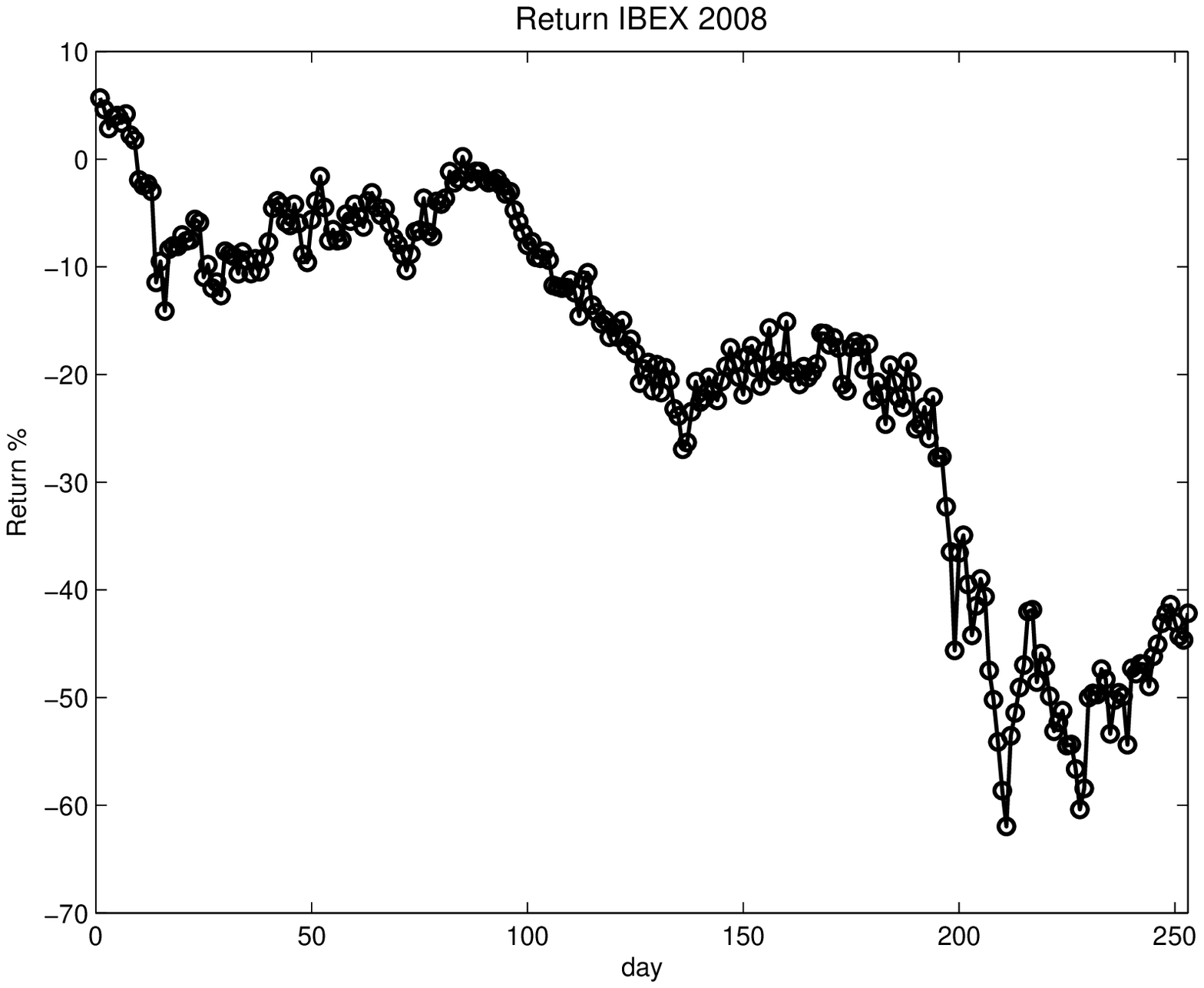}
  \includegraphics[width=0.48\textwidth]{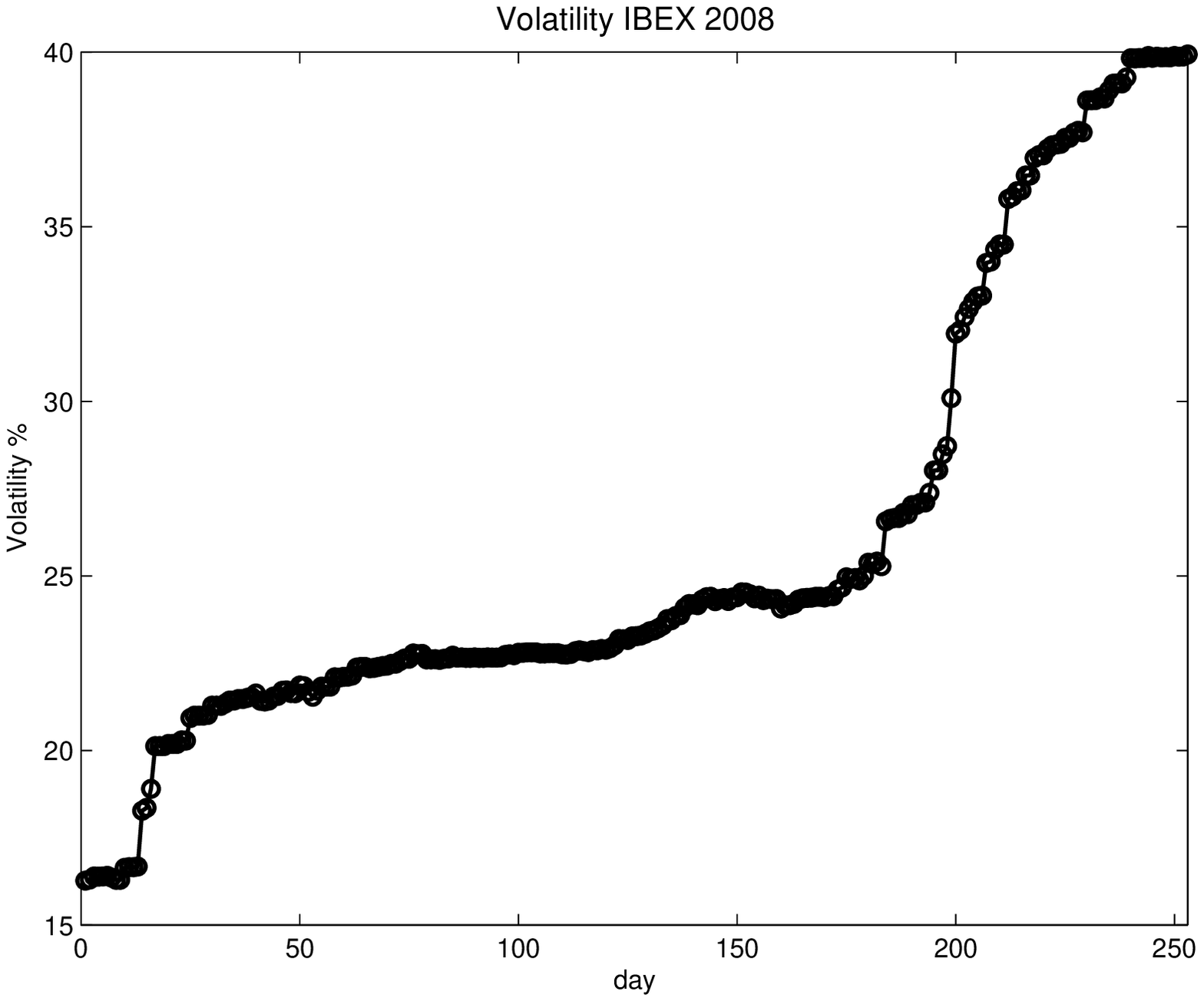}\\
  \includegraphics[width=0.48\textwidth]{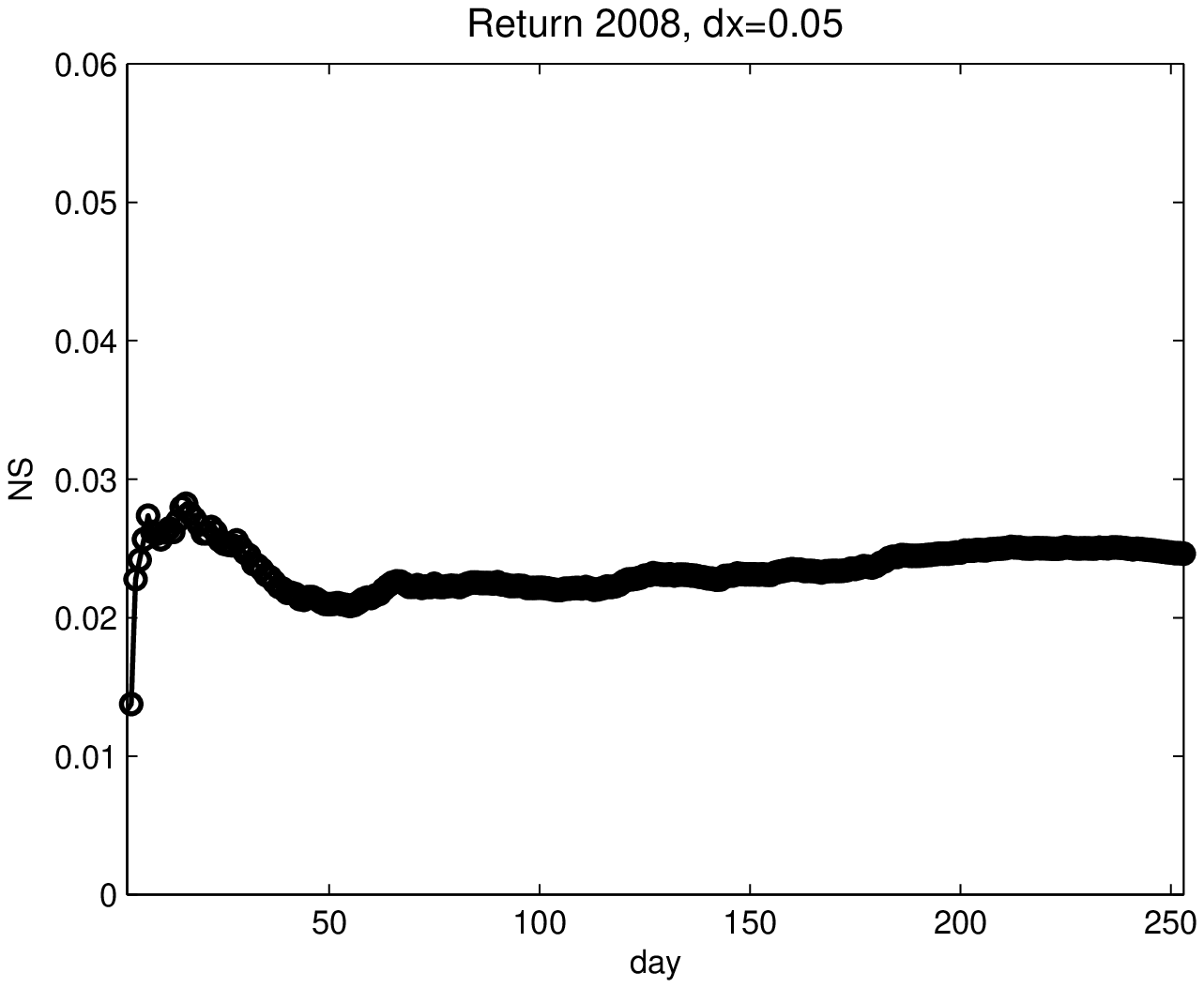}
  \includegraphics[width=0.48\textwidth]{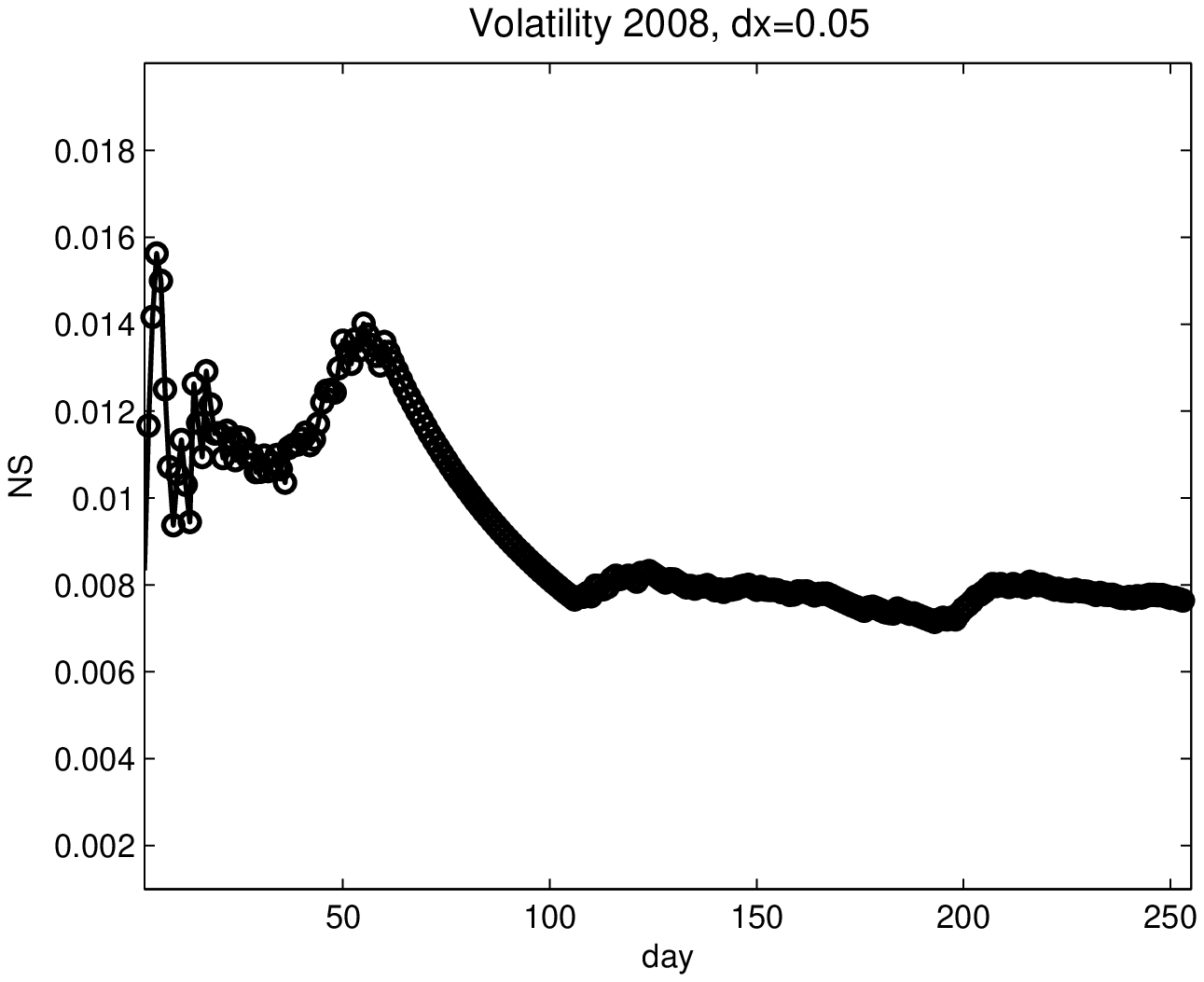}
\end{center}
\caption{Evolution of the Return of IBEX (panel (a)) Volatility of IBEX (panel (b)) Normalized Mean Strength ($NS$) for the competitivity of the return (panel (c)) and for the competitivity of the volatility (panel (d)) during 2008 and for $\Delta x=0.05$}
\label{fig2008}
\end{figure}

Finally, along 2013 the economical tensions in the international financial markets remitted, which improved the expectations about the global economic growth. The Spanish Stock Market followed these tendencies and gave positive results (with a return around $21,5\%$), outstripping a black period of three years of deep looses.  Figure~\ref{fig2013}, panel (a), shows the good evolution of return along 2013, that continued the tendency started in July 2012, while panel (b) illustrates the fall in volatility, specially clear in the fourth-quarter of the year.  The evolution of $NS$ for the competitivity graphs of return and volatility shows a soft increase along 2013 (see Figure~\ref{fig2013}, panels (c) and (d)) with values between 0.015 and 0.025 in the case of return and between 0.001 and 0.004 in the case of volatility. If we compare these values with the corresponding for 2004 and 2008, we can deduce that the rankings obtained for the return and volatility are more stable in 2013 than in the previous scenarios.

\begin{figure}[h!]
\begin{center}
  \includegraphics[width=0.48\textwidth]{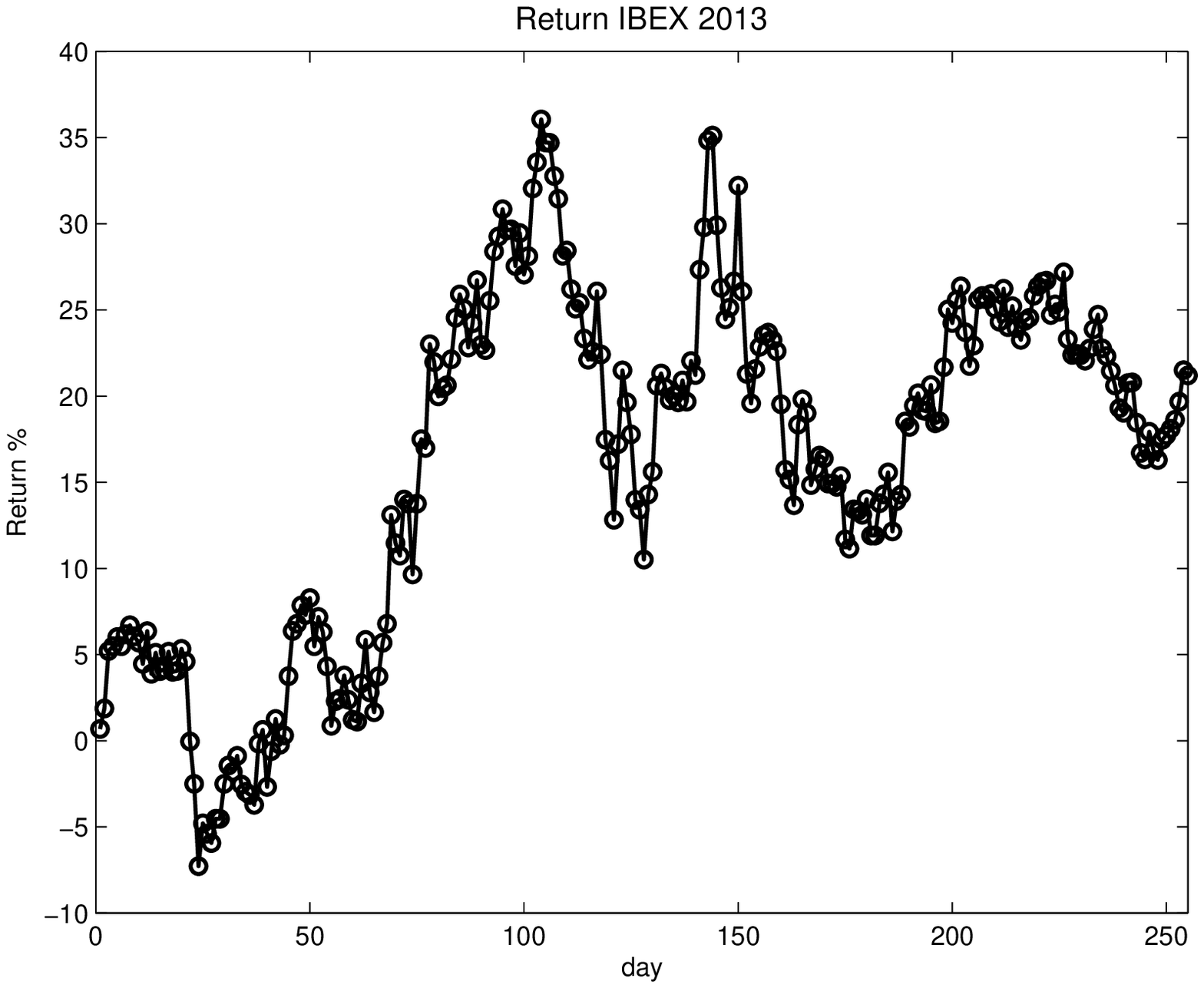}
  \includegraphics[width=0.48\textwidth]{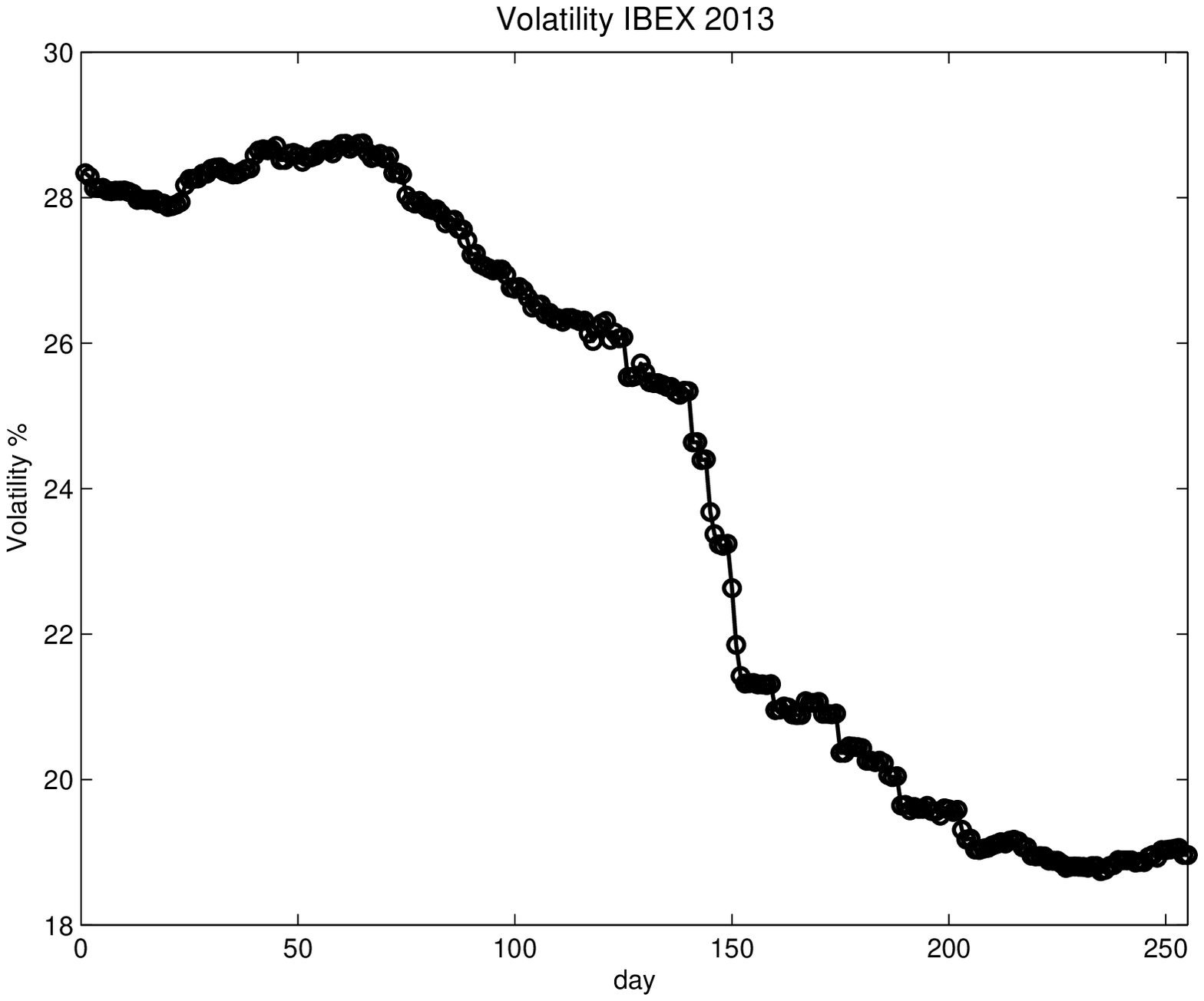}\\
  \includegraphics[width=0.48\textwidth]{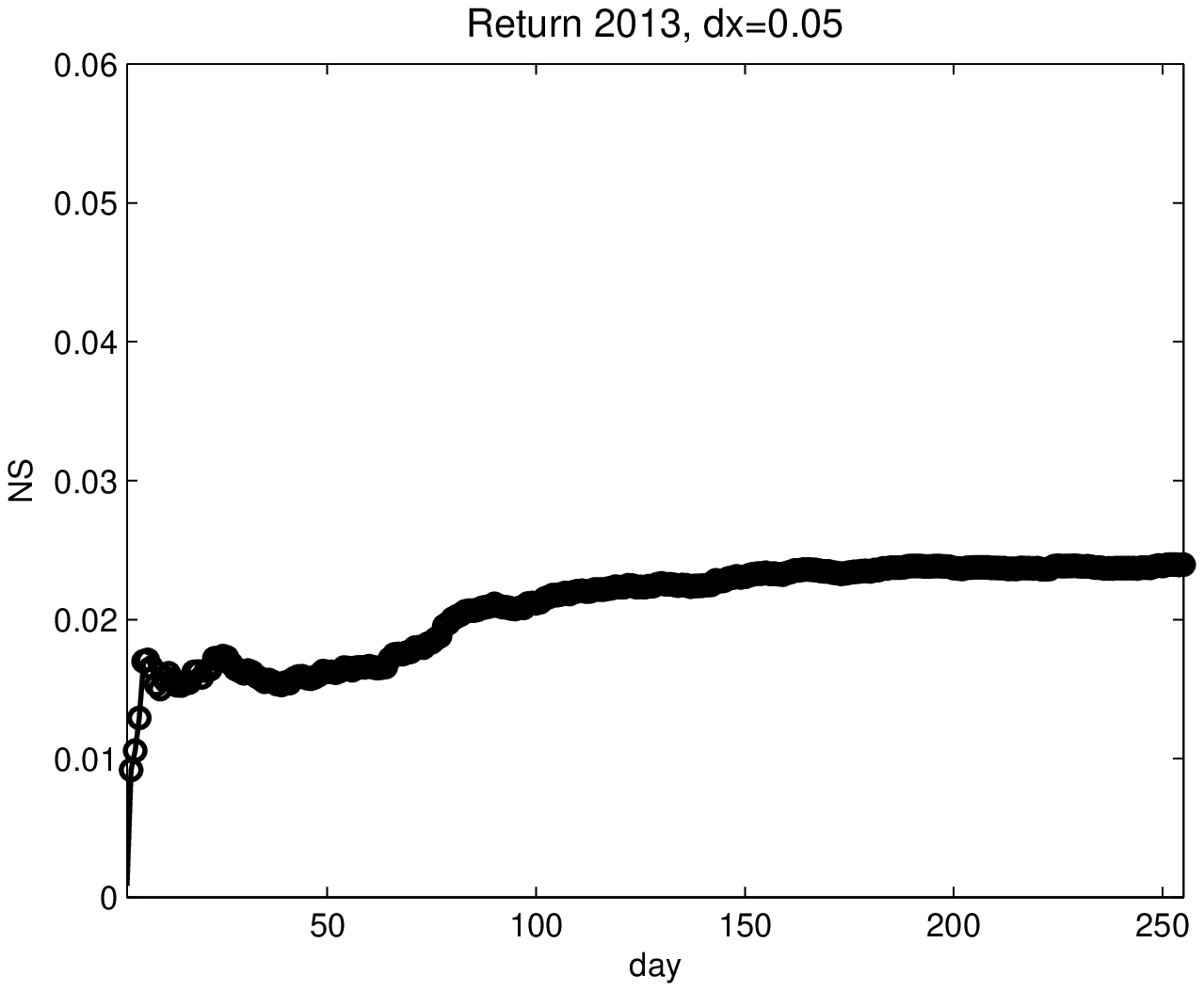}
  \includegraphics[width=0.48\textwidth]{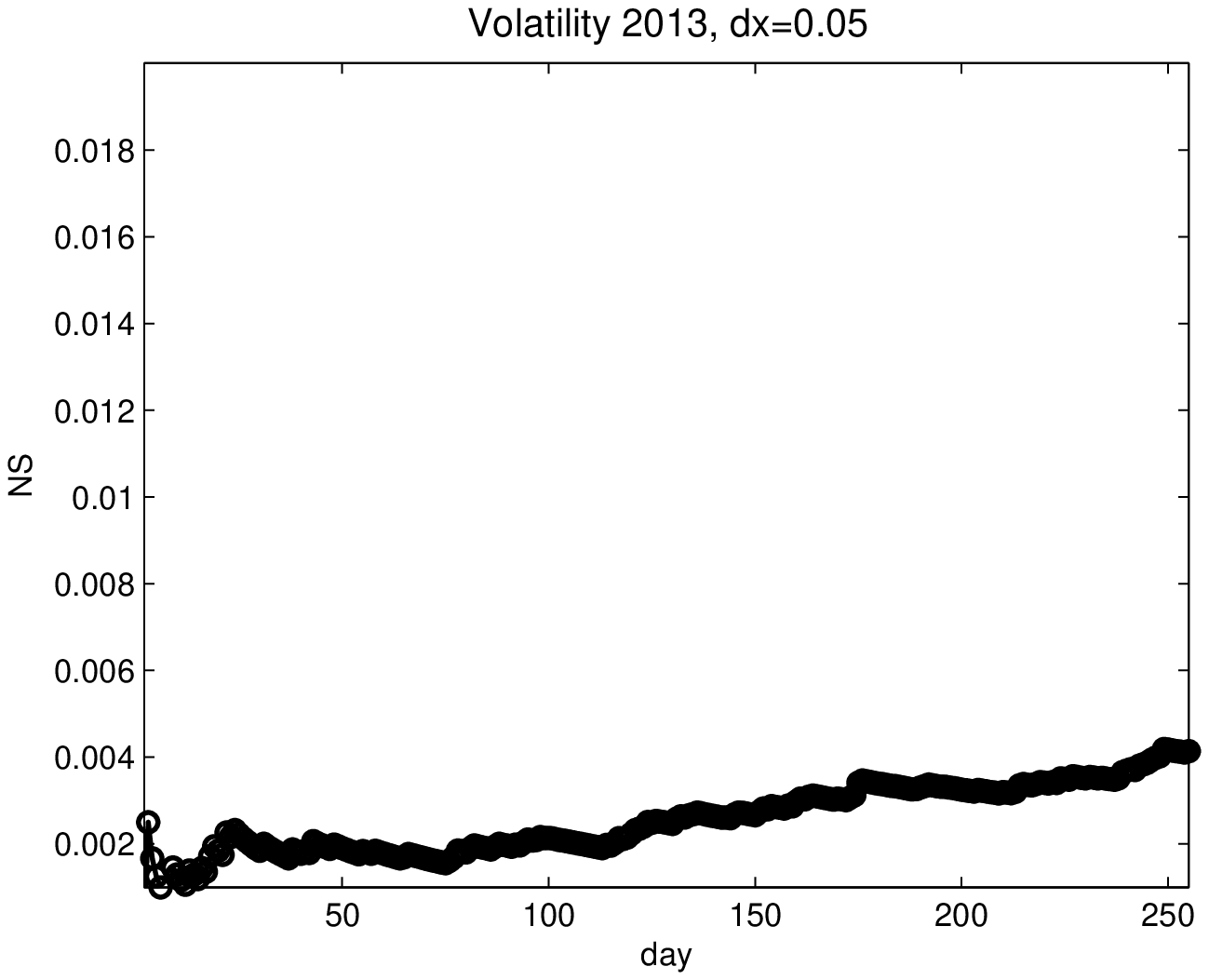}
\end{center}
\caption{Evolution of the Return of IBEX (panel (a)) Volatility of IBEX (panel (b)) Normalized Mean Strength ($NS$) for the competitivity of the return (panel (c)) and for the competitivity of the volatility (panel (d)) during 2013 and for $\Delta x=0.05$}
\label{fig2013}
\end{figure}

In any case, Figures~\ref{fig2004},~\ref{fig2008}~and~\ref{fig2013} shows that the information encapsulated in the competetivity graphs of the return and volatility is different from the information included in the classic analysis of return and volatility themselves and therefore it could be considered in order to give a sharper analysis of the stock markets. Further studies can be done by considering other structural parameters of the competitivity graph, including clustering or modularity, among others, and they would give information about the fluctuations in the rankings of the stock markets.

In addition to the analysis of the projected competitivity graph (including $NS$, degree distribution, clustering, modularity,...), the multiplex nature of the network it is also remarkable and further information can be obtained if we have a look at the structure and correlations between different layers. Figures~\ref{figRent2013}~and~\ref{figVola2013} illustrate these phenomena.

\begin{figure}[h!]
\begin{center}
  \includegraphics[width=0.24\textwidth]{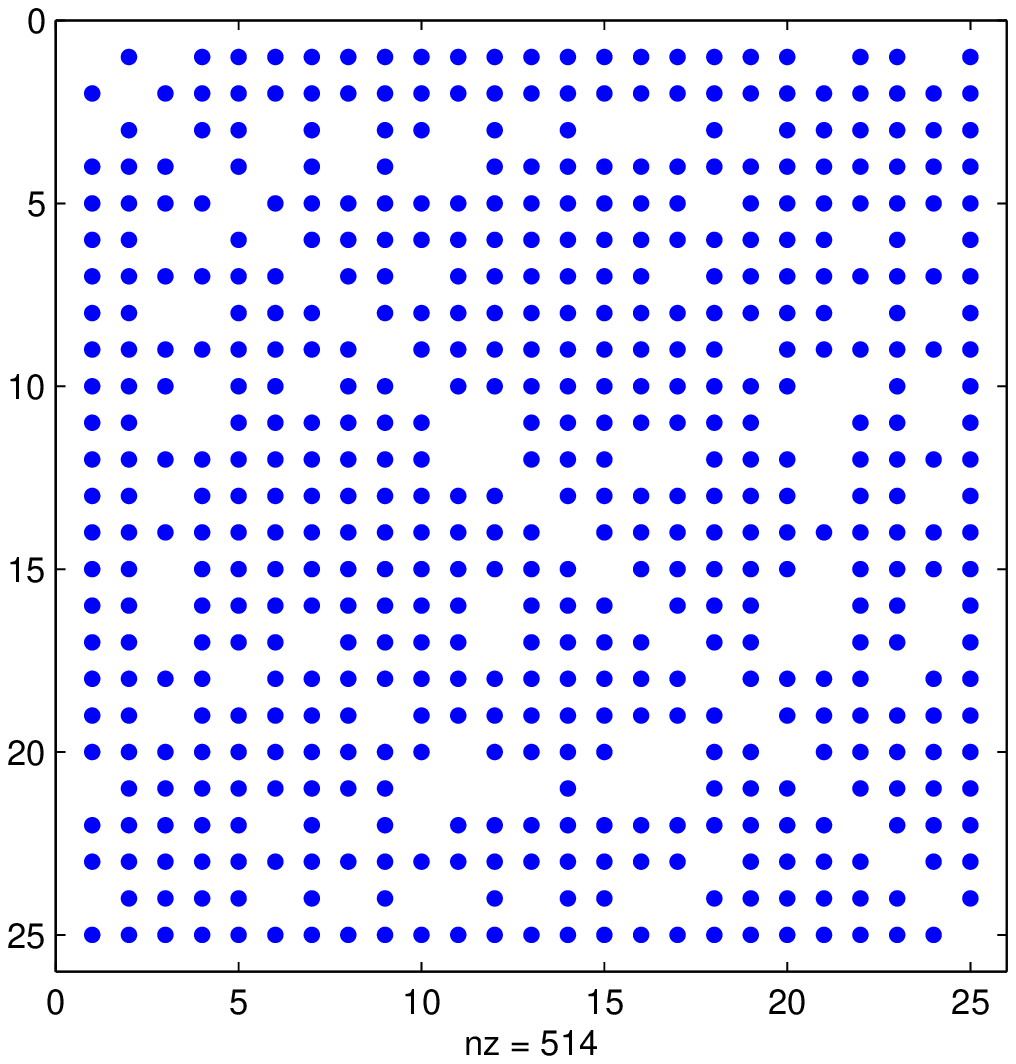}
  \includegraphics[width=0.24\textwidth]{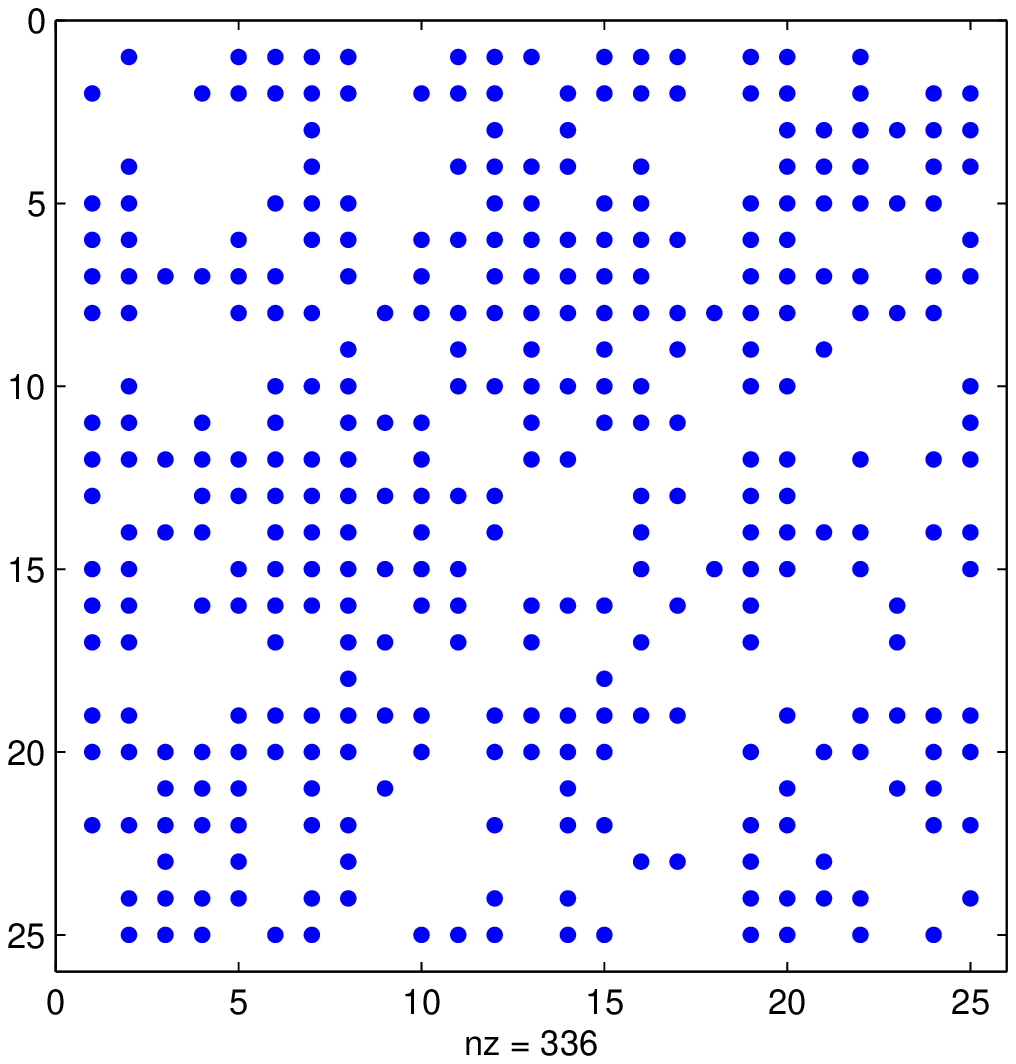}
  \includegraphics[width=0.24\textwidth]{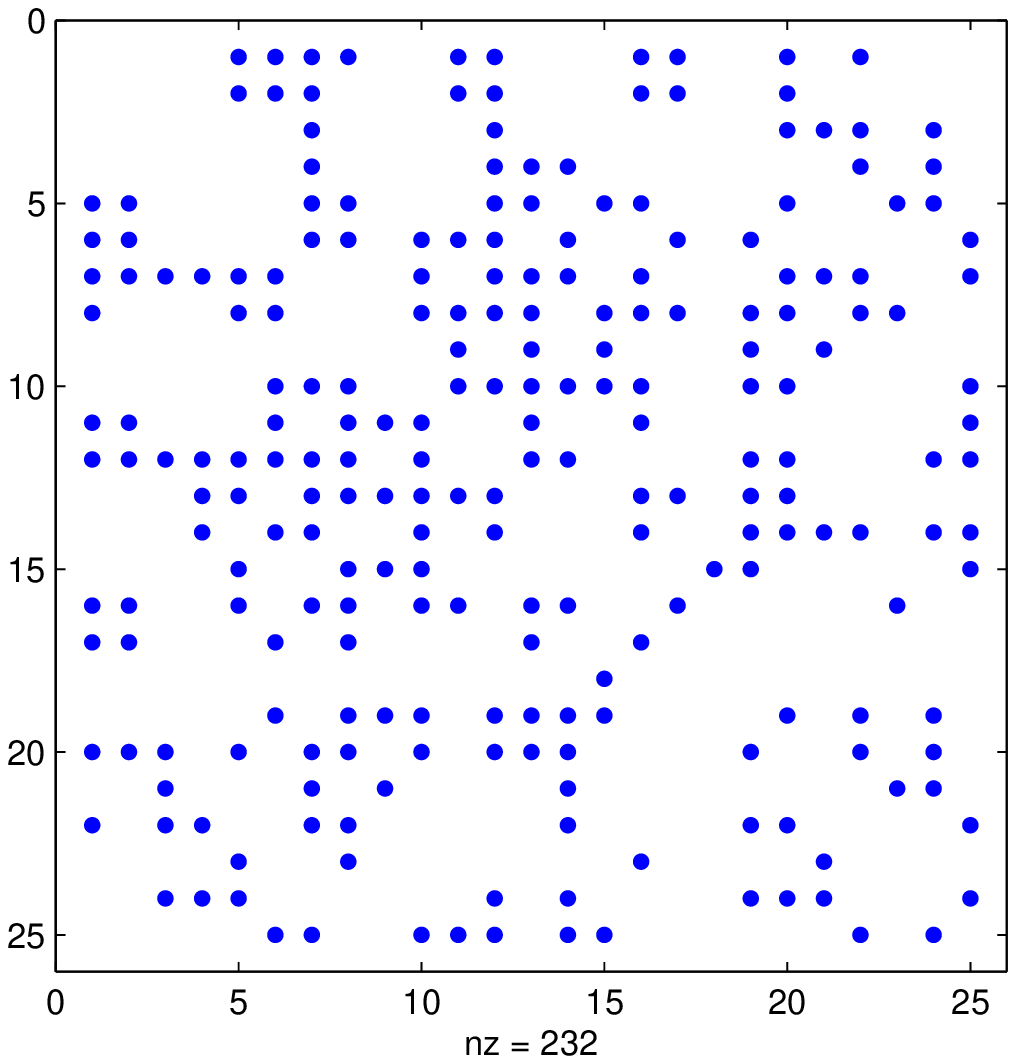}
  \includegraphics[width=0.24\textwidth]{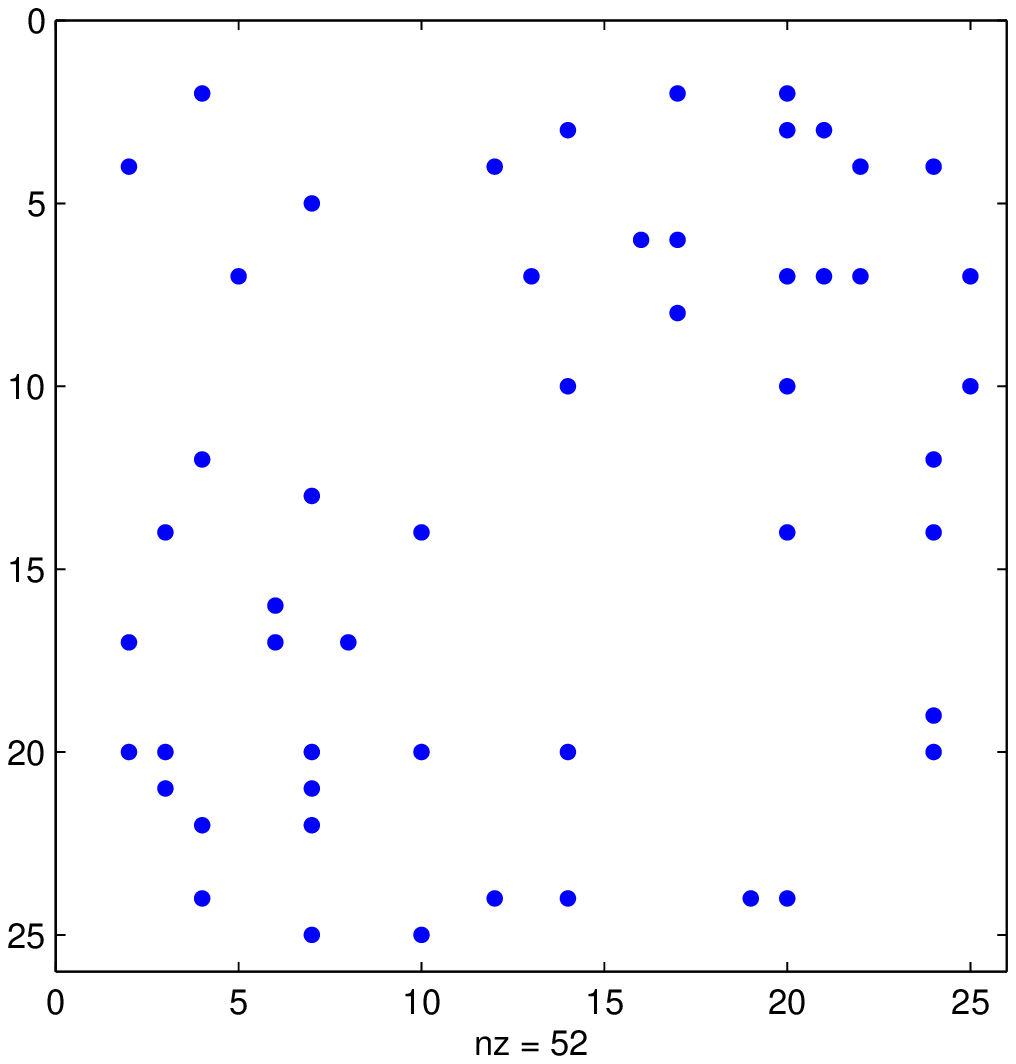}
\end{center}
\caption{The adjacency matrix of each layer of the multiplex competitivity graph of the return for the IBEX along 2013 with $\Delta x=0.005$, i.e. the crossing layer $\ell_1$ (panel (a)), the semi-crossing layer $\ell_2$ (panel (b)), the long-term-crossing layer $\ell_3$ (panel (c)) and the tie layer $\ell_4$ (panel (d))}
\label{figRent2013}
\end{figure}

Figure~\ref{figRent2013} presents the adjacency matrix of each layer of the competitivity graph obtained from the ranking of the return along 2013. Note that since the crossing layer $\ell_1$ (Figure~~\ref{figRent2013}, panel (a)) is quite dense, the number of crossings is very high, while the number of ties appearing in the tie layer $\ell_4$ (Figure~~\ref{figRent2013}, panel (d)) is quite low. The layer of long-term-crossing $\ell_3$ (panel (c)) it is always a subgraph of the semi-crossing layer $\ell_2$ (panel (b)), but in this case the two layers are not too similar.

\begin{figure}[h!]
\begin{center}
  \includegraphics[width=0.24\textwidth]{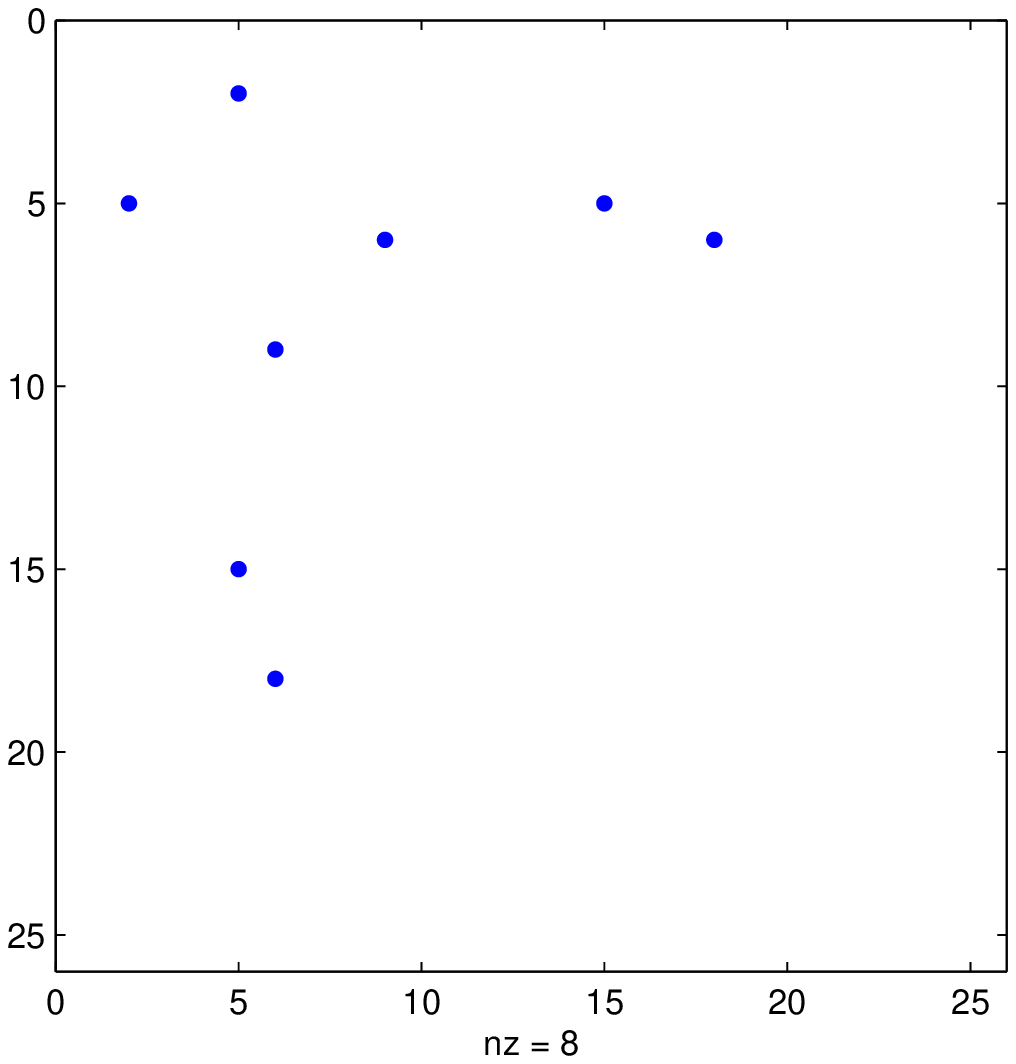}
  \includegraphics[width=0.24\textwidth]{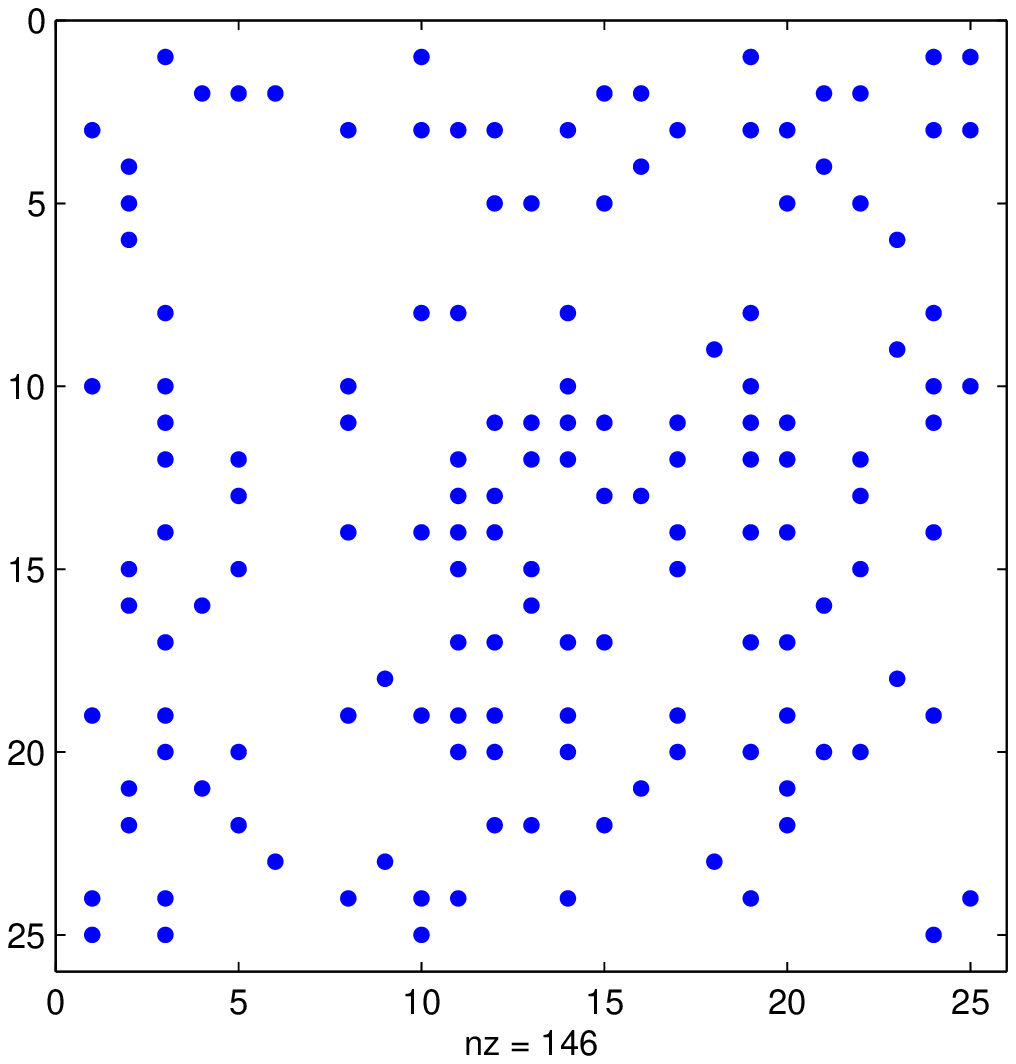}
  \includegraphics[width=0.24\textwidth]{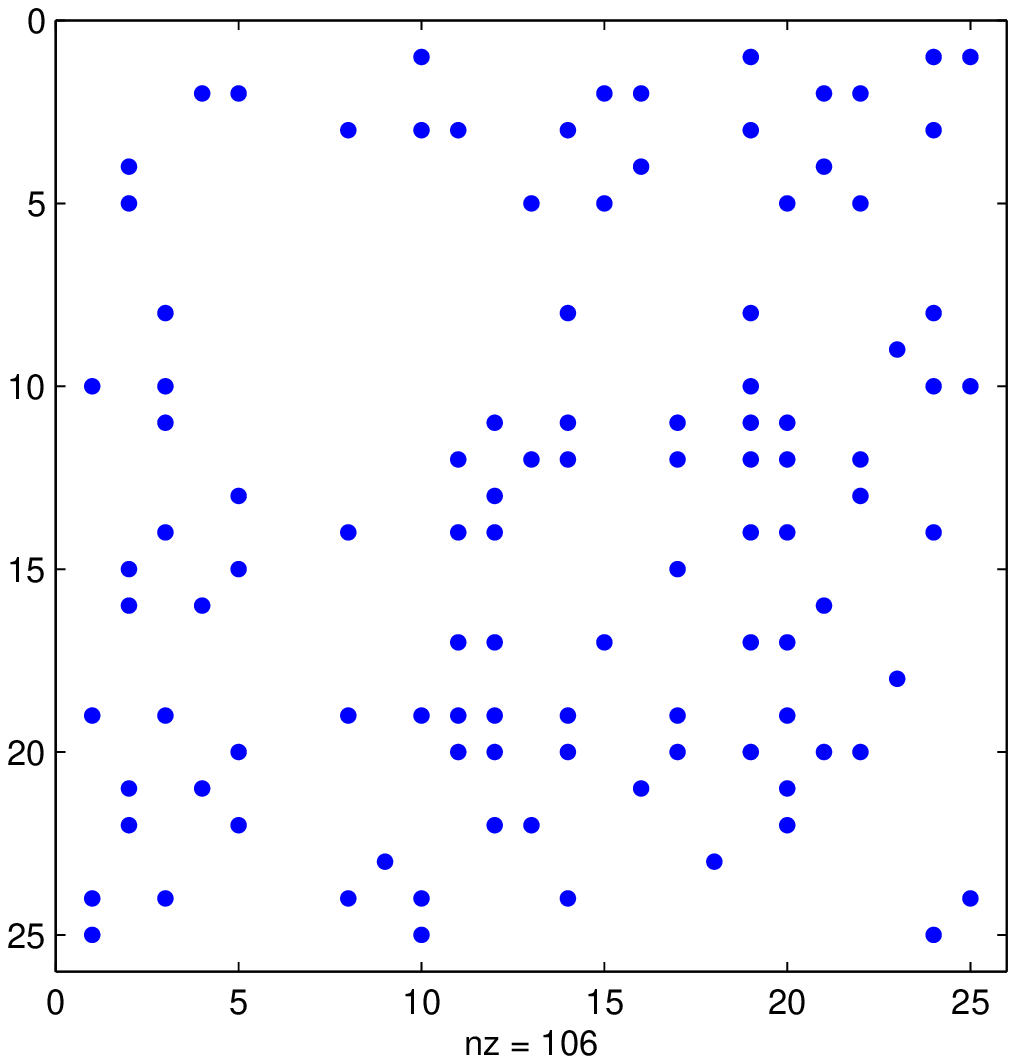}
  \includegraphics[width=0.24\textwidth]{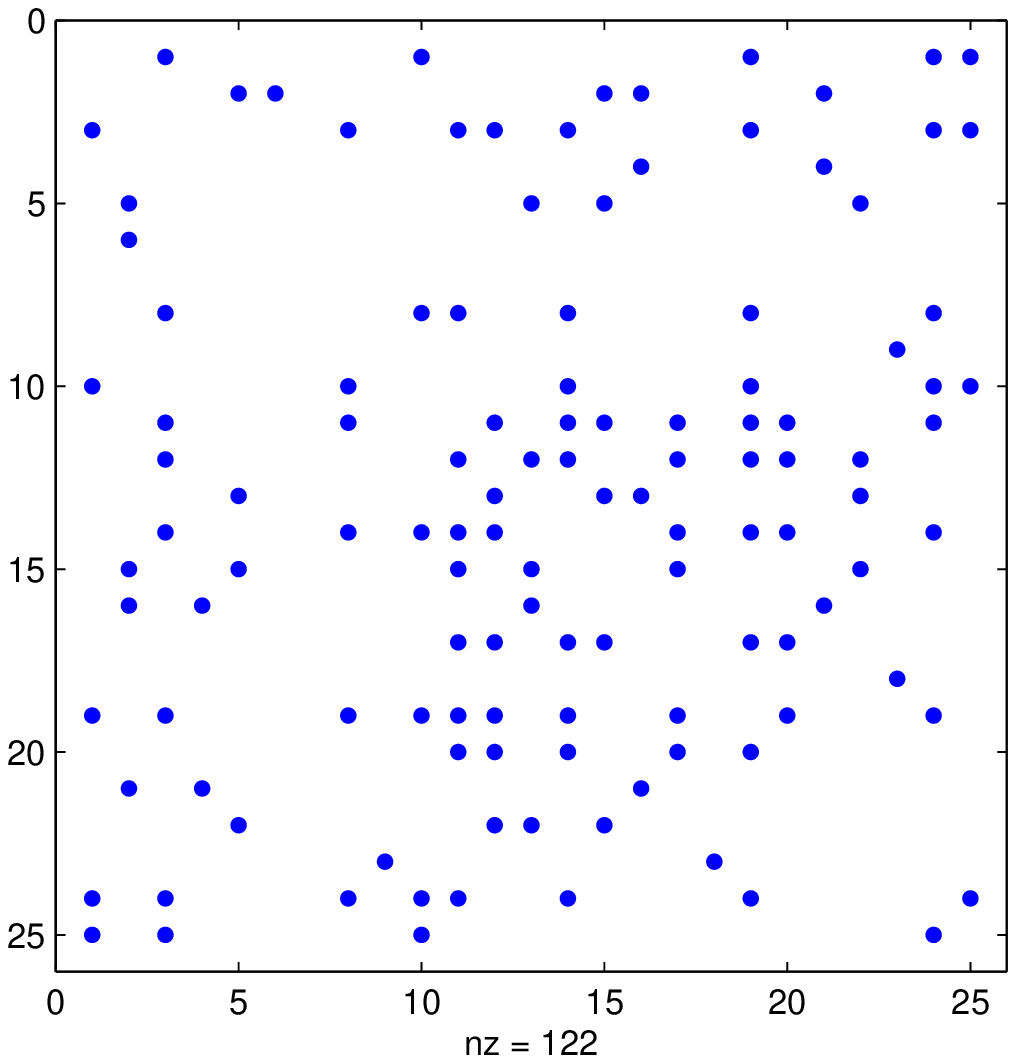}
\end{center}
\caption{The adjacency matrix of each layer of the multiplex competitivity graph of the volatility for the IBEX along 2013 with $\Delta x=0.005$, i.e. the crossing layer $\ell_1$ (panel (a)), the semi-crossing layer $\ell_2$ (panel (b)), the long-term-crossing layer $\ell_3$ (panel (c)) and the tie layer $\ell_4$ (panel (d))}
\label{figVola2013}
\end{figure}

Figure~\ref{figVola2013}  shows a similar analysis for the structure if the multiplex competitivity graph obtained from the ranking of the volatility along 2013, but the behaviour is different from the case of return for the same year (Figure~\ref{figRent2013}). In this case there is a very low number of crossings, since the crossing layer $\ell_1$ (Figure~~\ref{figVola2013}, panel (a)) is almost empty, while the number of ties is much higher than in the case of return. In addition to this there are stronger correlations among the semi-crossing layer $\ell_2$ (panel (b)), the layer of long-term-crossing $\ell_3$ (panel (c)) and the tie layer $\ell_4$ (panel (d)). Note that the multiplex behaviour of the multiplex competitivity graphs obtained from the ranking of the return and volatility are different, exhibiting these differences in multiplex properties.

\section{Conclusions and future work}

The classic idea of comparing rankings by using the number of crossings -or permutations- that occur when going from one ranking to the other, can be straightforward extended to compare series of several rankings when one focus on pairs of consecutive rankings. This work was already done in~\cite{Cri13} by introducing a tool from Complex Networks: the so-called {\sl Competitivity Graph}. In this paper, we have extended this procedure for the case when ties are allowed in each ranking. Given that the treatment of ties in rankings has a long tradition in the literature we have adopted (and extended) the procedures defined in~\cite{Fa06} to compute Kendall's distance between rankings when ties are presented. We have used the technique of multiplex networks and we have considered the crossings distinguishing when the crossing occurs between adjacent rankings (we take account of this on layer 1, or crossing layer), when the crossing is a change from tied to untied or vice versa (layer 2, or semi-crossing layer), when the crossing occurs after a period of consecutive ties (layer 3, or long-term crossing) and when there are a situation of tie among two elements in two consecutive rankings (layer 4, or tie layer) see Figure~\ref{figRent2013}.  We have denoted this network as the multiplex evolutive competitivity network associated to a family of rankings with ties.  We have also contributed with two theoretical issues. On the one hand, we have introduced a technique to convert a family of scores (or ratings) to a family of rankings by introducing the concept of approximated ties: that is, a way to define a tie when two elements have a score close enough depending on a certain threshold. On the other hand, we have shown theoretical results relating our concept of Normalized Mean Strength ($NS$) with the corresponding background concept of evolutive Kendall's distance. As an application of the introduced concepts we have shown an analysis of the Spanish stock market for 25 values during the period 2003-2013. The main conclusions of this analysis are the following:

\begin{itemize}
\item Since the number of crossings depends on the number -and the type- of ties presented in the rankings, the first step in an analysis of a series of scores is to study how the variations in the threshold for ties (that we denote $\Delta x$)  affects the number of ties (see Figure~\ref{figtiesall}). By using the data of return for year 2003 we have shown (see Figure~\ref{figFN1}) that our treatment of crossings is  equivalent to the model of~\cite{Fa06}  in the two limit cases: when no ties are present (that is, for small values of $\Delta x$) or when all the elements are tied (that is, when $\Delta x$ is very large) .

\item During the period 2003-2013 the maximum competitivity regarding the return values corresponded to year 2004 (see Figure~\ref{figEvolNSvsYears} and Figure~\ref{figEvolNSeachYears}, for $\Delta x=0.05$ or $0.005$, where there are not too many ties). That is, a year that corresponded to a growing period of the Spanish stock market, presented a lot of crossings between the values of the companies. In fact, such number of crossings ($NS=0.07$ at the end of the year, for $\Delta x=0.05$) has not reached yet, not even in the year 2013 that was also a growing period but with a level of crossings of $NS=0.042$. Therefore we conclude that the competitivity level (as measured by $NS$) of 2004 has not been reached yet.

\item We have shown that although the values of the return may show a great variation (for example see Figure~\ref{fig2004}) the competitivity may rest invariant or with minimal changes: that is, stock values may go up and down preserving the number of crossings between their values: they go up and down maintaining the level of competitivity of the whole system. That is why we say that our information gives a different information than the usual one.

\item Regarding volatility, that is a measure of  the dispersion of the stock along a year, we recall that a crossing in volatility means  that stocks more volatile than other become less volatile and vice versa. In year 2004 (Figure~\ref{fig2004}, down right) we see that the crossings in volatility tend to decrease, while the values of volatility decrease (Figure~\ref{fig2004} up right). In the year 2008 (a crisis year) volatility was growing up (Figure~\ref{fig2008} up right) while the crossings in volatility grew up from days around 40 to 60 in that year (reaching values of $NS$ around 0.014 such as in 2004) and then the crossings fall down to values of $NS$ around 0.008 such as the values of 2004.  Therefore we see that two years with different trend in volatility show the same trend in $NS$. Observing Figure~\ref{fig2013} down right, we see that in the year 2013 the volatility was decreasing while the $NS$ was slowly growing up but without reaching the values of the year 2004.
\end{itemize}
A comparison of the evolution of $NS$ of the competitivity graphs obtained for the return and volatility along years 2004, 2008 and 2013 is presented in Figure~\ref{fig:comparison}.

\begin{figure}[h!]
\begin{center}
  \includegraphics[width=0.48\textwidth]{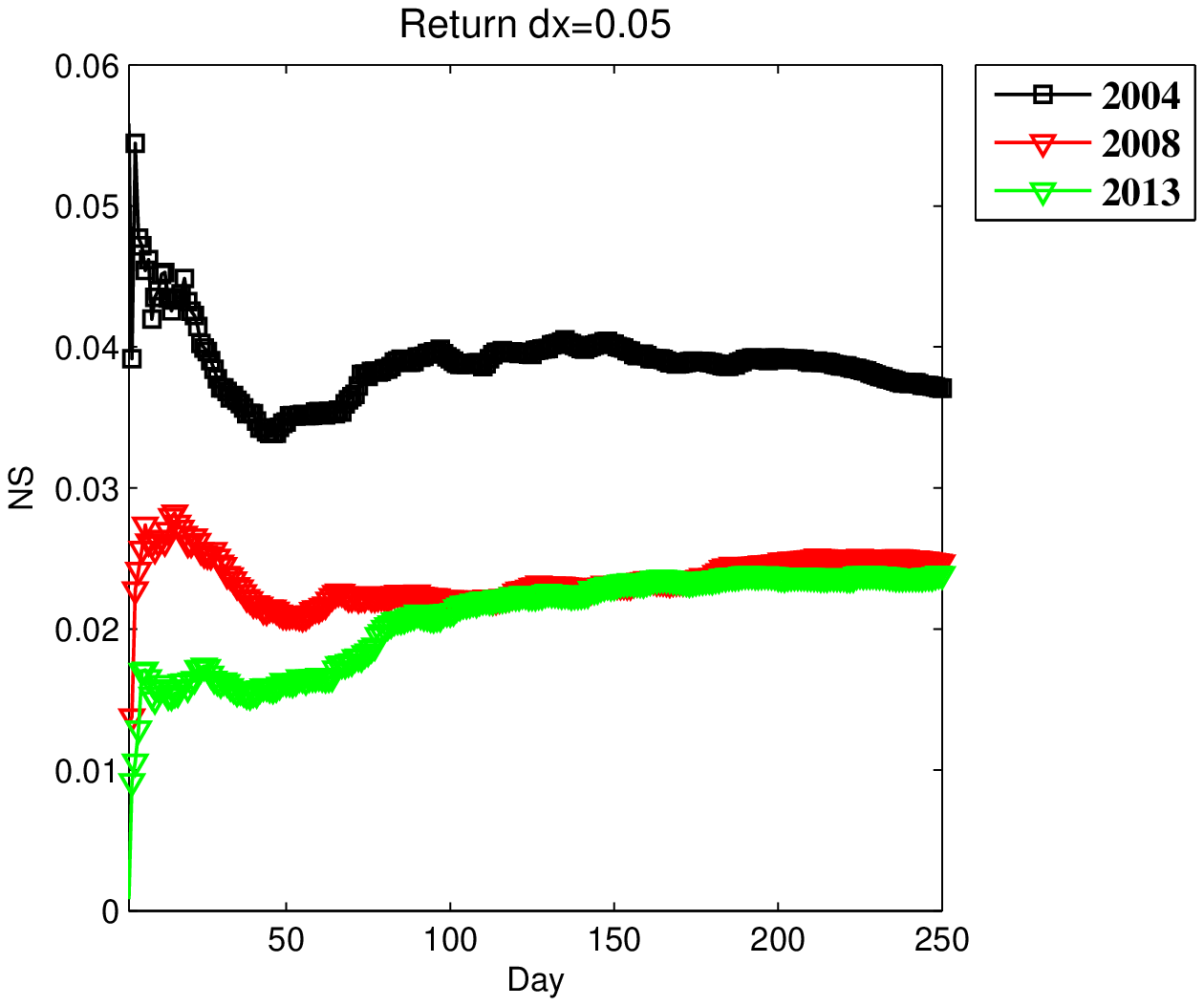}
  \includegraphics[width=0.48\textwidth]{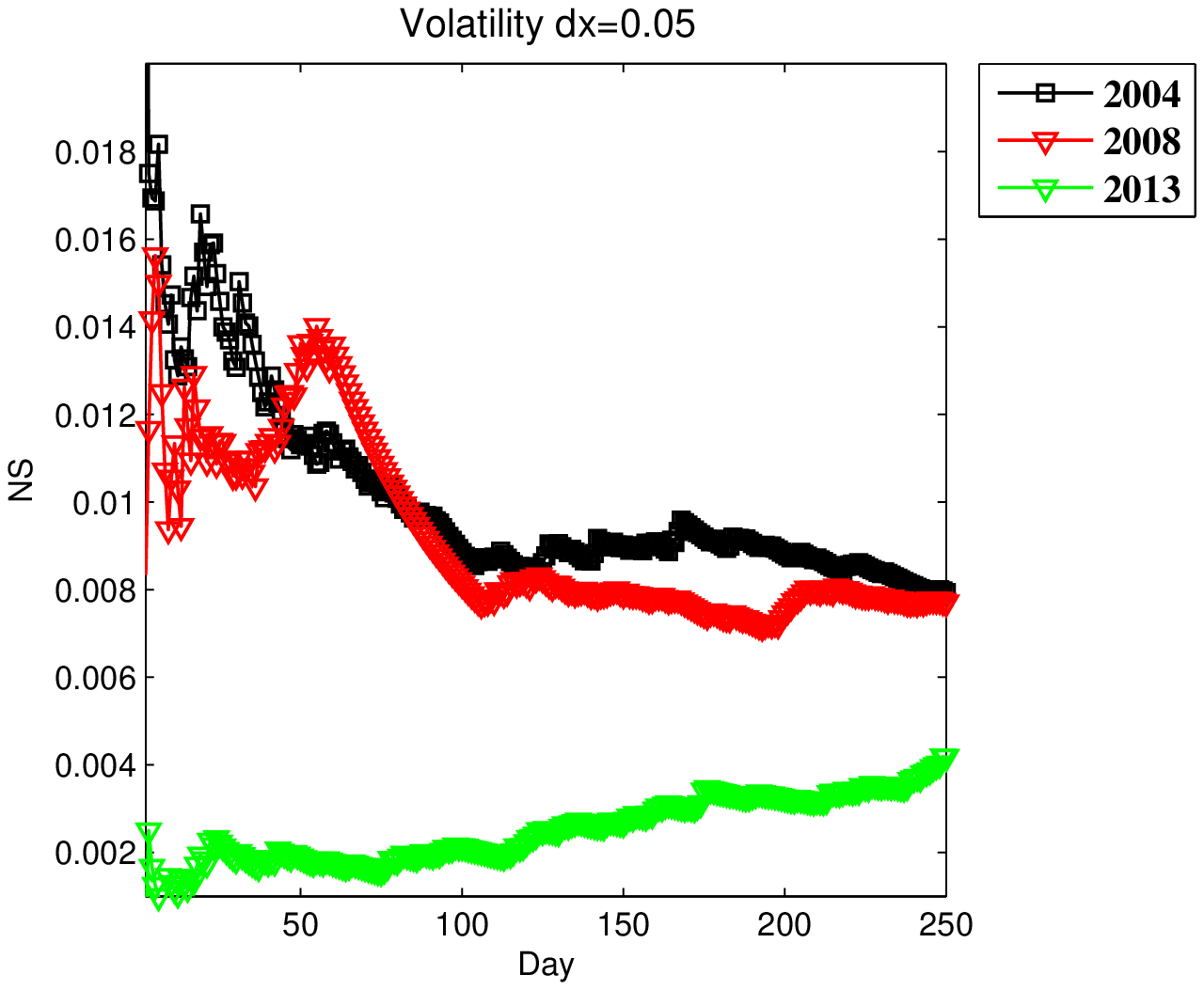}
\end{center}
\caption{A comparison of the Evolution of the Normalized Mean Strength ($NS$) for the competitivity of the return (panel (a)) and for the competitivity of the volatility (panel (b)) along years 2004 (in black), 2008 (in red) and 2013 (in green) with $\Delta x=0.05$}
\label{fig:comparison}
\end{figure}

\subsection{Future work}
Usual notions of (Classic) Complex Network Analysis and of Multiplex Complex Network Analysis, such as modularity, clustering coefficient, centrality, etc., could be studied for the (multiplex) networks associated to series of rankings with ties, and extract conclusions about the rankings themselves. The introduction of Multiplex Nature in the analysis of rankings with ties, enables the use of multiplex-native measures and techniques, such as inter-layer correlations, that could be considered in the future.

Let us illustrate this with an example: Imagine a group of cyclists going up to the top of a hill. They can go up in an ordered way, with no interchange in their relative positions or they can go up in a turbulent way changing their positions so that the head of the group is occupied by different cyclists as long as they climb the hill. Now imagine that each cyclist is a stock value and the group is composed of 25 values that are going up or down during a year. Our measure of competitiveness $NS$ gives an idea of the turbulences in this process of going up or down. If $NS$ is high it means that the group moves in a disordered way, crossing their positions. When $NS$ is low it means that the movement is made with small changes in the relative positions. We think it could be interesting to analyse if there are subgroups of stocks (or {\it communities} in the language of Complex Network Analysis) that move together in these crossings. It would be useful to analyse the crossings of each individual stock with respect to the others and could help analysing the optimal composition of an investment portfolio and studies about investment funds.

\section*{Acknowledgments}
This work was partially supported by Spanish MICINN Funds and FEDER Funds MTM2010-16153 and MTM2010-18674, and Junta de Andaluc\'{\i}a Funds FQM-264.



\begin{thebibliography}{99}

 \bibitem{Abh95}
     \newblock A.\,Abhyankar, L.S.\,Copeland and W.\,Wong,
     \newblock Nonlinear Dynamics in Real-Time Equity Market Indices: Evidence from the United Kingdom,
     \newblock \emph{The Economic Journal}, \textbf{105(431)} (1995), 864--880.

\bibitem{Ai}
     \newblock N.\,Ailon, , M.\,Charikar, and A.\,Newman,
     \newblock Aggregating inconsistent information: Ranking and clustering,
     \newblock \emph{Journal of the ACM}, \textbf{55(5)} (2008), 23.

\bibitem{Bar}
     \newblock J.\,Bar-Ilan,
     \newblock Comparing rankings of search results on the web,
     \newblock \emph{Information Processing and Management}, \textbf{41} (2005), 1511--1519.

\bibitem{BeBrNi}
     \newblock N.\,Betzler, R.\,Bredereck and R.\, Niedermeier,
     \newblock Partial Kernelization for Rank Aggregation: Theory and Experiments,
     \newblock \emph{Lecture Notes in Computer Sciences} \textbf{6478} (2010), 26--37.

\bibitem{Bie}
     \newblock T.\,Biedl, F.J.\,Brandenburg and X.\,Deng,
     \newblock Crossings and Permutations,
     \newblock \emph{Lecture Notes in Computer Sciences} \textbf{3843} (2005), 1--12.

\bibitem{NT}
     \newblock S.\,Boccaletti, G.\,Bianconi, R.\,Criado, C.I.\,del Genio, J.\,G\'omez-Garde\~nes, M.\,Romance, I.\,Sendi\~na-Nadal,  Z.\,Wang and M.\,Zanin,
     \newblock The structure and dynamics of multilayer networks.
     \newblock \emph{to appear in Physics Reports}, (2014), 1--156.

\bibitem{Brea2006}
     \newblock R.A.\,Brealey, S.C.\,Myers and F.\,Allen,
     \newblock \emph{Principios de Finanzas Corporativas (in spanish)},
     \newblock 8$^{th}$ edition,  Mc Graw-Hill,  2006.

\bibitem{Brigh2002}
     \newblock E.F.\,Brigham and P.R.\,Daves,
     \newblock \emph{International Financial Management},
     \newblock South-Western, 2002.

\bibitem{Cart}
     \newblock B.\,Carterette,
     \newblock On rank correlation and the distance between rankings,
     \newblock \emph{Proc. of the 32nd Int. ACM Conf. Research and development in information retrieval} (2009), 436--443.

\bibitem{CoScSi}
     \newblock W.W.\,Cohen,  R.E.\,Schapire and Y.\,Singer,
     \newblock Learning to Order Things,
     \newblock \emph{Journal of Artificial Intelligence Research} \textbf{10} (1999), 243--270.

\bibitem{CoDaKa}
     \newblock V.\,Conitzer,  A.\,Davenport and Y.\,Heights,
     \newblock Improved Bounds for Computing Kemeny Rankings,
     \newblock \emph{Proceedings of The Twenty-First National Conference on Artificial Intelligence and the Eighteenth Innovative Applications of Artificial Intelligence Conference} (2006), 620--626.

\bibitem{CoKrSe}
     \newblock W.D.\,Cook, M.\,Kress and L.M.\,Seiford,
     \newblock An axiomatic approach to distance on partial orderings,
     \newblock \emph{RAIRO, Recherche Operationelle/Operations Research}, \textbf{20(2)} (1986), 115--122.

\bibitem{Co06}
     \newblock W.D.\,Cook,
     \newblock Distance-based and ad hoc consensus models in ordinal preference ranking,
     \newblock \emph{European Journal of Operational Research}, \textbf{172} (2006), 369--385.

\bibitem{Cri13}
     \newblock R.\,Criado, E.\,Garc\'{\i}a, F.\,Pedroche and M.\,Romance,
     \newblock A new method for comparing rankings through complex networks: Model and analysis of competitiveness of major European soccer leagues,
     \newblock \emph{Chaos: An Interdisciplinary Journal of Nonlinear Science}, \textbf{23} (2013), 043114.

\bibitem{Dw}
     \newblock C.\,Dwork, R.\,Kumary, M.\,Naorz and D.\,Sivakumarx,
     \newblock Rank Aggregation Methods for the Web,
     \newblock \emph{Proc. 10th International Conference on WorldWideWeb,  WWW'01} (2001), 613--622.

\bibitem{EmMa}
     \newblock E.J.\,Emond and D.W.\,Mason,
     \newblock A New Rank Correlation Coefficient with Application to the Consensus Ranking Problem,
     \newblock \emph{J.\,Multi-Crit. Decis.\,Anal.}, \textbf{11} (2002), 17--28.

\bibitem{Fa06}
     \newblock R.\,Fagin, R.\,Kumar, M.\,Mahdian, D.\,Sivakumar and E.\,Vee,
     \newblock Comparing Partial Rankings,
     \newblock \emph{SIAM J. Discrete Math.}, \textbf{20(3)} (2006), 628--648.

\bibitem{Fam93}
     \newblock E.\,Fama and K.\,French,
     \newblock Common Risk Factors in the Returns of Bonds and Stocks,
     \newblock \emph{Journal of Financial Economics}, \textbf{33} (1993), 3--56.

\bibitem{GoRoUr}
     \newblock M.\,Golumbic, D.\,Rotem and J.\,Urrutia,
     \newblock Comparability Graphs And Intersection Graphs,
     \newblock \emph{Discrete Mathematics}, \textbf{43} (1983), 37--46.

\bibitem{GoGa12}
     \newblock J.\,G\'omez-Garde\~nes, I.\,Reinares, A.\,Arenas,  and L.M.\,Floria,
     \newblock Evolution of cooperation in multiplex networks,
     \newblock \emph{Sci.\,Rep.} \textbf{2} (2012), 620.

\bibitem{Go13}
     \newblock S.\,Gomez, A.\,D\'{\i}az-Guilera, J.\,G\'omez-Garde\~nes, C.J.\,P\'erez-Vicente, Y.\,Moreno, and A.\,Arenas,
     \newblock Diffusion dynamics on multiplex networks,
     \newblock \emph{Physical Review Letters}, \textbf{110} (2013), 028701.

\bibitem{IlBe}
     \newblock I.F.\,Ilyas, G.\,Beskales and M.A.\,Soliman,
     \newblock  A Survey of Top-kQuery Processing Techniques in Relational Database Systems,
     \newblock  \emph{ACM Computing Surveys} \textbf{40(4)} (2008), 11.

\bibitem{invertia}
	 \newblock
	 \newblock \emph{Invertia}, The Economical Website of Telef\'onica (Spain),
	 \newblock Available from: {http://www.invertia.com/mercados/bolsa/indices/ibex-35/acciones-ib011ibex35}.
	
\bibitem{Kar}
     \newblock M.\,Karpinski and W.\,Schudy,
     \newblock Faster Algorithms for Feedback Arc Set Tournament, Kemeny Rank Aggregation and Betweenness Tournament,
     \newblock \emph{Lecture Notes in Computer Science} \textbf{6506} (2010), 3--14.

\bibitem{bolsamadrid}
	 \newblock
	 \newblock \emph{Madrid Stock Market}, Official Annual Reports about Madrid and Spanish Stock Markets (in spanish),
	 \newblock Available from: {http://www.bolsamadrid.es/esp/BMadrid/Protector/Protector3.aspx}.

\bibitem{KeSn}
     \newblock J.G.\,Kemeny and J.L.\,Snell,
     \newblock \emph{Mathematical Models in the Social Sciences},
     \newblock 2$^{nd}$ edition, The MIT Press, Cambridge, 1973.

\bibitem{Kendall38}
     \newblock M.\,G.\,Kendall,
     \newblock A New Measure of Rank Correlation,
     \newblock \emph{Biometrika}, \textbf{30(1-2)} (1938), 81--89.

\bibitem{KeBa39}
     \newblock M.\,G.\,Kendall and B.\,Babington\,Smith,
     \newblock The Problem of $m$ Rankings,
     \newblock \emph{Ann. Math. Statist.}, \textbf{10(3)} (1939), 275--287.

\bibitem{KuVa}
     \newblock R.\,Kumar and S.\,Vassilvitskii,
     \newblock Generalized distances between rankings,
     \newblock \emph{Proc. 19th International Conference on WorldWideWeb,  WWW'10} (2010), 571--580.

\bibitem{Kr}
     \newblock D.\,Kratsch, R.M.\,Mcconnell, K.\,Mehlhorn and J.P.\,Spinrad,
     \newblock Certifying Algorithms for recognizing Interval Graphs and Permutation Graphs,
     \newblock \emph{SIAM J. Comput.} \textbf{36(2)} (2006), 326--353.

\bibitem{LaMe2012}
     \newblock A.N.\,Langville and C.D.\,Meyer,
     \newblock \emph{Who's \#1?: The Science of Rating and Ranking},
     \newblock 1$^{st}$ edition,  Princeton University Press, New Jersey, 2012.

\bibitem{Peters91}
     \newblock E.E.\,Peters,
     \newblock \emph{Chaos and Order in the Capital Markets: A New View of Cycles, Prices, and Market Volatility},
     \newblock Wiley, Somerset, NJ, 1991.

\bibitem{PnLeEv}
     \newblock A.\,Pnueli, A.\,Lempel and S.\,Even,
     \newblock Transitive Orientation of Graphs and Identification of Permutation Graphs,
     \newblock \emph{Canadian Journal of Mathematics}, \textbf{23(1)} (1971), 160--175.

\bibitem{savit}
     \newblock R.\,Savit,
     \newblock When Random is Not Random: An Introduction to Chaos in Market Prices
     \newblock \emph{The Journal of Futures Markets}, \textbf{8(3)} (1988), 271--290.

\bibitem{Sig}
     \newblock C.S.\,Signorino and J.M.\,Ritter,
     \newblock Tau-b or not tau-b: measuring the similarity of foreign policy positions,
     \newblock \emph{International Studies Quarterly} \textbf{43} (1999), 115--144.

\bibitem{StKoPi}
     \newblock K.\,Stefanidis, G.\,Koutrika and E.\,Pitoura,
     \newblock A survey on representation, composition and application of preferences in database systems,
     \newblock \emph{ACM Transactions on Database Systems} \textbf{36(3)} (2011), 19.

\bibitem{SzLaTh}
     \newblock M.\,Szell, R.\,Lambiotte, and  S.\,Thurner,
     \newblock Multirelational organization of large-scale social networks in an online world,
     \newblock \emph{Proceedings of the National Academy of Sciences}, \textbf{107(31)} (2010), 13636--13641.

\end{thebibliography}
\end{document}